\newtheorem{theorem}{Theorem}
\newtheorem{lemma}[theorem]{Lemma}
\newtheorem{proposition}[theorem]{Proposition}
\newtheorem{corollary}[theorem]{Corollary}
\theoremstyle{definition}
\newtheorem{definition}[theorem]{Definition}
\theoremstyle{remark}
\newtheorem{remark}[theorem]{Remark}
\newcommand{\Na}{N_{\alpha}}
\newcommand{\Nb}{N_{\beta}}
\newcommand{\pa}{p_{\alpha}}
\newcommand{\pb}{p_{\beta}}
\newcommand{\Nmax}{N_{\textrm{max}}}
\newcommand{\Iset}{\mathcal{I}}
\newcommand{\ia}{\iota_{\alpha}}
\newcommand{\ib}{\iota_{\beta}}
\newcommand{\psiab}{\psi_{\alpha\beta}}
\newcommand{\psiba}{\psi_{\beta\alpha}}
\newcommand{\ima}{\iota_{\textrm{max}}}
\newcommand{\rec}{R_{\varepsilon_{0},\varepsilon_{1}}}
\newcommand{\ga}{\gamma_{1}}
\newcommand{\gb}{\gamma_{2}}
\newcommand{\qb}{q_{\beta}}
\renewcommand{\r}{\rho}
\newcommand{\Jset}{\mathcal{J}}
\newcommand{\R}{\mathbb{R}}
\newcommand{\Nupp}{N_{\textrm{upp}}}
\newcommand{\Mextone}{M_{\textrm{ext},1}}
\newcommand{\Mexttwo}{M_{\textrm{ext},2}}
\newcommand{\iupp}{\iota_{\textrm{upp}}}
\newcommand{\pr}{\partial}
\newcommand{\mext}{M_{\textrm{ext}}}
\newcommand{\gext}{g_{\textrm{ext}}}
\date{\today}
\title{Uniqueness of maximal spacetime boundaries}
\author{Melanie Graf\thanks{Department of Mathematics, Universit\"at Hamburg,  20146 Hamburg, Germany; \url{melanie.graf@uni-hamburg.de} \\
Significant parts of this work was completed while at the Department of Mathematics, Universit\"at T\"ubingen, 72076 T\"{u}bingen, Germany and at the Institute for Mathematics, Universit\"at Potsdam, 14476 Potsdam, Germany.} \;
and Marco van den Beld-Serrano\thanks{Department of Mathematics, Universit\"at Regensburg, 93053 Regensburg, Gemany; \url{marco.van-den-beld-serrano@ur.de}. Significant parts of this work were completed while at the Department of Mathematics, Universit\"at T\"ubingen, 72076 T\"{u}bingen, Germany.}}
\begin{document}

\date{}
\maketitle

\vspace{-.3in}

\begin{abstract}

Given an extendible spacetime one may ask how much, if any, uniqueness can in general be expected of the extension. 
Locally, this question was considered and comprehensively answered in a recent paper of Sbierski \cite{Sbierski2022}, where he obtains local uniqueness results for anchored spacetime extensions of similar character to earlier work for conformal boundaries by Chru\'sciel \cite{Chrusciel}.
Globally, it is known that non-uniqueness can arise from timelike geodesics behaving pathologically in the sense that there exist points along two distinct timelike geodesics which become arbitrarily close to each other interspersed with points which do not approach each other. 
We show that this is in some sense the only obstruction to uniqueness of maximal future boundaries: Working with extensions that are manifolds with boundary we prove that, under suitable assumptions on the 
regularity of the considered extensions and excluding the existence of such ''intertwined timelike geodesics'', 
extendible spacetimes admit a unique maximal future boundary extension. This is analogous to results of Chru\'sciel for the conformal boundary. 

\medskip

\noindent
\emph{MSC2020:} 53C50, 83C99, 53B30
\end{abstract}

\setcounter{tocdepth}{1}
\tableofcontents

\section{Introduction}

Questions of (low-regularity) spacetime (in-)extendibility have a long history within mathematical general relativity and are closely related to several important physical problems such as the nature of the incompleteness predicted from the singularity theorems and strong cosmic censorship. The former has lead people to consider various ways of defining a boundary of spacetime (and attaching such boundaries to spacetime). As we will see, some of these old constructions are now providing useful inspirations, tools and reality checks in investigating uniqueness questions. The latter has of course been crucial motivation in studying low-regularity (in-)extendibility theory from the beginning in the hopes that the usually very general results developed in this field might provide useful additions to more PDE based approaches. 

In this general framework the usual procedure for determining whether a concrete spacetime or concrete class of spacetimes is extendible admits an extension $\iota:(M,g)\to (\mext,\gext)$ (with $(\mext,\gext)$ being a spacetime and $\iota$ an isometric embedding) or not is to follow one of two paths: either an explicit extension of the spacetime is found/constructed or it is shown that the spacetime satisfies some criteria that are known to be general obstructions to extendibility within a certain class of extensions.  For instance, blow up of any curvature scalar (e.g., the scalar curvature or the Kretschmann scalar) is an immediate obstruction to $C^{1,1}$-extendibility, that is there cannot exist a proper extension with $\gext\in C^{1,1}$.  
However, different strategies are required in order to explore the inextendibility of a spacetime in a lower regularity class (e.g $C^{0}$- or $C^{0,1}$-regularity). Here a lot of new tools and techniques have been developed in the last six years, leading to several nice results. For example, the question of $C^{0}$-inextendibility was first
tackled by Sbierski \cite{Sbierski2016}, who proved that the Minkowski and the maximally extended Schwarzschild spacetime are $C^{0}$-inextendible. We now have a collection of low regularity inextendibility criteria foremost amongst them timelike geodesic completeness: In the first place, in \cite{Sbierski2016} it was proven that if no timelike curve intersects the boundary of $M$ in the extension, $\pr \iota(M)$, then the spacetime is inextendible. This result already pointed to the idea that, under certain additional assumptions, timelike (geodesic) completeness would yield the inextendibility of a spacetime (in a low regularity class). Indeed, in \cite{LingGalSbierski} it was proven that a smooth globally hyperbolic and timelike geodesically complete spacetime is $C^0$-inextendible.\footnote{This result was later refined in several works: in \cite{Graf}, it was shown that if the global hyperbolicity condition is dropped the spacetime is at least $C^{0,1}$-inextendible. In a follow-up by Minguzzi and Suhr \cite{Minguzzi2019} it was shown that the global hyperbolicity condition can be dropped entirely and any smooth timelike geodesically complete spacetime must be $C^{0}$-inextendible and that a similar result holds in the Lorentz-Finsler setting. Finally in \cite{GrantKunzingerSaemann} an inextendibility result for timelike complete Lorentzian length spaces is established.} More importantly for us, \cite{GallowayLing} also showed that if the past boundary, $\pr^-\iota(M)$, is empty, then the future boundary, $\pr^+\iota(M)$, has to be an achronal topological hypersurface. This is a bit more generally applicable as often the behaviour to the past (or future) is better understood and there are several spacetimes, especially when looking towards cosmological models, that are future or past timelike geodesically complete but not both. Together with a structure result on the existence of certain nice coordinates around any boundary point by Sbierski (cf. Proposition \ref{Sbierskifuturechart}, this leads one to suspect that if $M$ is extendible but the past boundary is empty, $\iota(M)\cup \pr^+\iota(M)$ should be a topological manifold with boundary and, as we will discuss in Section \ref{sec:compatibility} indeed this is the case).

Surprisingly, in case $(M,g)$ is an arbitrary extendible spacetime, the general (i.e., without imposing additional symmetry, field equations or any strong regularity) question of uniqueness of extensions appears to have only recently come up, despite it being a very natural one.

Sbierski \cite{Sbierski2022} proved the local uniqueness of  $C^{0,1}_{\textrm{loc}}$-extensions up to (and including) the boundary in the following sense: Let $(M,g)$ be a globally hyperbolic spacetime and consider two $C^{0,1}_{\textrm{loc}}$-extensions $\iota_{1}$ and $\iota_{2}$ satisfying that there exists a future directed timelike curve $\gamma : [0,1) \rightarrow M$ (also called the \emph{anchoring curve}) such that $\iota_{1}\circ \gamma$ has a limit point $p_{1}\in\pr\iota_{1}(M)$ and $\iota_{2}\circ \gamma$ a limit point $p_{2}\in\pr\iota_{2}(M)$ as $t\to1$. Then, there exist suitable open subsets $U_1$ of $\iota_1(M)$ and $U_2$ of $\iota_2(M)$ containing $\iota_1\circ\gamma$ and $\iota_2 \circ \gamma $ such that the restriction of the identification map $\textrm{id}\coloneqq \iota_{1}\circ\iota_{2}^{-1}$  to these subsets extends to a $C^{1,1}_{\textrm{loc}}$-isometric diffeomorphism $\textrm{id} : U_{1}\cup (\pr\iota_{1}(M) \cap \pr U_1 ) \rightarrow U_{2}\cup (\pr\iota_{2}(M) \cap \pr U_2 ) $.
Hence, this implies local uniqueness of $C^{0,1}_{\textrm{loc}}$ extensions that 'extend through the same region'. These statements are nicely analogous to earlier local uniqueness results for conformal boundaries by Chru\'sciel \cite{Chrusciel}, albeit the details of the proofs clearly differ due to the different setting and the lower regularities Sbierski considers. 
 Sbierski also provides explicit examples that this local uniqueness fails if one allows extensions which are no longer $C^{0,1}_{\textrm{loc}}$. Once one has local uniqueness, the next natural question is if there is a sensible notion of 'maximal extension' and whether such maximal extensions may be globally unique in some sense.

In this paper we aim to answer these questions. However our setup is (out of necessity for our methods but also because of general considerations, cf.\ the discussion in Remark \ref{rem:whyboundary}) 
 a bit different from the classical spacetime extensions as we really focus on the boundary and on future directed timelike geodesics.
 This leads us to consider a different type of extensions of $M$ having the following properties:
\begin{enumerate}[label=(\roman*)]
    \item First, we consider a class of extensions in which the 'extended' manifold is a topological manifold with boundary. 
    \item Secondly, the 'extended' manifolds we work with can be seen as the result of 'attaching' to the original spacetime $M$ the limit points of inextendible incomplete (in $M$) timelike geodesics. That is, every point in the boundary should be the endpoint of a future directed timelike geodesic. 
 Further, we need to keep tight control on the topology of the extension at the boundary points. This is achieved by demanding that the manifold topology  of the extension can be reconstructed in a very precise way from the timelike geodesics of the original spacetime. 
 This description of a topology via so-called 'timelike thickenings' (see Definition \ref{def:thickenings})  is reminiscent of the old g-boundary construction by Geroch (see \cite{Gerochgboundary}) and further motivated by an analogous use of 'null thickenings' in Chru\'sciel's \cite{Chrusciel} work on maximal conformal boundaries.

    \item Third, sets of the form $\iota(M)\cup \pr^+\iota(M)$ should furnish examples of these new ''future boundary extensions'' -- at least for well behaved spacetime extensions $(\mext,\gext)$. We show that this is indeed the case if $(M,g)$ is globally hyperbolic, the past boundary of $(\mext,\gext)$ is empty and $\gext$ is $C^2$ in Section \ref{sec:compatibility}. In particular, whether $\iota(M)\cup \pr^+\iota(M)$ satisfies point two appears to be closely tied to the regularity of $\gext$: It should still work for $g\in C^{1,1}$, but becomes quite doubtful below that threshold. One may thus interpret (ii) as a regularity condition.
\end{enumerate}
We call these types of extensions \emph{regular future g-boundary extensions} and refer to Definition \ref{def:regular-g-boundary-ext} for the exact definitions. We will further motivate this definition in Section \ref{sec:maximalboundary}.  Our main goal will be to construct a \emph{unique maximal} regular future g-boundary extension (provided any such extension exists in the first place), where uniqueness is in the sense of the equivalence in Definition \ref{def:equiv of Ns}, i.e., the composition of the associated embeddings extends to a homeomorphism of topological manifolds with boundary. Note that our regular future g-boundary extensions do not come with a concept of extension of the metric to the boundary, so at this point our uniqueness really is topological in nature and we in particular don't claim anything about uniqueness of the metric on the boundary. This also means that we cannot use the metric at the boundary for our proofs, contrary to our main inspirations of \cite{Chrusciel, Sbierski2022}. However, in case there were a way of extending the metric to the boundary one might be able to combine our result with techniques from Sbierski's local results to obtain uniqueness of the metric on the boundary as well, but this would have to be explored in some future work. 
Another avenue for further exploration is that, except for the compatibility results in Section \ref{sec:compatibility}, we at this point do not investigate under which criteria given spacetimes possess a regular future g-boundary extension. This question would lead back to the general question of spacetime boundary constructions based on attaching endpoints to incomplete geodesics which generally are rather ill behaved topologically even when excluding the obvious potential offender of 'intertwined' timelike geodesics, that is roughly geodesics which never separate nor remain arbitrarily close as their affine parameter approaches the limit of their interval of existence (Definition \ref{defintertwined}), 
as an old example in \cite{Geroch2} shows.

\paragraph{Outline of the paper} We start by motivating and giving the definition for a regular future g-boundary extension in Section \ref{sec:maximalboundary} and discussing its relation with the usual concept of spacetime extensions in Section \ref{sec:compatibility}. Our procedure to construct a unique maximal regular future g-boundary extension, assuming that at least one regular future g-boundary extension exists and that the original spacetime $(M,g)$ does not contain any intertwined timelike geodesics, is as then follows: First (Section \ref{sec:setmax}), we define an ordering relation via embeddings and then essentially 'glue' together an ordered collection of regular future g-boundary extensions by taking the disjoint union and then identifying all points which are related by the ordering. This makes it straightforward to verify that the resulting object is still a regular future g-boundary extension. Since here the family we are gluing is assumed to be ordered, we can still allow $(M,g)$ to have intertwined timelike geodesics in principle (but the ordering via embeddings implicitly guarantees that these intertwined geodesics would not acquire endpoints in the considered family). This gives us maximal extensions in a set-theoretic sense by a standard Zorn's Lemma type argument, inspired by Choquet-Bruhat and Geroch's \cite{ChoquetGeroch} proof of the uniqueness of the maximal Cauchy development (see also Ringström's \cite{Ringstrom} detailed presentation of this proof), cf.~Corollary \ref{cor:existence_of_set_theoretic_max}.

\begin{theorem}\label{thm:main1_intro}
  Let $(M,g)$ be a $C^{2}$ spacetime and $\mathcal{I}$ a partially ordered set of equivalence classes of regular future g-boundary  extensions (for the definitions of the equivalence and the ordering relation see Definition \ref{def:equiv of Ns} resp.\ Definition \ref{def:order on Ns}). Then there exists a maximal element for $\mathcal{I}$, i.e.\ there exists $[N]_{\mathrm{max}}\in \Iset$ which satisfies that if $ [N]_{\mathrm{max}}\leq [N] $ for any $[N]\in \Iset$ one must already have equality $[N]_{\mathrm{max}}=[N]$
\end{theorem}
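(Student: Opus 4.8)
The plan is to obtain the maximal element from Zorn's Lemma, so the crux is to show that every chain in $\Iset$ admits an upper bound lying in $\Iset$ (the empty chain being covered by the standing assumption that $\Iset$ is nonempty). Before that, one should confirm that $\leq$ genuinely descends to a partial order on equivalence classes: reflexivity and transitivity come from identity embeddings and composition of embeddings, while antisymmetry on classes is exactly the statement that mutual embeddings force the equivalence of Definition \ref{def:equiv of Ns}, so that $[N]\leq[N']$ and $[N']\leq[N]$ imply $[N]=[N']$. Since $\Iset$ is by hypothesis a genuine set, no set-theoretic subtleties obstruct the application of Zorn's Lemma once the chain-bound claim is established.

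So let $\Jset \subseteq \Iset$ be a chain. First I would choose, for each class in $\Jset$, a representative extension $\Na$, and for each related pair $[\Na] \leq [\Nb]$ an embedding $\psiab \colon \Na \to \Nb$ realizing the order. Because the classes in $\Jset$ are totally ordered, these embeddings assemble into a directed system; one has to check they can be arranged coherently, i.e.\ that $\psi_{\beta\gamma} \circ \psiab$ agrees with $\psi_{\alpha\gamma}$ up to the ambient equivalence. This should follow from the fact that the equivalences of Definition \ref{def:equiv of Ns} are homeomorphisms restricting to the identity on the common interior image of $M$, so that there is essentially no freedom away from the boundary. The candidate upper bound is then the colimit (direct limit) $\Nmax$ of this directed system, constructed concretely as in Section \ref{sec:setmax} by taking the disjoint union $\bigsqcup_{\alpha} \Na$ and identifying $x$ with $\psiab(x)$ for all admissible $\alpha,\beta$.

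The main work — and the step I expect to be the main obstacle — is to verify that $\Nmax$ is again a regular future g-boundary extension in the sense of Definition \ref{def:regular-g-boundary-ext}, rather than a mere set-theoretic colimit. Locally near any of its points $\Nmax$ inherits the structure of a topological manifold with boundary from whichever $\Na$ that point descends from, and the shared interior together with the compatibility of the embeddings furnishes the required embedding of $M$. The delicate points are the global topological conditions. Hausdorffness is precisely where intertwined timelike geodesics would normally cause trouble, but here the hypothesis that $\Jset$ is ordered \emph{via embeddings} means that no two boundary points of distinct $\Na$ are glued across such a pathology, so the colimit stays Hausdorff; second countability must likewise be argued despite $\Jset$ possibly being uncountable, which I expect to follow from the boundary points being controlled by the endpoints of timelike geodesics emanating from the second-countable $M$. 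Finally the defining regularity condition — that the topology of $\Nmax$ be reconstructible from the timelike thickenings of Definition \ref{def:thickenings} — has to be checked to pass to the colimit; here the key observation is that each boundary point of $\Nmax$ already carries, within some $\Na$, the endpoint of a future directed timelike geodesic together with its thickening structure, and these local descriptions are compatible under the gluing.

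With $\Nmax$ shown to be a regular future g-boundary extension, the canonical maps $\Na \hookrightarrow \Nmax$ into the colimit are embeddings of the required type, so $[\Na] \leq [\Nmax]$ for every class in $\Jset$, exhibiting $[\Nmax]$ as the sought upper bound in $\Iset$. Zorn's Lemma then produces a maximal element $[N]_{\mathrm{max}} \in \Iset$: any $[N]\in\Iset$ with $[N]_{\mathrm{max}}\leq[N]$ must, by maximality, satisfy $[N]_{\mathrm{max}}=[N]$, which is exactly the claimed conclusion.
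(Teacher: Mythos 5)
Your proposal is correct and follows essentially the same route as the paper: Zorn's Lemma applied after constructing, for each chain, an upper bound as the quotient of the disjoint union of representatives under identification via the (uniquely determined, hence coherent) order embeddings, and then verifying that this quotient is again a regular future g-boundary extension --- openness of the quotient map, Hausdorffness via the total ordering, second countability from the timelike-thickening neighborhood bases, and charts pushed forward from the individual $\Na$. This is precisely the construction and sequence of verifications carried out in Section \ref{sec:setmax} of the paper.
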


We would like to point out at this point that the extra conditions required on the topology in the definition of a {\em regular future g-}boundary extension (beyond being a topological manifold with boundary for which the interior is homeomorphic to $M$) are necessary in our proof. The rough reason for this is that these conditions fix a preferred topology on the extension based on timelike thickenings (Definition \ref{def:thickenings}) and provide a (very useful) neighborhood basis for points on the boundary. This allows us to control the topology as we pass to the quotient. 

To obtain uniqueness in Section \ref{subsec:actualmax} an extra obstruction has to be taken into account: the (possible) existence of intertwined timelike geodesics, can lead to the existence of inequivalent maximal extensions. This problem is already known from the study of the Taub-NUT spacetime (or the simpler example of Misner \cite{Misner}), which has two inequivalent maximal conformal boundary extensions (see e.g~\cite{Chrusciel}, Section 5.7 for a discussion). This leads us to the main result (cf. Theorem \ref{theo:uniquemaximalextension}) of our paper:
\begin{theorem}\label{thm:main2_intro}
    Let $(M,g)$ be a strongly causal $C^2$ spacetime. If $(M,g)$ is regular future g-boundary extendible and does not contain any intertwined future directed timelike geodesics, then there exists a unique maximal regular future g-boundary extension in the sense of Definition \ref{def:Maximal_regular_extension}. 
\end{theorem}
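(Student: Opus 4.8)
The plan is to split the statement into two parts: the \emph{existence} of a maximal regular future g-boundary extension and its \emph{uniqueness} up to the equivalence of Definition \ref{def:equiv of Ns}. For existence I would simply invoke the set-theoretic maximality result already established. Concretely, let $\Iset$ denote the collection of all equivalence classes of regular future g-boundary extensions of $(M,g)$, partially ordered by the relation of Definition \ref{def:order on Ns}. By the regular future g-boundary extendibility hypothesis this collection is nonempty. The only subtlety is that one must first check that $\Iset$ is a genuine set rather than a proper class; this is handled exactly as in the Choquet--Bruhat--Geroch argument underlying Theorem \ref{thm:main1_intro}, by noting that every such extension is a second-countable topological manifold with boundary whose interior is $M$, so that (up to homeomorphism) all of them can be realized as quotients of one fixed set of bounded cardinality. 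Granting this, Theorem \ref{thm:main1_intro} (equivalently Corollary \ref{cor:existence_of_set_theoretic_max}) yields a maximal element $[N]_{\mathrm{max}}\in\Iset$.

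The heart of the matter is upgrading this set-theoretic maximal element to a genuine \emph{maximum}, i.e.\ showing that all maximal elements are equivalent. I would do this by proving that $\Iset$ is \emph{directed}: any two equivalence classes $[N_{1}],[N_{2}]\in\Iset$ admit a common upper bound $[N]$. Once directedness is available the conclusion is purely order-theoretic: if $[N]_{\mathrm{max}}$ is maximal and $[N']$ is any other element, a common upper bound $[N]\geq [N]_{\mathrm{max}}$ with $[N]\geq [N']$ forces $[N]_{\mathrm{max}}=[N]\geq [N']$ by maximality, so $[N]_{\mathrm{max}}$ is in fact a greatest element, and greatest elements are unique. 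Applying this to two maximal extensions gives their equivalence.

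It thus remains to construct, from two regular future g-boundary extensions $\iota_{1}\colon M\to N_{1}$ and $\iota_{2}\colon M\to N_{2}$, a common upper bound. The natural candidate is the quotient $N\coloneqq (N_{1}\sqcup N_{2})/\!\sim$, where $\sim$ identifies the two copies of the interior $M$ via the identity $\iota_{2}\circ\iota_{1}^{-1}$ and identifies a boundary point $p_{1}\in\partial N_{1}$ with a boundary point $p_{2}\in\partial N_{2}$ precisely when some future directed timelike geodesic of $M$ limits to $p_{1}$ in $N_{1}$ and to $p_{2}$ in $N_{2}$. Here the defining property of these extensions (that every boundary point is the endpoint of such a geodesic, cf.\ point (ii) of the introduction) guarantees that enough geodesics are available, while the local uniqueness result of Sbierski, applied along such a geodesic as anchoring curve, shows that this boundary identification is realized by a homeomorphism between suitable neighborhoods of $p_{1}$ and $p_{2}$. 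The \emph{absence of intertwined future directed timelike geodesics} (Definition \ref{defintertwined}) is what makes the relation $\sim$ well defined: if two geodesics limited to the same point $p_{1}\in\partial N_{1}$ but to different points of $N_{2}$, they would necessarily approach each other near $p_{1}$ yet fail to stay close in $N_{2}$, i.e.\ they would be intertwined, contradicting the hypothesis (strong causality enters here, and in the local uniqueness input, to control the behaviour of these geodesics). With $\sim$ well defined, the canonical maps $N_{1}\hookrightarrow N$ and $N_{2}\hookrightarrow N$ give $[N_{1}]\leq[N]$ and $[N_{2}]\leq[N]$.

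The main obstacle I anticipate is verifying that this quotient $N$ is again a \emph{bona fide} regular future g-boundary extension, and not merely a topological space. Two things must be checked: first, that $N$ is a Hausdorff topological manifold with boundary, and second, that its topology is the one prescribed by the timelike thickenings of Definition \ref{def:thickenings}. Hausdorffness is exactly the point where the no-intertwined hypothesis does its essential work: intertwined geodesics are precisely what would produce inseparable ``doubled'' boundary points in the quotient, the phenomenon familiar from the non-equivalent maximal conformal boundaries of Misner and Taub--NUT space. Ruling them out, one can use the thickening neighbourhood bases of $N_{1}$ and $N_{2}$ (together with the local homeomorphisms provided by Sbierski's theorem) to produce, around each identified boundary point, a single well-defined thickening chart in $N$, thereby simultaneously establishing the manifold-with-boundary structure and the required reconstruction of the topology from the timelike geodesics of $M$. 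Once $N$ is known to be a regular future g-boundary extension the argument closes as described above.
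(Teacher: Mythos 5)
Your overall architecture matches the paper's: a set-theoretic maximal element from Corollary \ref{cor:existence_of_set_theoretic_max}, plus a pairwise gluing $(N_1\sqcup N_2)/\!\sim$ with exactly the equivalence relation of Definition \ref{def:simarbitrary}, plus the order-theoretic observation that this forces any maximal element to be a greatest element. However, there is a genuine gap in the step you delegate to Sbierski: his local uniqueness theorem applies to $C^{0,1}_{\mathrm{loc}}$ \emph{spacetime} extensions of a \emph{globally hyperbolic} $(M,g)$, i.e.\ it needs a Lipschitz Lorentzian metric defined on a full neighborhood of the boundary point in the extension. A regular future g-boundary extension is a bare topological manifold with boundary carrying no metric outside $\iota(M)$, and here $(M,g)$ is only assumed strongly causal; the paper explicitly notes that for this reason the metric at the boundary, and hence Sbierski's result, cannot be used. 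So the ``homeomorphism between suitable neighborhoods of $p_1$ and $p_2$'' that you invoke both to justify the boundary identification and to build charts on the quotient is not available as stated, and in fact such local homeomorphisms are only obtained \emph{a posteriori} in the paper. (Incidentally, well-definedness of $\sim$ does not actually need the no-intertwining hypothesis at all: Lemma \ref{lem:one_geodesic_lim_eq_to_all_geodesic_lim_eq} derives it purely from the neighborhood-basis property of the timelike thickenings, since the same sets $O^M_{T_n,r_m}\subset M$ index neighborhood bases of the candidate limit points in both $N_1$ and $N_2$.)

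The step your proposal is missing, and where the real work and the hypotheses of the theorem enter, is the proof that the quotient map $\tilde{\pi}\colon N_1\sqcup N_2\to N$ is \emph{open} (Lemma \ref{lem:pitilde_open}). Without this, the canonical maps $N_i\hookrightarrow N$ are merely continuous injections rather than embeddings, so you do not yet have $[N_i]\leq[N]$, and you cannot push the charts of $N_i$ down to $N$. The paper proves openness by showing that if $U\subset N_1$ is open and $q_0\in O^{N_2}_{T_{n_0},r_{m_0}}$ is a boundary point of $N_2$ reached by a geodesic $\gamma_{Y_0}$ whose $N_1$-image eventually lies in $\overline{O^{N_1}_{T_{n_0},r_{m_0}}}\subset U$, then $\iota_1\circ\gamma_{Y_0}$ actually \emph{converges} in $N_1$; this uses relative compactness of the thickening, the no-intertwining hypothesis (comparing $\gamma_{Y_0}$ with a geodesic $c$ reaching a subsequential limit), and strong causality via a non-imprisonment argument (Lemma \ref{lem:propertiesthickenings}). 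This is precisely the role you assigned to Sbierski's theorem, and it needs to be argued by these purely topological/causal means instead. Your Hausdorffness discussion, by contrast, correctly identifies where the no-intertwining hypothesis is essential and is in line with Lemma \ref{lem:NtildeHausdorff}.
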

The proof here is rather similar to the above: We do an analogous 'take the disjoint union and then identify' quotient construction for two arbitrary regular future g-boundary extensions but now use that we excluded intertwined timelike geodesics (instead of the ordering) to show that the quotient space is again a regular future g-boundary extension. This then implies that any two set-theoretic maximal elements have to coincide.

Finally we note that while our proofs are based on Zorn's Lemma,
our second main Theorem, Theorem \ref{thm:main2_intro}, can also be obtained more constructively without invoking Zorn's Lemma, cf. the discussion in Remark \ref{rem:dezorn}.

\paragraph*{Acknowledgements}  This article originally started from work on MvdBS' Masters thesis written at the University of Tübingen. We would like to thank Carla Cederbaum for her support and bringing this collaboration together. We would further like to thank Eric Ling for bringing some of these problems to our attention and stimulating discussions. 
MG acknowledges the support of the German Research Foundation through the SPP2026 ''Geometry at Infinity'' and the excellence cluster EXC 2121 ''Quantum Universe'' -- 390833306. MvdBS thanks Carla Cederbaum for her financial support during the development of this research project, the Studienstiftung des deutschen Volkes for granting him a scholarship during his Master studies and Felix Finster for his support.

\section{Future boundary extensions}\label{sec:maximalboundary}

\par In the first place, we consider a $C^{k}$ spacetime as a connected time-oriented Lorentzian manifold $(M,g)$ without boundary with a $C^{k}$-regular metric $g$. Furthermore, timelike curves are smooth curves whose tangent vector is timelike everywhere. Note that, comparing with our main sources, this convention coincides with the one in \cite{Sbierski2022}, but differs from the one in \cite{GallowayLing}, where they use piecewise smooth timelike curves. However this does not make a difference for the resulting timelike relations.  The following basic concepts play an important role in our study.

\begin{definition}[$C^{l}$ spacetime extension]\label{def:classicalextension}
    Fix $k\geq 0$ and let $0\leq l\leq k$. Let $(M,g)$ be a $C^{k}$ spacetime with dimension $d$. A $C^{l}$\emph{ spacetime extension} of $(M,g)$ is a proper isometric embedding $\iota$ 
\begin{equation*}
        \iota\;:\; (M,g)\hookrightarrow (M_{\textrm{ext}},g_{\textrm{ext}})
    \end{equation*}
where $(M_{\textrm{ext}},g_{\textrm{ext}})$ is $C^{l}$ spacetime of dimension $d$. If such an embedding exists, then $(M, g)$ is said to be $C^{l}$ extendible. The topological boundary of $M$ within $M_{\textrm{ext}}$ is $\partial \iota(M)\subset M_{\textrm{ext}}$. By a slight abuse of notation we will sometimes also call $(M_{\textrm{ext}},g_{\textrm{ext}}$) the extension of $(M,g)$, dropping the embedding $\iota$.
\end{definition}

\begin{definition}[Future and past boundaries]
    We define the \emph{future boundary} $\partial^{+} \iota (M)$ and \emph{past boundary} $\partial^{-} \iota (M)$:
\begin{equation*}
        \partial^{+}\iota(M)\coloneqq \{p\in \partial \iota(M)\;:\: \exists\; \textrm{f.d.t.l. curve }\; \gamma : [0, 1] \rightarrow M_{\textrm{ext}} \;\textrm{with}\; \gamma(1)=p,\; \gamma([0,1))\subset \iota(M) \}
\end{equation*}
\begin{equation*}
        \partial^{-}\iota(M)\coloneqq \{p\in \partial \iota(M)\;:\: \exists\; \textrm{f.d.t.l. curve}\; \gamma : [0, 1] \rightarrow M_{\textrm{ext}} \;\textrm{with}\; \gamma(0)=p,\; \gamma((0,1])\subset \iota(M) \}
\end{equation*}
where ``f.d.t.l. curve" stands for future directed timelike curve. 
\end{definition}

Note that it does in general not hold that $\pr\iota(M)=\pr^+\iota(M)\cup \pr^-\iota(M)$ but only that $\pr^+\iota(M)\cup \pr^-\iota(M)\neq \emptyset$ (cf.~\cite{Sbierski2018}). One of the advantages of working with $\pr^+\iota(M)$ and $\pr^-\iota(M)$ is that, as we mentioned in the introduction, if one of them is empty, the other becomes particularly nice.
\begin{theorem}[Theorem 2.6 in \cite{GallowayLing}]\label{theo:pastboundary}
    Let $\iota$: $(M,g)\rightarrow (M_{\textrm{ext}},g_{\textrm{ext}})$ be a $C^{0}$-extension. If $\partial^{+}\iota(M)=\emptyset$, then $\partial^{-}\iota(M)$ is an achronal topological hypersurface.
\end{theorem}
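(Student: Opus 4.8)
The plan is to show that the emptiness of the future boundary forces $\iota(M)$ to be a \emph{future set} in $(\mext,\gext)$, and then to invoke the classical fact that the topological boundary of a future set is an achronal topological hypersurface, being careful to run every causal-theoretic step at the level of a merely continuous metric. Since $\gext$ is only $C^0$, I would work throughout in \emph{cylindrical neighborhoods}: around a point $p\in\pr\iota(M)$ I choose coordinates $\phi\colon U\to(-\varepsilon,\varepsilon)\times B_{\delta}\subset\R\times\R^{d-1}$ with $\phi(p)=0$ in which $\gext$ is squeezed between two Minkowski metrics $\eta_{<}$ and $\eta_{>}$ whose light cones are slightly narrower, respectively wider, than that of $\gext$, and for which $\pr_t$ is timelike. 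This is possible by continuity of $\gext$ and is the standard substitute for smooth causality theory; I also use that the chronological futures and pasts $I^{\pm}$ are open, which remains valid at $C^0$.

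The key step, and the only place where the hypothesis enters, is the claim that $I^{+}(\iota(M))\subseteq\iota(M)$. Suppose not: pick $m\in\iota(M)$ and $q\in I^{+}(m)$ with $q\notin\iota(M)$, and let $\beta$ be a future directed timelike curve from $m$ to $q$. Since $\iota(M)$ is open and $q\notin\iota(M)$, the set of parameters for which $\beta$ lies in $\iota(M)$ is relatively open and contains the initial parameter, so there is a first parameter at which $\beta$ leaves $\iota(M)$; at that parameter $\beta$ meets a point $q'\in\pr\iota(M)$ with the preceding portion of $\beta$ contained in $\iota(M)$. By definition this means $q'\in\pr^{+}\iota(M)$, contradicting $\pr^{+}\iota(M)=\emptyset$. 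Hence $\iota(M)$ is a future set and its complement a past set.

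With $\iota(M)$ a future set, the achronality and hypersurface structure of $\pr\iota(M)$ follow by the usual achronal-boundary argument, which I would carry out inside the comparison cones. For achronality, if $p\ll q$ with $p,q\in\pr\iota(M)$, then since $p\in\overline{\iota(M)}$ the open set $I^{-}(q)$ meets $\iota(M)$ in some point $m$; thus $m\ll q$, so $q\in I^{+}(m)\subseteq\iota(M)$, and as $I^{+}(m)$ is open, $q$ lies in the interior of $\iota(M)$, contradicting $q\in\pr\iota(M)$. For the local graph structure, the narrow cone $\eta_{<}$ gives, for a suitable $c>0$ determined by $\eta_{<}$, that $I^{+}(p)\supseteq\{t>c|x|\}$ and $I^{-}(p)\supseteq\{t<-c|x|\}$ inside the cylinder; choosing $\delta$ small compared with $\varepsilon$, the top of every vertical fibre then lies in $\iota(M)$ and the bottom in its complement, while achronality together with the fact that the fibres are timelike shows that each fibre meets $\pr\iota(M)$ at most once. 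An intermediate value argument yields exactly one crossing $t=h(x)$ per fibre, and applying the same cone estimate at every boundary point shows that $h$ is Lipschitz; hence $\pr\iota(M)$ is locally the graph of $h$, i.e.\ an achronal topological hypersurface.

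Finally I would upgrade $\pr^{-}\iota(M)$ to all of $\pr\iota(M)$. The graph description shows that near any $q\in\pr\iota(M)$ the set $\iota(M)$ is precisely the region $\{t>h(x)\}$ above the graph, so the vertical curve $s\mapsto(h(x_0)+s,x_0)$ is a future directed timelike curve (its tangent $\pr_t$ lies in the narrow cone, hence is $\gext$-timelike) that starts at $q$ and immediately enters $\iota(M)$. Therefore $q\in\pr^{-}\iota(M)$, so $\pr^{-}\iota(M)=\pr\iota(M)$ and the conclusion follows. The main obstacle I anticipate is purely the low regularity: every ingredient I rely on — openness of $I^{\pm}$, the cone comparison, and the fact that the first point where a timelike curve leaves $\iota(M)$ is a genuine boundary point — must be justified for a continuous metric rather than quoted from smooth causality theory, which is exactly what the cylindrical comparison neighborhoods are designed to supply.
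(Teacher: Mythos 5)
This statement is imported verbatim from Galloway--Ling \cite{GallowayLing} and the paper gives no proof of its own, so there is nothing internal to compare against; judged on its own terms, your argument is correct and is essentially the standard (and, as far as I recall, the original) proof: the hypothesis $\pr^{+}\iota(M)=\emptyset$ forces $I^{+}(\iota(M))\subseteq\iota(M)$ via the ``first exit point'' argument, and the achronal-boundary machinery, run in cone-comparison cylinders to accommodate the merely continuous metric, then yields the Lipschitz graph structure and the identification $\pr^{-}\iota(M)=\pr\iota(M)$. You correctly isolate the only $C^{0}$-sensitive ingredients (openness of $I^{\pm}$ for piecewise smooth timelike curves and the two-sided cone estimates), and these are exactly what the squeezed Minkowski metrics supply.
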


As advertised in the introduction our main extension concept will not be the spacetime extensions of  Definition \ref{def:classicalextension} but rather certain 'future boundary extensions', a concept which we will develop now.  Of course all our constructions (with all their caveats) should work analogously for a past boundary.  

\begin{definition}[Candidate for a future boundary extension]\label{def:candidate_boundary_ext}
    Let $(M,g)$ be a spacetime with an at least $C^2$-metric and let $(N,\tau)$ be a topological space. If there exists a topological embedding $\iota : M\to N$  such that $\iota(M)$ is open and $\overline{\iota(M)}=N$, then we say that $((N,\tau),\iota)$ is a candidate for a future boundary extension of $(M,g)$. We may suppress both $\tau$ and $\iota$ notationally if they are clear from context.
\end{definition}

 We denote by $\pi^{\textrm{TM}}: TM\rightarrow M$ the natural projection map from the tangent bundle to $M$. We also fix a complete Riemannian background metric $h^{TM}$ on $TM$ and throughout this section all distances in $TM$ will be measured with respect to this background metric.\footnote{As is usually the case with these constructions, none of our arguments will require an explicit form of this background metric and, while the concrete sets $O_{X,r}$ will depend on $h^{TM}$ for the purpose of testing the topology on $N$ all choices of $h^{TM}$ are equivalent. In particular if $N$ is a future boundary extension of $M$ (cf.\ Definition \ref{def:regular-g-boundary-ext}), then whether $N$ is a \emph{regular} future g-boundary extension (cf.\ Definition \ref{def:regular-g-boundary-ext}) will not depend on this choice.} 
We denote by $T_{t}M$ the \emph{set of timelike tangent vectors}, i.e., 
    \begin{equation*}
        T_{t}M \coloneqq \{X\in TM: g(X,X)<0\}.
    \end{equation*}

Before we can proceed we need to do some preparatory work defining certain sets based around timelike geodesics of $M$ which will play an important role in describing  regularity of extensions at the boundary via topological properties. Given a fixed $X\in TM$ and $r>0$, let $B_r(X)$ denote the open ball in $TM$ around $X$. Moreover, for any $X\in TM$, let $\gamma_X: (a_X,b_X)\to M$ be the unique inextendible geodesic in $M$ with initial data $\gamma_X(0)=\pi^{TM}(X), \dot{\gamma}_X(0)=X$. Note that $X\mapsto a_X$ is upper semi-continuous and $X\mapsto b_X$ is lower semi-continuous.

\begin{definition}[Timelike thickening]\label{def:thickenings}
Let $(M,g)$ and $((N,\tau),\iota)$ as above. For $X\in T_tM$ and $r>0$ the \emph{timelike thickening} of radius $r$ generated from $X$ is
\begin{equation}
    O_{X,r}\coloneqq O_{X,r}^{\pr}\cup O_{X,r}^{\textrm{int}}
\end{equation}
where the \emph{timelike boundary thickening} $O_{X,r}^{\pr}$ and the \emph{timelike interior thickening} $O_{X,r}^{\textrm{int}}$ are defined as follows:
\begin{align}
   O_{X,r}^{\textrm{int}}\coloneqq & \{(\iota\circ \gamma_{Y})((0,b_{Y})) : Y\in B_r(X)\cap T_tM\}\label{eq:interiorthickening_new} 
\end{align}
and
\begin{align}
    O_{X,r}^{\pr}\coloneqq & \{\lim_{t\to b_Y^-}(\iota\circ\gamma_{Y})(t) : Y\in B_r(X)\cap T_tM \;\,\textrm{s.t.\;this\;limit\;exists\;in}\;N \}. \label{eq:boundarythickening_new}
\end{align}
\end{definition} 

These are natural analogues of the thickenings of null geodesics considered in \cite{Chrusciel}.

\begin{remark}\label{rem:tlthickenings}
Note that $O_{X,r}$, while indexed by objects intrinsic to $(M,g)$, also depends on $(N,\tau)$ and the embedding $\iota : M\to N$. In all our applications $(M,g)$ will be fixed, however, we will sometimes need to consider different $N$. Whenever there is any chance of confusion we will indicate in which $N$ we are considering the timelike thickening by writing $O^{N}_{X,r}$ instead of merely $O_{X,r}$. 
\end{remark}

Now we are ready to define our concept of (regular) future (g-)boundary extensions:

\begin{definition}[Regular future g-boundary extension]\label{def:regular-g-boundary-ext}
   Let $(M,g)$ be a $C^2$-spacetime. We say that a topological manifold with boundary $N$ is a \emph{future boundary extension} of $(M,g)$ if there exists a homeomorphism
   $$\iota: M\to \mathrm{int}(N)$$ 
   and for any $p\in \pr N$ there exists a future directed timelike curve $\gamma:[0,1)\to M$ with $p=\lim_{t\to 1^-} \iota(\gamma(t))$. If further
   \begin{enumerate}
       \item for any $p\in \pr N$ there exists a future directed timelike \emph{geodesic} $\gamma:[0,1)\to M$ with $p=\lim_{t\to 1^-} \iota(\gamma(t))$
       \item and all timelike thickenings $O^N_{X,r}$ are open and for any $p\in \pr N$ and any future directed timelike geodesic $\gamma:[0,1)\to M$ with $p=\lim_{t\to 1^-} \iota(\gamma(t))$ the collection $\{O^N_{\dot{\gamma}(1-\frac{1}{n}),\frac{1}{m}}: n,m\in \mathbb{N}\}$ is a neighborhood basis of $p$, 
 \end{enumerate} 
then we say that $N$ is a \emph{regular future g-boundary extension}.
\end{definition}

Let us first note that in Section \ref{sec:compatibility} we show that for globally hyperbolic $(M,g)$ any $C^{0}$-spacetime extension $(\mext,\gext)$ in the sense of Definition \ref{def:classicalextension} with empty past boundary gives rise to a future boundary extension $N:=\pr^+\iota(M)\cup \iota(M) \subset \mext$. If $(\mext,\gext)$ is a $C^2$-extension with empty past boundary, then $N$ will be a regular future g-boundary extension. This suggests viewing conditions (1) and (2) in Definition \ref{def:regular-g-boundary-ext} as hidden regularity assumptions and is the reason we introduced the name of \emph{regular} future g-boundary extensions. The ''g'' refers to ''geodesic'' as we demand that all points in the boundary are reached by timelike geodesics and also refers back to old constructions of a ''geodesic boundary'' by Geroch and others, see \cite{Gerochgboundary} and \cite{Geroch2}, highlighting some similarities in spirit to our approach. 
 The idea of Geroch's g-boundary is the following: given a geodesically incomplete spacetime $M$ one considers the set of incomplete geodesics. This set can be endowed with an equivalence relation which, intuitively, considers as equivalent incomplete geodesics that become arbitrarily close (as they approach the singularities of $M$). This set of equivalence classes is called the g-boundary. Note that the resulting object of attaching this g-boundary to the original spacetime $M$ is only a topological space: i.e. in general it is not a manifold anymore and issues with non-Hausdorffness may appear. However, it was more recently shown that it is possible to find a finer topology on the topological space that arises from 'attaching' the g-boundary to the original spacetime $M$ such that this space becomes Hausdorff in the new topology (\cite{FloresHerreraSanchez}). It remains to be seen whether this could be used in actually constructing regular future g-boundaries or proving regular future g-boundary extendibility.

\begin{remark}\label{rem:whyboundary}
Our main reason for switching to work with topological manifolds with boundary instead of the classical concept of spacetime extensions from Definition \ref{def:classicalextension}, where the extension is itself again a spacetime without boundary, is that a uniqueness result for a maximal extension (with the ''standard'' ordering defined via the existence of a \emph{global} embedding) is clearly impossible when going beyond the boundary as one can freely modify the topology of $\mext\setminus \overline{\iota(M)}$ as well as the extended metric $\gext$ on $\mext\setminus \overline{\iota(M)}$. However, recent results of (Sbierski, \cite{Sbierski2022}) show that there is a strong local uniqueness up to and including the boundary. We tried adapting the definition of an ordering relation to only demand the existence of an embedding of some open neighborhood of the boundary (cf.\ Remark \ref{rem:relation_w_extensions_in_marco_thesis}), however for such modified orderings it is not readily apparent that set theoretic maximal elements even have to exist: The problem here appears to be that when trying to construct set theoretic upper bounds via taking unions over the elements in an infinite totally ordered set of extensions (and identifying appropriately) one quickly runs into the issue that -- in order to ensure that the resulting object is a manifold -- we would need a common neighborhood of the boundary into which all other neighborhoods progressively embed, however such a  common neighborhood need not exist, as the considered neighborhoods could contract to just the boundary itself. Indeed we expect that this process would generally only produce a manifold with boundary. Working with topological manifolds with boundary from the beginning avoids these issues.
\end{remark}

\subsection{Preliminary topological considerations}\label{sec:topologicalremarks}

As we already remarked in the introduction, condition (2) in Definition \ref{def:regular-g-boundary-ext} will be necessary to control the topology of our upcoming quotient space constructions. In this preliminary section we will give a first example on how (2) controls the topology by showing that  it guarantees second countability, even if $(N,\tau)$ is not assumed to be a manifold with boundary already.

For this we now define timelike thickenings in $M$ itself (in analogy of timelike thickenings in candidates $((N,\tau),\iota )$ for  future boundary extensions)  by
\begin{align}
   O_{X,r}^{M}\coloneqq & \{\gamma_{Y}((0,b_{Y})) : Y\in B_r(X)\cap T_tM\}\label{eq:Mthickening}
\end{align} 
for $X\in T_tM$ and $r>0$. 
 We are interested in the interplay between the topologies of $M$ and $N$ and properties of the sets $O^M_{X,r}$ and $O^N_{X,r}$. 

\begin{remark}\label{rem:OM_def_and_open} 
Clearly for any candidate for a future boundary extension $((N,\tau),\iota )$ of $(M,g)$ we have $O_{X,r}^M=\iota^{-1}( O_{X,r}^{\textrm{int}})=\iota^{-1}(O^N_{X,r}\cap\iota(M))$. Further, $O_{X,r}^M$ is open in $M$: First, $\{s\cdot Y:Y\in B_r(X)\cap T_tM, 0<s<b_Y \}=\{s\cdot Y:Y\in B_r(X), 0<s<b_Y \}\cap T_tM \subset TM$ is open  by lower semi-continuity of $Y\mapsto b_Y$, second the exponential map $\exp :\mathcal{D}\subset TM \to M$ mapping $X$ to $\gamma_X(1)$ is an open map and lastly $\{s\cdot Y:Y\in B_r(X)\cap T_tM, 0<s<b_Y \} \subset \mathcal{D}$ and $O_{X,r}^M=\exp(\{s\cdot Y:Y\in B_r(X)\cap T_tM, 0<s<b_Y \})$. 
\end{remark}

So there is, as expected, a quite strong relationship between $O^M_{X,r}$ and $O^N_{X,r}$. On the other hand, the $O^N_{X,r}$ are a priori relatively independent of the topology on $N$ (except for $O^N_{X,r}\cap \iota(M)$ having to be open) and demanding ''regularity'' is exactly forcing a stronger relation between the $O^N_{X,r}$ and the topology on $N$. We define

\begin{definition}[Candidate for a regular future g-boundary extension]\label{def:candidate_regular_boundary_ext}
    Let $(M,g)$ be a spacetime with $C^2$-metric. We say that a candidate $((N,\tau),\iota )$ for a future boundary extension is a candidate for a regular future g-boundary extension if all timelike thickenings $O^N_{X,r}$ are open and for any $p\in N\setminus \iota(M)$ there exists a future directed timelike geodesic $\gamma:[0,1)\to M$ with $p=\lim_{t\to 1^-} \iota(\gamma(t))$ and for any such geodesic $\gamma$ the collection $\{O^{N}_{\dot{\gamma}(1-\frac{1}{n}),\frac{1}{m}}: n,m\in \mathbb{N}\}$ is a neighborhood basis for $p$.
\end{definition}

We will next prove that \emph{if} $N$ is a candidate for a regular future g-boundary extension of $M$, then the topology on $N$ is always second countable and can be described entirely by the family of timelike thickenings in $N$ and the topology on $M$.

\begin{lemma}\label{lem:AA2} Let $(M,g)$ be a spacetime with an at least $C^2$-metric and let $((N,\tau),\iota )$ be a candidate for a regular future g-boundary extension of $(M,g)$. Then for any countable dense subset  $\{X_i\}_{i\in \mathbb{N}}$ of $T_tM$ and any countable basis $\{U_i\}_{i\in \mathbb{N}}$  for the manifold topology of $M$ the collection
$$\mathcal{B}_{t}:=\{O^{N}_{X,r}: X\in\{X_i\}_{i\in \mathbb{N}}\;\mathrm{and}\;0<r\in \mathbb{Q}\}\cup \{\iota(U_i)\}_{i\in \mathbb{N}}$$
is a countable basis for $\tau$.
\end{lemma}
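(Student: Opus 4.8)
The plan is to show that $\mathcal{B}_t$ is a countable collection of open sets, and then verify the basis property: for every open $W \subseteq N$ and every point $p \in W$, there is a member of $\mathcal{B}_t$ containing $p$ and contained in $W$. Countability is immediate since $\{X_i\}$, the rationals, and $\{U_i\}$ are all countable, and openness of each member follows from the hypothesis that all $O^N_{X,r}$ are open (for the first family) and that $\iota$ is an open embedding (for the second family). So the real content is the basis property, which splits naturally into two cases according to whether $p$ lies in $\iota(M)$ or in $\pr N = N \setminus \iota(M)$.

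\textbf{Interior points.} For $p \in \iota(M) \cap W$, I would pull back to $M$: since $\iota$ is a homeomorphism onto its image and $\iota(M)$ is open, $\iota^{-1}(W \cap \iota(M))$ is an open neighborhood of $\iota^{-1}(p)$ in $M$, so some basis element $U_i$ satisfies $\iota^{-1}(p) \in U_i \subseteq \iota^{-1}(W\cap \iota(M))$, whence $p \in \iota(U_i) \subseteq W$. This case is essentially free and uses only that $\{U_i\}$ is a basis for $M$.

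\textbf{Boundary points.} For $p \in \pr N \cap W$ the argument rests on the candidate-for-regularity hypothesis (Definition \ref{def:candidate_regular_boundary_ext}). By that hypothesis there is a future directed timelike geodesic $\gamma$ reaching $p$, and the collection $\{O^N_{\dot\gamma(1-\frac1n),\frac1m} : n,m \in \mathbb{N}\}$ is a neighborhood basis at $p$. Hence there exist $n,m$ with $p \in O^N_{\dot\gamma(1-\frac1n),\frac1m} \subseteq W$. The remaining task is to replace the geodesic tangent vector $\dot\gamma(1-\frac1n)$ and the rational-inverse radius $\frac1m$ by a vector from the dense set $\{X_i\}$ and a rational radius, while keeping $p$ inside and staying within $W$. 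I would set $X^* := \dot\gamma(1-\frac1n)$ and $r^* := \frac1m$; since $O^N_{X^*,r^*}$ is open and contains $p$, and since the thickenings form a neighborhood basis, I can first shrink to some $O^N_{X^*, r'} \subseteq O^N_{X^*,r^*}$ with $p$ still inside (or argue directly), then observe that for $X_i$ close enough to $X^*$ and a suitable rational $r$ one has $B_r(X_i) \subseteq B_{r^*}(X^*)$, which forces the inclusion $O^N_{X_i,r} \subseteq O^N_{X^*,r^*}$ by monotonicity of the thickening in its ball argument. The delicate point is ensuring $p \in O^N_{X_i,r}$: since $p = \lim_{t\to b_{X^*}^-}(\iota\circ\gamma_{X^*})(t)$ and this defining geodesic's initial vector $X^*$ lies in the relevant ball, one must choose $X_i$ and $r$ so that $X^*$ itself still belongs to $B_r(X_i)$, which is arranged by taking $X_i$ within distance less than $r$ of $X^*$.

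\textbf{The main obstacle} I anticipate is precisely this last simultaneous constraint: I need $B_r(X_i) \subseteq B_{r^*}(X^*)$ (to get $O^N_{X_i,r} \subseteq O^N_{X^*,r^*} \subseteq W$) \emph{and} $X^* \in B_r(X_i)$, equivalently $\|X_i - X^*\| < r$ (to guarantee $p \in O^N_{X_i,r}$), with $r$ rational and $X_i$ drawn from the fixed dense set. Writing $\delta := \|X_i - X^*\|$, the containment of balls needs $\delta + r \leq r^*$, so one requires $\delta < r \leq r^* - \delta$, which is solvable by a rational $r$ exactly when $\delta < r^*/2$. Thus the quantified choice is: pick $X_i$ with $\|X_i - X^*\| < r^*/2$ (possible by density of $\{X_i\}$ in $T_tM$, noting $X^* \in T_tM$), then pick a rational $r$ with $\|X_i - X^*\| < r < r^* - \|X_i - X^*\|$. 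With these choices $p \in O^N_{X_i,r} \subseteq O^N_{X^*,r^*} \subseteq W$, completing the verification. I would also briefly confirm the monotonicity claim $B_r(X_i)\subseteq B_{r^*}(X^*) \Rightarrow O^N_{X_i,r}\subseteq O^N_{X^*,r^*}$ directly from Definition \ref{def:thickenings}, since both the interior and boundary pieces are defined as unions indexed over $Y$ ranging through the respective balls intersected with $T_tM$.
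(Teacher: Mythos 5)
Your proof is correct and follows essentially the same route as the paper's: the interior case is handled by pulling back to the basis of $M$, and the boundary case uses the neighborhood-basis property of the thickenings together with density of $\{X_i\}$ and a triangle-inequality argument to replace $\dot\gamma(1-\tfrac1n)$ by a nearby $X_i$ with a smaller rational radius (the paper simply fixes $r=\tfrac{1}{2m}$ where you allow any rational $r$ in the admissible interval). Your explicit verification that $X^*\in B_r(X_i)$ forces $p\in O^N_{X_i,r}$ is the same observation the paper leaves implicit.
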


\begin{proof}

We need to show that for each $\tau$-open $U\subset N$ and every $p\in U$ there exists $O^{N}_{X,r}\in \mathcal{B}_t$ with $p\in O^{N}_{X,r}$ and $O^{N}_{X,r}\subset U$. If $p\in U\cap\iota(M)$ this immediately follows from $\iota(M)$ being open, $\iota$ being an embedding and $\{\iota(U_i)\}_{i\in \mathbb{N}}$ being a basis for the topology on $M$. So assume $p\in N\setminus \iota(M)$. Since by assumption $\{O^{N}_{\dot{\gamma}(1-\frac{1}{n}),\frac{1}{m}}: n,m\in \mathbb{N}\}$ is then a neighborhood basis for $p$, there exist $n,m$ such that $O^{N}_{\dot{\gamma}(1-\frac{1}{n}),\frac{1}{m}}\subset U$. This is almost what we need except that $\dot{\gamma}(1-\frac{1}{n})$ might not belong to the collection $\{X_i\}_{i\in \mathbb{N}}$. By density of $\{X_i\}_{i\in \mathbb{N}}$ there exists $i\in \mathbb{N}$ s.t.\ $\dot{\gamma}(1-\frac{1}{n}) \in B_{\frac{1}{2m}}(X_i)$. Then by the triangle inequality $B_{\frac{1}{2m}}(X_i)\subset B_{\frac{1}{m}}(\dot{\gamma}(1-\frac{1}{n}))$  and hence $O^{N}_{X_i,\frac{1}{2m}}\in \mathcal{B}_t$ satisfies $p\in O^{N}_{X_i,\frac{1}{2m}}$ and $O^{N}_{X_i,\frac{1}{2m}}\subset O^{N}_{\dot{\gamma}(1-\frac{1}{n}),\frac{1}{m}} \subset U$.
\end{proof}

Hence establishing that a candidate for a regular future g-boundary extension is indeed  a regular future g-boundary extension boils down to finding homeomorphisms from open neighborhoods of  ''boundary points'' $p\in N\setminus \iota(M)$ to open subsets in the half space $[0,\infty)\times \mathbb{R}^{d-1}$ (clearly, $\iota$ induces a manifold structure on the ''interior'' $\iota(M)$ and $\iota$ being a homeomorphism between $M$ and the open set $\iota(M)\subset N$ takes care of compatibility of charts) and showing Hausdorffness while second countability then follows automatically.

\section{Compatibility with other extension concepts}\label{sec:compatibility}
As a further preliminary step, let us -- as promised -- investigate under which conditions we can strip down a spacetime extension $\iota: M \to \mext$ in the sense of Definition \ref{def:classicalextension} to just $\iota(M)\cup \pr^+\iota(M)$ while retaining a sensible structure, namely that of a topological manifold with boundary or even of a regular future g-boundary extension, on the resulting space.

First, we discuss under which sufficient conditions,  given a (low-regularity) extension $\iota : M\rightarrow \mext$, the subspace $\iota(M)\cup \pr^{+}\iota(M)$ is a topological manifold with boundary. If we endow $\iota(M)\cup \pr^{+}\iota(M)$ with the subspace topology, it directly follows that it is Hausdorff and second countable (inherited properties from the manifold topology in $\mext$). However, it does not hold, in general, that $\iota(M)\cup \pr^{+}\iota(M)$ is a topological manifold with boundary. In particular, it is not clear under which conditions on $M$ and on the extension $\iota$, for points in $\pr^{+}\iota(M)$ there exists an open neighborhood $V$  homeomorphic to a relatively open subset of $[0,\infty)\times \mathbb{R}^{d-1}$. The following result in \cite{Sbierski2018} plays an important role in investigating this.
 
\begin{proposition}[Proposition 1 in \cite{Sbierski2018}]
\label{Sbierskifuturechart}
Let $\iota : M\rightarrow \mext$ be a $C^{0}$-extension of a globally hyperbolic Lorentzian manifold $(M,g)$ and let $p\in\pr^{+}\iota(M)$. For every $\delta> 0$ there exists a chart $\varphi : V \rightarrow \rec\coloneqq (-\varepsilon_{0},\varepsilon_{0})\times (-\varepsilon_{1},\varepsilon_{1})^{d-1} $ with $\varepsilon_{0},\varepsilon_{1}>0$ with the following properties:
\begin{enumerate}
    \item $p\in V$ and $\varphi(p)= (0, . . . , 0)$.
    \item $|g_{\mu\nu} - \eta_{\mu\nu}| < \delta$, where $\eta_{\mu\nu}$ is the Minkowski metric.
    \item There exists a Lipschitz continuous function $f : (-\varepsilon_{1}, \varepsilon_{1})\rightarrow (-\varepsilon_{0},\varepsilon_{0})$ with the following
properties:
\begin{equation}
    F_{<}\coloneqq \{(x_{0},x)\in \rec | x_{0}<f(x)\}\subset \varphi(\iota(M)\cap V)
\end{equation}
\begin{equation}
    F_{=}\coloneqq \{(x_{0},x)\in \rec | x_{0}=f(x)\}\subset \varphi(\pr^{+}\iota(M)\cap V)
\end{equation}
Moreover, $F_{=}$ is achronal in $\rec$ and $\varphi$ is called a future boundary chart.
\end{enumerate}
\end{proposition}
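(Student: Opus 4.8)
The plan is to realise properties (1) and (2) by a direct coordinate construction and then to use the resulting near-Minkowski cone condition to produce the Lipschitz graph in (3); the genuine work sits entirely in the last part.

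\emph{Coordinates and the cone condition.} Beginning with any smooth chart on $\mext$ centred at $p$, I would apply the linear change of coordinates that diagonalises the continuous symmetric form $\gext(p)$ to the Minkowski form $\eta$, which is possible since $\gext(p)$ is a Lorentzian scalar product. Continuity of the components then yields a coordinate cube on which $|g_{\mu\nu}-\eta_{\mu\nu}|<\delta$; restricting to a rectangle $\rec$ and calling the result $\varphi$ gives $\varphi(p)=0$, so (1) and (2) hold. For $\delta$ small this forces a two-sided cone condition: there are slopes $L_{-}<1<L_{+}$, both tending to $1$ as $\delta\to0$, such that every vector with $|\vec v|<L_{-}v^{0}$ is $\gext$-timelike and every future-directed $\gext$-timelike vector satisfies $v^{0}>0$ and $|\vec v|<L_{+}v^{0}$. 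In particular $\partial_{x_{0}}$ is timelike (since $g_{00}<0$), so every vertical segment is a timelike curve, future-directed timelike curves have strictly increasing $x_{0}$, and their chords obey $|\Delta\vec x|\le L_{+}\,\Delta x_{0}$.

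\emph{Definition of $f$ and the easy inclusions.} I would define $f(\vec x)$ as the supremum of those $x_{0}$ for which the whole vertical segment $\{(s,\vec x):-\varepsilon_{0}<s\le x_{0}\}$ lies in $\varphi(\iota(M)\cap V)$. With this definition $F_{<}\subset\varphi(\iota(M)\cap V)$ holds tautologically, and since the vertical segment beneath any point $(f(\vec x),\vec x)$ is a future-directed timelike curve running inside $\iota(M)$ and ending at that point, the definition of $\pr^{+}$ gives $F_{=}\subset\varphi(\pr^{+}\iota(M)\cap V)$ for free. For this to define an honest function on all of $(-\varepsilon_{1},\varepsilon_{1})^{d-1}$, with $f(0)=0$ recovering $\varphi(p)=0\in F_{=}$, two things are needed, both of which will be consequences of the crux below: the bottom face of the rectangle must lie in $\iota(M)$, and the graph must extend across the whole rectangle below the top face.

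\emph{Achronality, Lipschitz regularity, and the main obstacle.} All of the above reduce to a single statement, which I regard as the crux: \emph{for $V$ small enough the local chronological past $I^{-}(p,V)$ is contained in $\iota(M)$}. Granting this, the bottom face (lying in this past cone once $\varepsilon_{1}$ is small relative to $\varepsilon_{0}$) is in $\iota(M)$; moreover $F_{=}$ is achronal in $\rec$, because two points $q_{1}\ll q_{2}$ of $\pr^{+}\iota(M)$ would, using an interior timelike curve $\gamma\to q_{2}$ together with openness of $\ll$ and $\gamma(t)\ll p$, force $q_{1}\in I^{-}(\gamma(t),V)\subset I^{-}(p,V)\subset\iota(M)$ for $t$ near $1$, contradicting $q_{1}\in\pr\iota(M)$. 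Since every vertical is timelike, an achronal $F_{=}$ meets each vertical at most once and is therefore the graph of $f$; achronality together with the cone condition then gives the Lipschitz bound on $f$ (with constant $1/L_{-}$), and arranging $L_{-}^{-1}\sqrt{d-1}\,\varepsilon_{1}<\varepsilon_{0}$ keeps the graph inside the rectangle and forces each vertical to leave $\iota(M)$, completing (3). The real difficulty is thus establishing the crux: I would prove it using global hyperbolicity of $(M,g)$, which separates the future and past boundaries along the Cauchy-time direction and, via a limit-curve argument, prevents past-boundary points from lying directly below $p$. Making this rigorous for a merely $C^{0}$ extension, where the Cauchy time function need not extend continuously to the boundary, is the main obstacle.
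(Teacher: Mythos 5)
A preliminary remark on scope: the paper does not prove this proposition at all --- it is imported verbatim as Proposition~1 of \cite{Sbierski2018} --- so there is no internal proof to compare your attempt against, and I can only assess your argument on its own terms. The first two thirds are fine and standard: normalising $\gext(p)$ to $\eta$ and shrinking the chart gives (1) and (2), the two-sided cone condition is correct, and defining $f(\vec x)$ as the supremum of heights of vertical segments contained in $\varphi(\iota(M)\cap V)$ does make $F_{<}\subset\varphi(\iota(M)\cap V)$ tautological and $F_{=}\subset\varphi(\pr^{+}\iota(M)\cap V)$ nearly so. The genuine gap is the one you flag yourself: everything rests on the claim $I^{-}(p,V)\subset\iota(M)$, and you do not prove it. Worse, the statement you would actually need is shaped differently from the one you announce. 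Since $\pr^{-}\iota(M)=\emptyset$ is \emph{not} assumed here, you cannot argue that past-directed timelike curves starting in $\iota(M)$ stay in $\iota(M)$; the standard route instead anchors to the future directed timelike curve $\gamma$ with $\gamma([0,1))\subset\iota(M)$, $\gamma(1)=p$, fixes $s_{0}<1$, and proves that the lens $I^{+}_{\hat\eta}(\varphi(\gamma(s_{0})))\cap I^{-}_{\hat\eta}(\varphi(p))$ (for a narrowed Minkowski metric $\hat\eta$ whose timelike cones lie inside the $g$-cones) is contained in $\varphi(\iota(M))$. This goes by an open--closed argument in the connected lens, and closedness is exactly where global hyperbolicity enters: a sequence $z_{n}\in\iota(M)$ in the lens converging to a putative $z\in\pr\iota(M)$ can be arranged to satisfy $\gamma(s_{0})\ll z_{n}\ll\gamma(s)$ by timelike curves \emph{inside} the lens, so $\iota^{-1}(z_{n})$ lies in the compact causal diamond $J^{+}(\iota^{-1}(\gamma(s_{0})))\cap J^{-}(\iota^{-1}(\gamma(s)))$ of $M$, a subsequence converges to some $w\in M$, and $\iota(w)=z$ contradicts $z\in\pr\iota(M)$. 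Without the anchor point $\gamma(s_{0})$ from below there is no compact causal diamond of $M$ to invoke, so the ``limit-curve argument'' you gesture at cannot even be set up for your unanchored formulation.

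A secondary but real problem: even granting the crux at $p$, your achronality and Lipschitz steps quietly apply it at every point of $F_{=}$ and across the whole rectangle. For $q_{1}\ll q_{2}$ with $q_{2}\neq p$, the vertical curve ending at $q_{2}$ does not satisfy $\gamma(t)\ll p$ (for $q_{2}$ at comparable height but spatially displaced, no point just below $q_{2}$ lies in $I^{-}(p)$), so the chain $q_{1}\in I^{-}(\gamma(t),V)\subset I^{-}(p,V)\subset\iota(M)$ breaks; what is needed is the lens statement centred at $q_{2}$, with a radius large enough to reach $q_{1}$, i.e.\ a version of the crux that is uniform over $\rec$ rather than a single application at the distinguished point $p$. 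Establishing that uniform, per-point control is where the actual content of the cited proof lies, and it is absent from the proposal.
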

\par The previous Proposition implies that points beneath the graph of the Lipschitz function $f$ are in the inside of the ``original" spacetime $\iota(M)$. An easy way to ensure that points above the graph of $f$ are in $\mext\setminus\overline{\iota(M)}$ (as, in general, it cannot be ruled out that some of these points are in $\iota(M)$ or $\pr\iota(M)$, cf.\ the comments in \cite{Sbierski2022}) is to assume that the past boundary is empty. Under this assumption, we immediately have the following:
\begin{lemma}\label{lem:no_further_intersections}
  Let $\iota : M\rightarrow \mext$ be a $C^{0}$-extension of a globally hyperbolic Lorentzian manifold $(M,g)$ such that $\pr^-\iota(M)=\emptyset$. 
    Then for any smooth future directed timelike curve $\gamma: [0,b)\to \mext$ in $\mext$ with
    $\gamma(0)\in \iota(M)$ and $\lim_{s\to b_0^-} \gamma(s) \in \pr^+\iota(M)$ for some $b_0\in (0,b)$ there exists a unique $s\in (0,b)$ such that $\gamma([0,s))\subset \iota(M)$, $\gamma(s)\in \pr^+\iota(M) $, $\gamma((s,b))\subset  \mext\setminus (\iota(M)\cup \pr^{+}\iota(M)) $  and $s=b_0$.  
\end{lemma}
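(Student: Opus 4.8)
The plan is to identify $s$ as the first parameter value at which $\gamma$ leaves the interior $\iota(M)$, and then to show, in two separate steps, that after this time $\gamma$ can return neither to $\iota(M)$ nor to $\pr^+\iota(M)$; the identity $s=b_0$ and uniqueness will then follow immediately.

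First I would set $s\coloneqq \sup\{t\in[0,b) : \gamma([0,t])\subset \iota(M)\}$. Since $\gamma(0)\in\iota(M)$ and $\iota(M)$ is open, the preimage $\gamma^{-1}(\iota(M))$ is open in $[0,b)$ and contains $0$, so $s>0$ and $\gamma([0,s))\subset\iota(M)$. By continuity $\gamma(s)\in\overline{\iota(M)}$, while $\gamma(s)\notin\iota(M)$ (otherwise openness would push $s$ further), whence $\gamma(s)\in\pr\iota(M)$. Because $\gamma|_{[0,s]}$ (reparametrised to $[0,1]$) is a future directed timelike curve lying in $\iota(M)$ except for its endpoint $\gamma(s)$, the defining property of $\pr^+\iota(M)$ gives $\gamma(s)\in\pr^+\iota(M)$. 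Since $\gamma(b_0)\in\pr^+\iota(M)\subset\pr\iota(M)$ forces $\gamma(b_0)\notin\iota(M)$, we also record $s\le b_0$.

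The main work is to control $\gamma$ on $(s,b)$, which I would split into two claims. For $\gamma((s,b))\cap\iota(M)=\emptyset$ I argue by contradiction: if $\gamma(t_0)\in\iota(M)$ for some $t_0>s$, set $s'\coloneqq \sup\{t\in[s,t_0] : \gamma(t)\notin\iota(M)\}$. As $\{t:\gamma(t)\notin\iota(M)\}$ is closed and contains $s$ but not $t_0$, we get $s'<t_0$, $\gamma(s')\notin\iota(M)$, and $\gamma((s',t_0])\subset\iota(M)$, so $\gamma(s')\in\pr\iota(M)$; but then $\gamma|_{[s',t_0]}$ exhibits $\gamma(s')$ as a point of $\pr^-\iota(M)$, contradicting $\pr^-\iota(M)=\emptyset$. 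For $\gamma((s,b))\cap\pr^+\iota(M)=\emptyset$ I would invoke the time-dual of Theorem \ref{theo:pastboundary}, applied to the time-reversed extension (in which the hypothesis $\pr^-\iota(M)=\emptyset$ reads as vanishing of the future boundary): it yields that $\pr^+\iota(M)$ is achronal. If $\gamma(t_1)\in\pr^+\iota(M)$ for some $t_1\in(s,b)$, then $\gamma|_{[s,t_1]}$ is a nondegenerate future directed timelike curve joining the two points $\gamma(s),\gamma(t_1)\in\pr^+\iota(M)$, which is impossible for an achronal set.

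Combining the two claims gives $\gamma((s,b))\subset\mext\setminus(\iota(M)\cup\pr^+\iota(M))$. The only parameter at which $\gamma$ meets $\pr^+\iota(M)$ is thus $s$ — on $[0,s)$ the curve lies in $\iota(M)$, which is disjoint from $\pr\iota(M)$, and on $(s,b)$ it avoids $\pr^+\iota(M)$ by the second claim — so $\gamma(b_0)\in\pr^+\iota(M)$ forces $b_0=s$. Uniqueness is then clear, since the first two required properties already characterise $s$ as the first exit time from $\iota(M)$. The hard part will be the post-$s$ analysis: one must convert the return-to-interior scenario into a forbidden past-boundary point, and rule out a second hit of $\pr^+\iota(M)$ via achronality, which requires being comfortable transporting the Galloway–Ling result of Theorem \ref{theo:pastboundary} to the past-empty setting by time reversal.
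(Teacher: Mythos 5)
Your proof is correct and follows essentially the same route as the paper: identify the (unique) crossing parameter, rule out a return to $\iota(M)$ using $\pr^-\iota(M)=\emptyset$, and rule out a second intersection with $\pr^+\iota(M)$ using achronality from the time-reversed version of Theorem \ref{theo:pastboundary}. Your choice of defining the auxiliary time $s'$ as a supremum (so that $\gamma((s',t_0])\subset\iota(M)$ genuinely holds) is in fact a slightly more careful formulation than the paper's infimum-based one, but the argument is the same in substance.
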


\begin{proof}
    Let $\gamma: [0,b)\to \mext $ be a suitable timelike curve. 
    The set $ \{t: \gamma(t)\in \pr^+\iota(M)\}$ is non-empty by assumption and we set $s:=\inf \{t: \gamma(t)\in \pr^+\iota(M)\}$. By openness of $\iota(M)$ and continuity of $\gamma$ we have $s>0$ and $\gamma([0,s))\subset \iota(M)$. Clearly $s\leq b_0$ by definition, so $s\in (0,b)$. It remains to show $\gamma((s,b))\subset  \mext\setminus (\iota(M) \cup \pr^+\iota(M)) $. 
    Assume that $\gamma(b')\in \iota(M) \cup \pr^+\iota(M) $ for some $s<b'< b$. Achronality of $\pr^+\iota(M)$, which follows from (the time reversed version of) Theorem \ref{theo:pastboundary}, implies that $\gamma(b')\notin \pr^+\iota(M)$, hence  $\gamma(b')\in \iota(M) $. We now proceed as before: setting $s':=\inf \{t\in (s,b'):  \gamma(t)\in \iota(M) \}$ we have $\gamma((s',b']) \subset \iota(M)$ and $\gamma(s')\in \pr\iota(M)$. This contradicts $\pr^-\iota(M)$ being empty.\end{proof}

Since the first part of the lemma in particular applies to vertical coordinate lines in the chart $\varphi$, it is clear that points below the graph of $f$ are inside $\iota(M)$ while points above the graph of $f$ are outside of $\iota(M) \cup \pr^+\iota(M)$. 
 Therefore, given a globally hyperbolic spacetime and considering low regularity extensions with a disjoint future and past boundary, it follows that $\iota(M)\cup \pr^{+}\iota(M)$ is a topological manifold with boundary: taking around every point $p\in \pr^{+}\iota(M)$ a future boundary chart $\varphi: V\rightarrow \rec$ and defining the homeomorphism $\phi: \rec\rightarrow \mathbb{R}^{d},\;(x_{0},x)\mapsto(x_{0}-f(x),x)$, it follows that every $V\cap(\iota(M)\cup\pr^{+}\iota(M))$ is locally homeomorphic to a relatively open subset of $[0,\infty)\times \mathbb{R}^{d-1}$ (with homeomorphism $\Tilde{\phi}\coloneqq\phi\circ\varphi : V\subset M\rightarrow \Tilde{\phi}(V)\subset [0,\infty)\times \mathbb{R}^{d-1}$). Moreover, the fact that $f$ is only a Lipschitz function (so $\Tilde{\phi}=\phi\circ\varphi$ is Lipschitz but not smooth) is why, in general, $\iota(M)\cup \pr^{+}\iota(M)$ is a topological manifold but not a smooth manifold\footnote{Since $f$ is Lipschitz we could probably have worked with Lipschitz manifolds with boundary throughout (i.e., from Definition \ref{def:regular-g-boundary-ext}), but we didn't see an immediate way to take advantage of the additional Lipschitz structure, so we stuck with topological manifolds with boundary.}. We have thus shown:

\begin{lemma} Let $(M,g)$ be globally hyperbolic and $(\mext,\gext)$ a $C^0$ 
extension with empty past boundary, then $N:=\iota(M)\cup \pr^{+}\iota(M)$ with the subspace topology induced from $\mext$ is a topological manifold with boundary and a future boundary extension of $(M,g)$.    
\end{lemma}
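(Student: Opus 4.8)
The plan is to check the two defining properties of a future boundary extension from Definition~\ref{def:regular-g-boundary-ext} directly, leaning on Proposition~\ref{Sbierskifuturechart} and Lemma~\ref{lem:no_further_intersections} for the local structure at the boundary. Hausdorffness and second countability of $N$ are inherited from $\mext$ through the subspace topology and require no work. For the interior, $\iota(M)$ is open in $\mext$ and hence open in $N$; since $\iota$ is an isometric (in particular topological) embedding, it restricts to a homeomorphism $M\to\iota(M)$, and every point of $\iota(M)$ has a neighborhood homeomorphic to an open subset of $\R^d$.

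The substance of the argument is the construction of half-space charts at boundary points $p\in\pr^+\iota(M)$. Around such a $p$ I would invoke Proposition~\ref{Sbierskifuturechart} to obtain a future boundary chart $\varphi\colon V\to\rec$ together with its Lipschitz graph function $f$, so that points strictly below the graph lie in $\iota(M)$ and points on the graph lie in $\pr^+\iota(M)$; equivalently $F_<\cup F_=\subseteq\varphi(V\cap N)$. The main obstacle, and the only step using the hypotheses in an essential way, is the reverse inclusion: that no point strictly above the graph belongs to $N$. To establish this I would apply Lemma~\ref{lem:no_further_intersections} to the vertical coordinate lines $t\mapsto\varphi^{-1}(t,x)$, which are future directed timelike once $\delta$ in condition~(2) is taken small enough that $\partial_{x_0}$ is timelike. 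For each fixed $x$ such a line starts, below the graph, in $\iota(M)$ and limits onto the boundary point $\varphi^{-1}(f(x),x)\in\pr^+\iota(M)$; the lemma then forces the line to enter $\mext\setminus(\iota(M)\cup\pr^+\iota(M))$ immediately after the unique crossing at $t=f(x)$ and never return. Running over all $x$ yields $\varphi(V\cap N)=\{(x_0,x)\in\rec:x_0\le f(x)\}$, and the homeomorphism $\phi(x_0,x)=(f(x)-x_0,x)$ flattens the graph, exhibiting $V\cap N$ as homeomorphic to a relatively open subset of $[0,\infty)\times\R^{d-1}$.

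Combining the interior and boundary charts shows $N$ is a topological manifold with boundary; invariance of the boundary then identifies $\mathrm{int}(N)=\iota(M)$ and $\pr N=\pr^+\iota(M)$, so the corestriction of $\iota$ is a homeomorphism $M\to\mathrm{int}(N)$ and the first requirement of Definition~\ref{def:regular-g-boundary-ext} holds. The second requirement is immediate from the definition of $\pr^+\iota(M)$: any $p\in\pr N$ is the future endpoint of an f.d.t.l. curve $\tilde\gamma\colon[0,1]\to\mext$ with $\tilde\gamma([0,1))\subset\iota(M)$, and pulling this back through the isometry gives the desired f.d.t.l. curve $\gamma\coloneqq\iota^{-1}\circ\tilde\gamma|_{[0,1)}$ in $M$ with $\lim_{t\to1^-}\iota(\gamma(t))=p$. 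I would finally remark that the assumption $\pr^-\iota(M)=\emptyset$ is used precisely inside Lemma~\ref{lem:no_further_intersections} (via the achronality of $\pr^+\iota(M)$, i.e.\ the time-reversed Theorem~\ref{theo:pastboundary}); without it the vertical lines could re-enter $\iota(M)$ above the graph and $N$ would fail to be a manifold with boundary at $p$.
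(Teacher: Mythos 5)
Your argument is correct and follows essentially the same route as the paper: Proposition~\ref{Sbierskifuturechart} provides the future boundary chart, Lemma~\ref{lem:no_further_intersections} applied to the (timelike, for small $\delta$) vertical coordinate lines shows that points above the graph of $f$ lie outside $N$, and the graph-flattening map produces the half-space chart. The only additions are your explicit verification of the f.d.t.l.-curve condition and the identification $\pr N=\pr^{+}\iota(M)$ via invariance of the boundary, both of which the paper leaves implicit.
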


\par The following Lemma establishes that, given a regular enough extension of a globally hyperbolic spacetime, every point of its future boundary is intersected by a timelike geodesic.

\begin{lemma}[Lemma 3.1 in \cite{Sbierski2022}]\label{lem:geodesic_to_boundary}
Let $(M, g)$ be a $C^{2}$ time-oriented and globally hyperbolic Lorentzian manifold and let $(\mext,\gext)$, $\iota : M \rightarrow \mext$  be a $C^2$-extension. 
Let 
 $\gamma:[-1,0)\rightarrow M$ be a future directed and future inextendible
causal $C^{1}$-curve such that $\lim_{s\to0}(\iota\circ\gamma)(s)\eqqcolon p\in \pr\iota(M)$ exists. Then there is a smooth timelike geodesic
$\sigma : [-1,0]\rightarrow \mext$ with $\sigma|_{[-1,0)}$ mapping into $\iota(M)$ as a future directed timelike geodesic and $\sigma(0)=p$. In
particular $p\in\pr^{+}\iota(M)$ and there exists a boundary chart such that $\iota\circ\gamma$ is ultimately
contained in $F_{<}\coloneqq \{(x_{0},x)\in \rec | x_{0}<f(x)\}$.
\end{lemma}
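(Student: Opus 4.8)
The plan is to work entirely inside a small geodesically convex normal neighborhood $U$ of $p$ in $\mext$, which exists because $\gext$ is $C^2$, and to fix coordinates $\varphi$ on $U$ with $\varphi(p)=0$ in which $\gext$ is $\delta$-close to the Minkowski metric. On such a $U$ the local causal relations are controlled by the coordinate cones up to the error $\delta$, geodesics depend $C^1$ on their endpoints, and for $q,p\in U$ the relation $q\ll p$ is equivalent to the connecting geodesic in $U$ being timelike. The first goal is to produce an \emph{interior} point $q\in\iota(M)\cap U$ lying strictly in the chronological past $I^-(p)$ of $p$; everything then reduces to analysing the connecting geodesic.

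To find such a $q$ I would exploit that $\iota\circ\gamma$ is future directed causal with $\lim_{s\to 0}\iota\gamma(s)=p$. For $s$ close to $0$ the points $\iota\gamma(s)$ lie in $U$ and, being joined to $p$ by the future directed causal curve $\iota\gamma|_{[s,0)}$, satisfy $\iota\gamma(s)\in J^-(p)\subseteq\overline{I^-(p)}$. Since $\iota(M)$ is open and contains $\iota\gamma(s)$, every neighborhood of $\iota\gamma(s)$ meets the open set $I^-(p)$, so $\iota(M)\cap I^-(p)\cap U\neq\emptyset$ and I may pick $q$ in this intersection. This step is what converts the a priori merely causal (possibly null) approach of $\gamma$ into a genuinely timelike relation $q\ll p$: it yields a future directed timelike geodesic $\sigma_q:[0,1]\to U$ with $\sigma_q(0)=q\in\iota(M)$, $\sigma_q(1)=p$, and all intermediate points satisfying $\sigma_q(t)\in I^-(p)\cap U$.

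The remaining, and I expect main, difficulty is to guarantee that a tail of such a geodesic maps into $\iota(M)$ while still reaching $p$ itself, rather than leaving through $\partial\iota(M)$ at some earlier boundary point $p'\neq p$. Taking the first exit time $t_0:=\inf\{t>0:\sigma_q(t)\in\partial\iota(M)\}$ always gives $\sigma_q([0,t_0))\subset\iota(M)$ and $\sigma_q(t_0)\in\partial^+\iota(M)$, so \emph{some} boundary point is reached by a nice interior timelike geodesic; the real issue is forcing $t_0=1$. When $\partial^-\iota(M)=\emptyset$ this is immediate from a no-return argument as in Lemma \ref{lem:no_further_intersections}: beyond $t_0$ the curve enters $\mext\setminus(\iota(M)\cup\partial^+\iota(M))$ and cannot return to $\overline{\iota(M)}$, contradicting $\sigma_q(1)=p\in\partial\iota(M)$ unless $t_0=1$. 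In the general case I would instead let $q=q_n\to p$ along the interior points produced above, obtain first–exit boundary points $p'_n\in\partial^+\iota(M)$ with $p'_n\to p$ each carrying an interior timelike geodesic, and extract a limiting geodesic through $p$ using the $C^1$-dependence of the geodesic flow on initial data; the delicate point is to show the limiting initial direction stays timelike rather than degenerating to a null one and that the limit geodesic keeps its interior in $\iota(M)$, which requires controlling the local causal structure of $\partial\iota(M)$ near $p$. Once $p\in\partial^+\iota(M)$ is established, the final clause follows by applying the future boundary chart of Proposition \ref{Sbierskifuturechart} at $p$: since $F_{=}$ is achronal in $\rec$ and $\iota\gamma$ is a future directed causal curve contained in $\iota(M)$ terminating at $p\in F_{=}$, it cannot accumulate on or above the graph of $f$ and must therefore be ultimately contained in $F_{<}$.
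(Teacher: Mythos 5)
First, a point of comparison: the paper does not prove this statement at all --- it is imported verbatim as Lemma 3.1 of \cite{Sbierski2022} --- so your attempt has to be measured against Sbierski's argument rather than anything in this article. Your opening steps are fine: inside a convex normal neighborhood $U$ of $p$ one has $J^-(p,U)=\overline{I^-(p,U)}$, so openness of $\iota(M)$ yields a point $q\in\iota(M)\cap I^-(p,U)$ and a future directed timelike geodesic $\sigma_q$ from $q$ to $p$, and the first-exit argument correctly produces \emph{some} point of $\partial^+\iota(M)$. The genuine gap is exactly where you flag it, and the repair you sketch does not close it. The lemma assumes nothing about $\partial^-\iota(M)$, so the no-return argument via Lemma \ref{lem:no_further_intersections} is unavailable, and the limiting procedure with $q_n\to p$ fails for two structural reasons: (i) the segments $\sigma_{q_n}|_{[0,t_0^n]}$ shrink to the single point $p$, so the fact that their interiors lie in $\iota(M)$ evaporates in the limit and gives no control over any fixed-length segment of a limit geodesic; and (ii) since $\gamma$ is merely causal, the points of $\iota(M)\cap I^-(p,U)$ you are guaranteed to find lie near $\iota(\gamma(s))$, hence possibly arbitrarily close to the past null cone of $p$, so the normalized initial directions may degenerate to a null direction.

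The telltale sign that the key idea is missing is that your argument never uses global hyperbolicity of $(M,g)$, which this paper explicitly notes cannot be dropped from the lemma (see the Discussion, citing Remark 3.4 of \cite{Sbierski2022}). In Sbierski's proof global hyperbolicity enters through compactness of causal diamonds in $M$: working in a chart around $p$ in which $\gext$ is $\delta$-close to Minkowski, one compares the $g$-cones with slightly narrowed and widened constant cones to show that a point $x$ suitably below $p$ (e.g.\ on the coordinate time axis) satisfies $w\ll x\ll \iota(\gamma(s))$ for some $w\in\iota(M)$ and $s$ close to $0$; the portion of the resulting timelike curve lying in $\iota(M)$ pulls back to a causal curve in $M$ imprisoned in the compact set $J^+(\iota^{-1}(w))\cap J^-(\gamma(s))$, and since inextendible causal curves in a globally hyperbolic spacetime leave every compact set, this forces $x\in\iota(M)$. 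Running this over all points of the geodesic strictly below $p$ is what yields $\sigma([-1,0))\subset\iota(M)$ and hence $p\in\partial^+\iota(M)$; no such mechanism appears in your proposal. (Your justification of the final clause is also too quick --- achronality of $F_=$ alone does not exclude the causal curve from accumulating above the graph of $f$, since points above the graph are not a priori outside $\iota(M)$ --- but this is secondary to the main gap.)
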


To establish that given a regular enough extension of a globally hyperbolic spacetime for which the past boundary is empty $N:=\iota(M)\cup \pr^{+}\iota(M)$ is a regular g-boundary extension it only remains to show that all sets $O^N_{X,r}$ are open (in the subspace topology $\tau_s$ induced on $N$ from $N \subset \mext$) and  for any $p\in \pr^{+}\iota(M)$ and any future directed timelike geodesic $\gamma:[0,1)\to M$ with $p=\lim_{t\to 1^-} \iota(\gamma(t))$ the collection $\{O^N_{\dot{\gamma}(1-\frac{1}{n}),\frac{1}{m}}: n,m\in \mathbb{N}\}$ is a neighborhood basis of $p$ (for $\tau_s$). 

\begin{lemma} \label{lem:compatibility:regularityOK} Let $(M,g)$ be a globally hyperbolic $C^2$ spacetime and $(\mext,\gext)$ a $C^2$ 
spacetime extension with empty past boundary. Set $N:=\iota(M)\cup \pr^{+}\iota(M) \subset \mext$. Let $p\in \pr^{+}\iota(M)$ and let $\gamma:[0,1)\to M$ be a future directed timelike geodesic with $p=\lim_{t\to 1^-} \iota(\gamma(t))$. The collection $\{O^N_{\dot{\gamma}(1-\frac{1}{n}),\frac{1}{m}}: n,m\in \mathbb{N}\}$ is a $\tau_s$-neighborhood basis of $p$. Further, any $O^N_{X,r}$ is $\tau_s$-open.
\end{lemma}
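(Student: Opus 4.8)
The plan is to establish the two assertions in the order \emph{openness first, neighbourhood basis second} (the basis property uses that each $O^N_{X,r}$ is open, since a basis element must be an open neighbourhood of $p$). Throughout I pass freely between $M$-geodesics and their images in $\mext$: as $\iota$ is an isometric embedding into the genuine ($C^2$) spacetime $\mext$, for each $Y\in B_r(X)\cap T_tM$ the inextendible $M$-geodesic $\gamma_Y$ satisfies $\iota\circ\gamma_Y=\tilde\sigma_Y|_{(0,b_Y)}$, where $\tilde\sigma_Y$ is the maximal ($C^1$-in-initial-data) $\mext$-geodesic with $\tilde\sigma_Y(0)=\iota(\pi^{TM}Y)$, $\dot{\tilde\sigma}_Y(0)=d\iota(Y)$. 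If $\lim_{t\to b_Y^-}\iota\gamma_Y(t)$ exists in $N$ it equals $\tilde\sigma_Y(b_Y)$, and since $\gamma_Y$ cannot be continued in $M$ this limit lies in $\pr\iota(M)$, hence (being a future limit of a timelike curve) in $\pr^+\iota(M)$. I shall lean on Lemma~\ref{lem:no_further_intersections} (a future timelike curve emanating from $\iota(M)$ meets $\pr^+\iota(M)$ at most once and then leaves $N$) together with achronality of $\pr^+\iota(M)$ (the time-reversed Theorem~\ref{theo:pastboundary}), and on the future boundary charts $\varphi:V\to\rec$ of Proposition~\ref{Sbierskifuturechart}, in which $\gext$ is close to Minkowski, $\pr^+\iota(M)$ is the achronal Lipschitz graph $F_=$, the interior $\iota(M)$ is the side $F_<$, and $\mext\setminus N$ is the side above the graph, which I denote $F_>$.

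For openness, interior points are immediate: $O^N_{X,r}\cap\iota(M)=O^{\textrm{int}}_{X,r}=\iota(O^M_{X,r})$ is open in $\iota(M)$, hence in $N$ (Remark~\ref{rem:OM_def_and_open}). So fix a boundary point $p=\tilde\sigma_{Y_0}(b_{Y_0})\in O_{X,r}^{\pr}$ with $Y_0\in B_r(X)\cap T_tM$. To avoid conjugate points along the (possibly long) geodesic I restart near the boundary: choose $t_*\in(0,b_{Y_0})$ with $\varepsilon_0:=b_{Y_0}-t_*$ small, set $q_*:=\gamma_{Y_0}(t_*)$ and $Y_*:=\dot\gamma_{Y_0}(t_*)$. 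For $\varepsilon_0$ small the short-time flow-out map $q\mapsto \exp^{\mext}_{\iota(q)}\!\bigl(\varepsilon_0\,d\iota(V(q))\bigr)$, with $V$ a fixed vector field near $q_*$ extending $Y_*$, is a local diffeomorphism at $q_*$ (its differential is $C^0$-close to the injective $d\iota$), so its image is an open neighbourhood $W\ni p$ in $\mext$. Every point of $W$ equals $\tilde\sigma_{Y'}(t_*+\varepsilon_0)$ for $Y'=\Phi^{-t_*}_M(q,V(q))$ (backward $M$-geodesic flow); since $Y'$ depends continuously on $q$ and $\Phi^{-t_*}_M(Y_*)=Y_0\in B_r(X)$, shrinking $W$ ensures $Y'\in B_r(X)\cap T_tM$. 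Finally $W\cap N\subset O^N_{X,r}$: for $z=\tilde\sigma_{Y'}(t_*+\varepsilon_0)\in N$, Lemma~\ref{lem:no_further_intersections} forces $b_{Y'}\ge t_*+\varepsilon_0$, so either $z\in\iota(M)$ lies on $\iota\gamma_{Y'}$ (hence in $O^{\textrm{int}}_{X,r}$), or $z\in\pr^+\iota(M)$ with $b_{Y'}=t_*+\varepsilon_0$ and $z=\lim_{t\to b_{Y'}^-}\iota\gamma_{Y'}(t)\in O_{X,r}^{\pr}$. Thus $W\cap N$ is a $\tau_s$-neighbourhood of $p$ inside $O^N_{X,r}$.

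For the basis property, fix $\gamma:[0,1)\to M$ with $\lim_{t\to1^-}\iota\gamma(t)=p$ and let $\sigma$ be its smooth $\mext$-geodesic extension, $\sigma(1)=p$. Taking $Y=\dot\gamma(1-\tfrac1n)$ in Definition~\ref{def:thickenings} gives $p\in O^N_{\dot\gamma(1-1/n),1/m}$ for all $n,m$, and by the previous paragraph these sets are open, hence open neighbourhoods of $p$; it remains to show they eventually lie inside any fixed open $\tilde U\ni p$ with $\tilde U\subset V$ and $U=\tilde U\cap N$. Suppose not: then there are $n_k\to\infty$ and points $z_k=\tilde\sigma_{Y_k}(s_k)\in O^N_{\dot\gamma(1-1/n_k),1/m_k}\setminus\tilde U$ with $Y_k\in B_{1/m_k}(\dot\gamma(1-\tfrac1{n_k}))\cap T_tM$ and $s_k\in(0,b_{Y_k}]$. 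For each $n_k$ we may, using continuity of $\iota$ and $d\iota$ at the fixed base point $\gamma(1-\tfrac1{n_k})$, take the counterexample's radius $1/m_k$ small enough that the $\mext$-initial data $(\iota\pi^{TM}Y_k,d\iota Y_k)$ lie within $1/n_k$ of $(\sigma(1-\tfrac1{n_k}),\dot\sigma(1-\tfrac1{n_k}))$; since these converge to $(p,\dot\sigma(1))$, continuous dependence yields $\tilde\sigma_{Y_k}\to\sigma(1+\cdot)$ uniformly on compact parameter intervals. Because $\sigma$ (timelike) crosses the achronal graph $F_=$ transversally at $p$, $\sigma(1+s)\in F_>$ stays strictly above the graph, bounded away from it, for $s\in[\delta/2,\delta]$; hence for $k$ large $\tilde\sigma_{Y_k}$ is already outside $N$ on $[\delta/2,\delta]$, and Lemma~\ref{lem:no_further_intersections} gives $b_{Y_k}<\delta$, so $s_k<\delta$ is bounded. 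Passing to a subsequence $s_k\to s_*\in[0,\delta]$ gives $z_k\to\sigma(1+s_*)$. If $s_*=0$ then $z_k\to p\in\tilde U$, contradicting $z_k\notin\tilde U$; if $s_*>0$ then $\sigma(1+s_*)\in F_>$, which is disjoint from $\overline N\cap V=F_<\cup F_=$, contradicting $z_k\in N$. Either way we obtain a contradiction, so $O^N_{\dot\gamma(1-1/n),1/m}\subset\tilde U\cap N=U$ for suitable $n,m$, proving the basis property.

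The routine ingredients are the dispatch of interior points and the $C^1$-dependence of geodesics on initial data in the $C^2$ spacetime $\mext$. The genuine obstacle is the uniform \emph{prompt crossing} underlying the basis argument: one must rule out that some nearby generating geodesic $\gamma_{Y_k}$ lingers in $\iota(M)$ (i.e.\ has large $b_{Y_k}$) and thereby contributes points of $O^N$ far from $p$. This is exactly where the achronal-graph structure of the future boundary chart and the transversality of the timelike crossing of $\sigma$ at $p$ are essential: together they force $b_{Y_k}\to0$ and confine all of $\gamma_{Y_k}((0,b_{Y_k}))$ to a shrinking neighbourhood of $p$.
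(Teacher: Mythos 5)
Your proof is correct, and it shares the paper's core ingredients --- extending $M$-geodesics to $\mext$-geodesics $\gamma^{\textrm{ext}}_Y$, Lemma \ref{lem:no_further_intersections} to pin down where these meet $N$, and continuous dependence of geodesics on initial data --- but the execution is genuinely different in both halves. The paper organizes everything around the set identity $O^N_{X,r}=O^{\textrm{ext}}_{X,r}\cap N$ (and its truncated variant $O^N_{X,r}=O^{\textrm{ext},\varepsilon}_{X,r}\cap N$): openness is then inherited from openness of $O^{\textrm{ext}}_{X,r}$ in $\mext$ via the exponential-map argument of Remark \ref{rem:OM_def_and_open}, and the basis property follows by showing directly that the sets $O^{\textrm{ext},\varepsilon}_{\dot{\gamma}(1-\frac{1}{n}),\frac{1}{m}}$ shrink into any prescribed $\mext$-neighborhood of $p$. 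You never form these identities; instead you argue pointwise, proving openness with a short-time flow-out local diffeomorphism at each boundary point of the thickening, and the basis property by a sequential compactness/contradiction argument run inside a future boundary chart of Proposition \ref{Sbierskifuturechart}, using that the region above the Lipschitz graph is open and disjoint from $N$ to force $b_{Y_k}\le\delta/2$ and hence to confine the offending points to a shrinking neighborhood of $p$. The paper's route is shorter and produces the reusable identity $O^N_{X,r}=O^{\textrm{ext}}_{X,r}\cap N$; your route has the merit of making fully explicit the ``prompt crossing'' point --- that $Y\mapsto b_Y$ cannot jump upward near $X$ because the extended geodesic leaves $\overline{\iota(M)}$ immediately after hitting $\pr^{+}\iota(M)$ --- which in the paper is absorbed into the rather terse continuous-dependence claim that $b^{\textrm{ext}}_Y>b_Y+\varepsilon$ and $\gamma^{\textrm{ext}}_Y([0,b_Y+\varepsilon])\subset U$ for all $Y$ near $X$. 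Both arguments rely equally on the standard (and in both cases unstated) fact that a geodesic in a $C^2$ spacetime which converges to a point extends as a geodesic, so that the limits constituting $O^{\pr}_{X,r}$ are genuine points $\gamma^{\textrm{ext}}_Y(b_Y)$ with $b_Y<b^{\textrm{ext}}_Y$.
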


\begin{proof}
  We first show that any $O^N_{X,r}$ is $\tau_s$-open.
 Take an arbitrary $X\in T_tM, r>0$. Then we define 
    $$O^{\textrm{ext}}_{X,r}\coloneqq \{\gamma^{\textrm{ext}}_{Y}((0,b^{\textrm{ext}}_{Y})) : Y\in B_{r}(X)\cap T_{t}M\}$$
    where $\gamma^{\textrm{ext}}_{Y} : (0,b^{\textrm{ext}}_{Y})\rightarrow \mext$ is the unique future inextendible timelike geodesic in $\mext$ with initial data $\dot{\gamma}^{\textrm{ext}}(0)=(T\iota)(Y)\in T\mext \cap T_t\mext$. Note that $b_Y\leq b^{\textrm{ext}}_{Y} $ and $\gamma^{\textrm{ext}}_{Y}=\iota\circ \gamma_Y$ on $(0,b_Y)$. If $b_Y< b^{\textrm{ext}}_{Y}$, then by Lemma \ref{lem:no_further_intersections}, $\gamma^{\textrm{ext}}_{Y}(b_Y)\in \pr^+\iota(M)=N\setminus\iota(M)$ and $\gamma^{\textrm{ext}}_{Y}(t)\notin \iota(M)\cup \pr^{+}\iota(M)=N$ for any $t\in (b_Y, b^{\textrm{ext}}_{Y})$. Therefore, $$O^N_{X,r}=O^{\textrm{ext}}_{X,r}\cap N.$$ This, together with openness of $O^{\textrm{ext}}_{X,r}$ in $\mext$ (cf. Remark \ref{rem:OM_def_and_open} noting that\footnote{Assuming the Riemannian background metric $h^{\textrm{ext}}$ on $T\mext$ is chosen to satisfy $B^{h^{\textrm{ext}}}_r(X')=(T\iota)(B^h_r(X))$, but for given $X,r$ this can always be achieved. Else one could also use different radii to obtain appropriate subset relations.}
 $O^{\textrm{ext}}_{X,r}$ as defined above of course equals $O^{\mext}_{X',r}$ as defined in \eqref{eq:Mthickening} for $X':=(T\iota)(X)\in T_t\mext$), implies that $O^N_{X,r}$ is open in the subspace topology on $N$.

 To show that the collection $\{O^N_{\dot{\gamma}(1-\frac{1}{n}),\frac{1}{m}}: n,m\in \mathbb{N}\}$ is a $\tau_s$-neighborhood basis of $p$  note first that by the above we also have $O^N_{\dot{\gamma}(1-\frac{1}{n}),\frac{1}{m}}=O^{\textrm{ext},\varepsilon}_{\dot{\gamma}(1-\frac{1}{n}),\frac{1}{m}}\cap N$ for any $\varepsilon>0$, where 
 $$O^{\textrm{ext},\varepsilon}_{X,r}:=\{\gamma^{\textrm{ext}}_{Y}((0,\min(b_Y+\varepsilon,b^{\textrm{ext}}_{Y}))) : Y\in B_{r}(X)\cap T_{t}M\}$$ 
 So the problem reduces to arguing that the sets $O^{\textrm{ext},\varepsilon}_{\dot{\gamma}(1-\frac{1}{n}),\frac{1}{m}}$ with $n,m\in \mathbb{N},\varepsilon >0$ form a neighborhood basis for $p=\gamma^{\textrm{ext}}(1)$ in $\mext$, that is for any open set $U$ around $p$ in $\mext$ there exist $n,m\in \mathbb{N}$ and $\varepsilon>0$ such that  $O^{\textrm{ext},\varepsilon}_{\dot{\gamma}(1-\frac{1}{n}),\frac{1}{m}}$ is open and $O^{\textrm{ext},\varepsilon}_{\dot{\gamma}(1-\frac{1}{n}),\frac{1}{m}}\subset U$. To see this, first fix $n$ such that $\gamma^{\textrm{ext}}([1-\tfrac{1}{n}, 0])\subset U$. Then set $X:=\dot{\gamma}(1-\tfrac{1}{n})$ and choose $\varepsilon>0$ such that $b_X+\varepsilon<b_X^{\textrm{ext}}$ and $\gamma_X^{\textrm{ext}}([0,b_X+\varepsilon])\subset U$. Finally, by continuous dependence of $\mext$-geodesics on their initial data, there exists a neighborhood $V$ of $(T\iota)(X)$ in $T\mext$ such that $V\subset (T\iota)(TM)$, $b_Y^{\textrm{ext}}>b_Y+\varepsilon$ for all $Y\in T\iota^{-1}(V)$ and $\gamma^{\textrm{ext}}_Y([0,b_Y+\varepsilon])\subset U$ for all $Y\in T\iota^{-1}(V)$. Now we just need to choose $m$ with $B_{\frac{1}{m}}(X)\subset T\iota^{-1}(V)$ and see that $O^{\textrm{ext},\varepsilon}_{X,\frac{1}{m}}\subset U$ is the desired neighborhood. \end{proof}

Collecting results we have shown

\begin{proposition}\label{prop:compatibility}
    Let $(M,g)$ be a globally hyperbolic $C^2$ spacetime and $(\mext,\gext)$ a $C^2$ 
spacetime extension with empty past boundary, then $N:=\iota(M)\cup \pr^{+}\iota(M)$ with the subspace topology induced from $\mext$ is a topological manifold with boundary and a regular future g-boundary extension of $(M,g)$.
\end{proposition}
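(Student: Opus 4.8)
The plan is to observe that this Proposition is essentially an assembly of the three preceding lemmas, so the proof reduces to checking that each clause of Definition \ref{def:regular-g-boundary-ext} is supplied by one of them under the present $C^2$ hypotheses. There is no new geometry to do here; the substantive work has already been carried out.

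First I would invoke the (unnamed) Lemma immediately preceding Lemma \ref{lem:geodesic_to_boundary}, which requires only a $C^0$ extension, to conclude that $N:=\iota(M)\cup \pr^{+}\iota(M)$, equipped with the subspace topology induced from $\mext$, is a topological manifold with boundary and a future boundary extension of $(M,g)$. Concretely, this was obtained by placing a future boundary chart $\varphi:V\to\rec$ from Proposition \ref{Sbierskifuturechart} around each $p\in\pr^{+}\iota(M)$, using Lemma \ref{lem:no_further_intersections} to see that points below the graph of $f$ lie in $\iota(M)$ while points above it lie outside $N$, and then straightening the Lipschitz graph via $\phi:(x_0,x)\mapsto(x_0-f(x),x)$ to land in a relatively open subset of $[0,\infty)\times\mathbb{R}^{d-1}$. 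At this stage I would also record the identifications $\mathrm{int}(N)=\iota(M)$ and $\pr N=\pr^{+}\iota(M)$: these follow from the chart description (boundary points map to $\{0\}\times\mathbb{R}^{d-1}$, interior points to the open part) together with $\iota$ being a topological embedding onto the open set $\iota(M)$.

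Next I would verify condition (1) of Definition \ref{def:regular-g-boundary-ext}. Since we now work with a $C^2$ extension and $(M,g)$ is globally hyperbolic, Lemma \ref{lem:geodesic_to_boundary} applies and shows that every $p\in\pr^{+}\iota(M)=\pr N$ is the endpoint of a future directed timelike \emph{geodesic} $\gamma:[0,1)\to M$ with $p=\lim_{t\to1^-}\iota(\gamma(t))$, which is exactly clause (1). For condition (2) I would cite Lemma \ref{lem:compatibility:regularityOK}, which already proves both halves: that every $O^N_{X,r}$ is $\tau_s$-open, and that for any such boundary geodesic $\gamma$ the collection $\{O^N_{\dot{\gamma}(1-\frac{1}{n}),\frac{1}{m}}:n,m\in\mathbb{N}\}$ is a $\tau_s$-neighborhood basis of $p$. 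With (1) and (2) in hand, $N$ satisfies Definition \ref{def:regular-g-boundary-ext} and is therefore a regular future g-boundary extension.

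Because all the real content is delegated to the earlier lemmas, there is no genuine obstacle in this final step; the only points requiring care are bookkeeping ones. I would double-check that the manifold-with-boundary structure lines up with the future/past boundary decomposition — so that the quantifier ``for any $p\in\pr N$'' in the definition is met precisely by the quantifier ``for any $p\in\pr^{+}\iota(M)$'' in Lemmas \ref{lem:geodesic_to_boundary} and \ref{lem:compatibility:regularityOK} — and that it is exactly the upgrade from a $C^0$ to a $C^2$ extension that promotes the mere future boundary extension of the first lemma to a \emph{regular} one, since both Lemma \ref{lem:geodesic_to_boundary} and Lemma \ref{lem:compatibility:regularityOK} invoke the $C^2$ regularity essentially.
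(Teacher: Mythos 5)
Your proposal is correct and matches the paper's own treatment exactly: the paper states Proposition \ref{prop:compatibility} immediately after the phrase ``Collecting results we have shown,'' i.e.\ its proof is precisely the assembly of the unnamed manifold-with-boundary lemma, Lemma \ref{lem:geodesic_to_boundary} for condition (1), and Lemma \ref{lem:compatibility:regularityOK} for condition (2) that you describe. Your extra bookkeeping remarks (identifying $\mathrm{int}(N)=\iota(M)$ and $\pr N=\pr^{+}\iota(M)$, and noting where the $C^2$ hypothesis enters) are accurate and, if anything, slightly more explicit than the paper.
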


\section{Ordering relation and existence of maximal elements}\label{sec:setmax}

\subsection{Partial ordering and equivalence of regular future g-boundary extensions}
In this short section we introduce an equivalence relation on the collection of regular future g-boundary extensions.

\begin{definition}\label{def:equiv of Ns} 
    Let $(M,g)$ be a $C^{2}$ spacetime and $(N_1,\iota_1),(N_2,\iota_2)$ be two regular future g-boundary extensions of $M$. We say $(N_1,\iota_1)\cong (N_2,\iota_2)$ if there exists a homeomorphism (of topological manifolds with boundary) $\psi_{12}:N_1\to N_2$ that is compatible with the homeomorphisms $\iota_1:M\to \mathrm{int}(N_1)$ and $\iota_2:M\to \mathrm{int}(N_2)$, i.e., such that 
    $$\iota_2^{-1}\circ \psi_{12}\circ \iota_1 : M\to M$$ is the identity map for $(M,g)$. In other words, we demand that $\iota_2\circ \iota_1^{-1}:\iota_1(M) \to \iota_2(M)$ extends to a homeomorphism $\psi_{12}:N_1\to N_2$.
\end{definition}

Clearly this is reflexive, symmetric and transitive, so this relation defines an equivalence relation.  We denote the equivalence classes with $[(N,\iota)]$ and define the set of all equivalence classes as
\begin{equation}\label{eq:defIset}
    \Iset:=\{[(N,\iota)]: (N,\iota)\; \mathrm{is\,a\,regular\,future\,g-boundary\,extension}\}.
\end{equation} 

\begin{remark}
\label{rem:Iset_is_set}
Let us briefly justify why $\Iset$ is small enough to be a set. While the class of all $n$-dimensional topological manifolds with boundary is a proper class, the set of all $n$-dimensional topological manifolds with boundary \emph{up to homeomorphism} is indeed a set as any topological manifold with boundary can be embedded into $\R^m$ for $m$ sufficiently large. 
While we don't quite identify up to homeomorphism, i.e., given $(N_1,\iota_1)$ and $(N_2,\iota_2)$ just $N_1\cong_{\mathrm{hom}} N_2$ is insufficient to ensure $[(N_1,\iota_1)]=[(N_2,\iota_2)]$ as also the embeddings $\iota_1,\iota_2$ have to be compatible, the embeddings themselves "fix" the remaining freedom in the structure of $N$ in relation to the given fixed $M$ (note that $M$ is fixed as a set and not just up to diffeomorphism). More precisely, considering the set $\mathcal{E}$ given as
$$\{(E,i): E \subset \R^m, i\subset M\times \R^m  \;\mathrm{for\;an}\;m\in \mathbb{N} \mathrm{\;with\;}i=\mathrm{graph}(f_i)\mathrm{\;for\;a\;function\;} f_i:M\to   E   \}/\sim_h, $$
where $(E,i)\sim_h (\tilde{E},\tilde{i})$ if and only if there exists a homeomorphism $e: E\to \tilde{E} $ (where $E$ and $\tilde{E}$ are understood to be carrying the trace topology) such that $e\circ f_i=f_{\tilde{i}}$, it is readily apparent that $[(N,\iota)]\mapsto [(\Phi(N),\mathrm{graph}(\Phi\circ \iota ) )]_h$  for any embedding $\Phi$ of $N$ into $\R^m$ provides an injection from $\Iset$ into the set $\mathcal{E}$: This map is independent of the choice of embedding $\Phi$ and of the choice of representative $(N,\iota)$ of $[(N,\iota)]$. Injectivity is also easily checked: If $[(\Phi_1(N_1),\mathrm{graph}(\Phi_1\circ \iota_1 ) )]_h=[(\Phi_2(N_2),\mathrm{graph}(\Phi_2\circ \iota_2 ) )]_h$, then there exists a homeomorphism $e: \Phi_1(N_1)\to \Phi_2(N_2)$ with $e\circ \Phi_1\circ \iota_1= \Phi_2\circ \iota_2$, so $\psi_{12}:=\Phi_2^{-1}\circ e\circ  \Phi_1$ is a homeomorphism between $N_1$ and $N_2$ satisfying $\psi_{12}|_{\iota_1(M)}=\iota_2\circ \iota_1^{-1}$ and hence $(N_1,\iota_1)\cong (N_2,\iota_2)$. 

\end{remark}

 To equip $\Iset$ with a partial order we define
\begin{definition}\label{def:order on Ns} 
Let $(N_1,\iota_1),(N_2,\iota_2)$ be regular future g-boundary extensions.	We say $(N_1,\iota_1) \lesssim (N_2,\iota_2)$ if there exists an embedding (of topological manifolds with boundary) $\psi_{12}:N_1\to N_2$ compatible with $\iota_1,\iota_2$, i.e., such that 
	$$\iota_2^{-1}\circ \psi_{12}\circ \iota_1 : M\to M$$ is the identity map for $(M,g)$. In other words, we demand that $\iota_2\circ \iota_1^{-1}:\iota_1(M) \to \iota_2(M)$ extends to an embedding $\psi_{12}:N_1\to N_2$.
	
For two equivalence classes $[(N_1,\iota_1)]$ and $[(N_2,\iota_2)]$ we say $[(N_1,\iota_1)]\leq[(N_2,\iota_2)]$ if there exist  representatives $(N_1,\iota_1)$ and $(N_2,\iota_2)$ of $[(N_1,\iota_1)]$ resp.  $[(N_2,\iota_2)]$ such that $(N_1,\iota_1) \lesssim (N_2,\iota_2)$. 
\end{definition}

Note that $[(N_1,\iota_1)]\leq [(N_2,\iota_2)]$ if and only if $(N_1,\iota_1) \lesssim (N_2,\iota_2)$ for all representatives $(N_1,\iota_1)$ and $(N_2,\iota_2)$ of $[(N_1,\iota_1)]$ resp. $[(N_2,\iota_2)]$: 
Let $(N_1,\iota_1)$ and $(N_2,\iota_2)$ be representatives of $[(N_1,\iota_1)]$ and $[(N_2,\iota_2)]$ respectively such that $(N_1,\iota_1) \lesssim (N_2,\iota_2)$. We show that this implies that for any other representatives $(N'_1,\iota'_1)$ of $[(N_1,\iota_1)]$ and $(N'_2,\iota'_2)$ of $[(N_2,\iota_2)]$ it holds that $(N'_1,\iota'_1) \lesssim (N'_2,\iota'_2)$. Since $(N_1,\iota_1)\cong (N'_1,\iota'_1)$ and $(N_2,\iota_2)\cong (N'_2,\iota'_2)$ there exists a homeomorphism $\psi_{1'1}: N'_{1}\rightarrow N_{1}$ compatible with $\iota_{1}$ and $\iota'_{1}$ and a homeomorphism $\psi_{22'}: N_{2}\rightarrow N'_{2}$ compatible with $\iota_{2}$ and $\iota'_{2}$. Furthermore, as $(N_1,\iota_1) \lesssim (N_2,\iota_2)$, there exists an embedding $\psi_{12}: N_{1}\rightarrow N_{2}$ compatible with $\iota_{1}$ and $\iota_{2}$. We define the map $\psi_{1'2'}\coloneqq \psi_{22'}\circ\psi_{12}\circ\psi_{1'1} : N'_{1}\rightarrow N'_{2}$, which, by construction is an embedding (it is the composition of embeddings). It is clearly compatible with $\iota'_{1}$ and $\iota'_{2}$, which can be easily verified using that $\psi_{22'}|_{\mathrm{int}(N_2)}=\iota'_{2}\circ\iota_{2}^{-1}$, $\psi_{12}|_{\mathrm{int}(N_1)}=\iota_{2}\circ\iota_{1}^{-1}$ and $\psi_{1'1}|_{\mathrm{int}(N'_1)}=\iota_{1}\circ(\iota'_{1})^{-1}$. Hence, $(N'_1,\iota'_1) \lesssim (N'_2,\iota'_2)$. As the representatives $(N'_1,\iota'_1)$ and $(N'_2,\iota'_2)$ were chosen arbitrarily, we can conclude that $(N_1,\iota_1) \lesssim (N_2,\iota_2)$ for all representatives $(N_1,\iota_1)$ and $(N_2,\iota_2)$ of $[(N_1,\iota_1)]$ and $[(N_2,\iota_2)]$ respectively.

Hence $\leq $ indeed defines a partial ordering on the set $\Iset$ of all equivalence classes
of regular future g-boundary extensions.

\begin{remark}\label{rem:relation_w_extensions_in_marco_thesis} For spacetime extensions $\mext$ we had considered the following definition in the second author's Master thesis (see \cite{vdbs}). Let $\iota_{1} : M\rightarrow \Mextone$ and $\iota_{2} : M\rightarrow \Mexttwo$ be spacetime extensions (i.e. such as in Definition \ref{def:classicalextension}) of $(M,g)$. In this Remark, no assumption on the regularity class nor on the causal properties (e.g. global hyperbolicity) of $(M,g)$ or the extensions is made. We define the following relations: 
\begin{itemize}
    \item We say that $(\Mextone,\iota_{1})=_{\pr}(\Mexttwo,\iota_{2})$ provided there exist open neighborhoods $U_{1}$ and $U_{2}$ satisfying that $\partial \iota_{1}(M)\subset U_{1}$, $\partial \iota_{2}(M)\subset U_{2}$ and $\iota_{1}^{-1}(\iota_{1}(M)\cap U_{1})\subset \iota_{2}^{-1}(\iota_{2}(M)\cap U_{2})$, and an embedding $\psi_{12}: U_{1}\rightarrow U_{2}$ whose restriction $\psi_{12} : \overline{\iota_{1}(M)}\cap U_{1}\rightarrow \overline{\iota_{2}(M)}\cap U_{2}$ is surjective (and thus a homeomorphism) and which is compatible with the extensions, i.e. such that 
    \begin{equation*}
    \label{eq:Max1}
        \iota_{2}^{-1}\circ \psi_{12}\circ \iota_{1} : \iota_{1}^{-1}(\iota_{1}(M)\cap U) \rightarrow \iota_{1}^{-1}(\iota_{1}(M)\cap U)
    \end{equation*}
    is the identity map in $\iota_{1}^{-1}(\iota_{1}(M)\cap U_{1})\subset M$. In \cite{vdbs} the second author showed (Lemma 60 in \cite{vdbs}) that this defines an equivalence relation. We label the family of equivalence classes by $\mathcal{I}_{\textrm{ext}}$.
    \item We say that $(\Mextone,\iota_{1})\leq_{\pr}(\Mexttwo,\iota_{2})$ provided there exist open neighborhoods $U_{1}, U_{2}$ satisfying that $\partial \iota_{1}(M)\subset U_{1} $, $\partial \iota_{2}(M)\subset U_{2}$ and $\iota_{1}^{-1}(\iota_{1}(M)\cap U_{1})\subset \iota_{2}^{-1}(\iota_{2}(M)\cap U_{2})$, and an embedding $\psi_{12} : U_{1}\rightarrow U_{2}$ with $\psi_{12}(\partial\iota_{1}(M))\subset\partial\iota_{2}(M)$ and which is compatible with the extensions , i.e. such that 
    \begin{equation*}
        \iota_{2}^{-1}\circ\psi_{12}\circ \iota_{1} :  \iota_{1}^{-1}(\iota_{1}(M)\cap U)\rightarrow \iota_{1}^{-1}(\iota_{1}(M)\cap U)
    \end{equation*}
    is the identity map in $\iota_{1}^{-1}(\iota_{1}(M)\;\cap \;U_{1})\subset M$. In \cite{vdbs} the second author showed  (Lemma 64 in \cite{vdbs}) that this induces a partial ordering on $\mathcal{I}_{\textrm{ext}}$.

\end{itemize}
These proofs are relatively straightforward but tedious to write down precisely (as one has to constantly change the neighborhoods one is working on). This changing of neighborhoods becomes an issue when considering an (uncountable) totally ordered subfamily of equivalence classes of extensions $\{[\ia]\}_{\alpha\in\mathcal{A}}$ and trying to construct an upper bound for this subfamily by 'gluing' together all $U_{\alpha}\cup\ia(M)$ and identifying points appropriately, as we already discussed in the paragraph above Section \ref{sec:topologicalremarks}. 
Let us remark that the relations ''$=_{\pr}$" and ''$\leq_{\pr}$" are compatible with Definitions \ref{def:equiv of Ns} and \ref{def:order on Ns} above: If $(\Mextone,\iota_{1}),(\Mexttwo,\iota_{2})$ have empty past boundary and $M$ is globally hyperbolic, then $(\Mextone,\iota_{1})=_{\pr}(\Mexttwo,\iota_{2})$, resp.\ $\leq_{\pr}$, then the corresponding future boundary extensions $N_1$ and $N_2$ satisfy $N_1\cong N_2$, resp.\ $N_1\lesssim N_2$ (note that even if the future boundary extensions are not regular, the relations $\cong $ and $\lesssim$ are well defined).

\end{remark}

\subsection{Existence of set theoretic maximal elements}\label{subsec:settheoreticmax} 

In this section we will show that the set of equivalence classes of regular future g-boundary extensions $\Iset$ of a given spacetime $(M,g)$ contains at least one set-theoretic maximal element proceeding via a standard Zorn's lemma proof. In other words, we show that there exist upper bounds for any arbitrary totally ordered subset $\Jset=\{[N_{\alpha}]\}_{\alpha\in \mathcal{A}}\subset \Iset$ of regular future g-boundary extensions.

This is organized as follows. First, a candidate for a representative of an upper bound, $\Nupp$, for any such totally ordered $\Jset$ is constructed by gluing together all the $\Na$'s: We take a disjoint union, identify points via the embeddings $\psiab$ from the ordering relation and take $\Nupp$ to be the quotient space (with the quotient topology) and define a natural map $\iupp :M\to \Nupp$ via $\iupp=\pi\circ \ia$ for any $\ia$. 

Next, we need to show that this quotient space belongs to $\Iset$, i.e. is itself a regular future g-boundary extension for $M$. As quotient topologies are in general quite badly behaved, especially with respect to separation axioms and potentially second countability (if $\{[N_{\alpha}]\}_{\alpha\in \mathcal{A}}$ is not countable), some care is necessary. The order relation straightforwardly gives us that $\pi$ is an open map (cf.\ Lemma \ref{lem:pi_Nmax_open}) which implies that $\Nupp $ is indeed Hausdorff (cf. Lemma \ref{lem:Nmax_Hausdorff}) and that $((\Nupp, \tau_{q}),\iupp)$ is a candidate for a future boundary extension. For second countability we first establish the ''regularity'' part of Definition \ref{def:regular-g-boundary-ext}, i.e., we show that $((\Nupp, \tau_{q}),\iupp)$ is  a candidate for a regular future g-boundary extension (cf. Lemma \ref{lem:Nmax_regular_candidate}). Once this is done, second countability follows from Lemma \ref{lem:AA2}. Lastly, openness of $\pi$ straightforwardly allows us to project charts for $\Na$ onto the quotient $\Nupp$ showing that it is indeed a topological manifold with boundary (cf. Proposition \ref{prop:Nmax_is_reg_future_g_boundary_ext}). 

Finally, that $[\Nupp]$ indeed is an upper bound for the totally ordered subset of extensions $\Jset=\{[N_{\alpha}]\}_{\alpha\in \mathcal{A}}$ follows directly from our construction and hence Zorn's Lemma implies that the partially ordered subset $\mathcal{I}$ has a maximal element.

So, let $(M,g)$ be a $C^{2}$ spacetime and let $\mathcal{J} \coloneqq\{[\Na]\}_{\alpha\in\mathcal{A}}$ for some index set $\mathcal{A}$ be a totally ordered subset of equivalence classes of regular future g-boundary extensions. We choose 
any family $\{\Na\}_{\alpha\in\mathcal{A}}$ of representatives and set

\begin{equation}
    \label{eq:M'}
    N'\coloneqq \bigsqcup_{\alpha\in \mathcal{A}}\Na
\end{equation}
\begin{equation}
    \label{eq:Mmax}
    \Nupp\coloneqq N'/\sim,
\end{equation}
where the equivalence relation $\sim$ is defined as follows for two arbitrary points $p\in \Na \subset N'$ and $q\in \Nb \subset N'$: 
\begin{equation}
\label{eq:equivalencerelation}
\begin{split}
    p\sim q 
\end{split}
\quad:\;\longleftrightarrow\quad
\begin{split}
    \begin{cases}
        q=\psiab(p) \quad\textrm{if} \;\Na\lesssim\Nb\\
        p=\psiba(q) \quad\textrm{if} \;\Nb \lesssim \Na
    \end{cases}
\end{split}
\end{equation}
Note that this is indeed an equivalence relation and implies $p\sim q \iff p=q$ if $p,q\in \Na$. We will denote the quotient map by $\pi$, i.e., $\pi : N'\rightarrow \Nupp$, $p\mapsto \pi(p)$.  We endow $\Nupp$ with the quotient topology $\tau_{q}$, i.e., $U\subset\Nupp$ is open if and only if $\pi^{-1}(U)\subset N'$ is open. Next we define a map $\iupp : M\to \Nupp$ via
\begin{align}
	\iupp(p) &\coloneqq \pi(\ia(p)) 
\end{align}
where $\alpha \in \mathcal{A}$ can be chosen arbitrarily as $\pi(\ia(p)) =\pi(\ib(p)) $ for any $\alpha,\beta \in \mathcal{A}$ since $\psiab=\ib\circ \ia^{-1}$ on $\ia(M)$ by definition of $\lesssim$.
Note that a priori both $N'$ and $\Nupp$ may depend on the choices of representatives, but this doesn't bother us for the moment as we only aim to show the existence but not necessarily uniqueness of a  regular future g-boundary extension $\Nupp$ for which $[\Nupp]$ is an upper bound for the totally ordered set $\Jset$. However, we will see in Remark \ref{rem:equiv_class_ofNmax_unique} that the equivalence class $[(\Nupp,\iupp)]$ obtained from this process is in fact independent of the chosen representatives.\\

\begin{lemma}\label{lem:pi_Nmax_open}
$\pi: N'\to (\Nupp,\tau_q)$ is an open map. 
\end{lemma}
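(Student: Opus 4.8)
The plan is to use the standard criterion for openness of a quotient map. Since $\Nupp$ carries the quotient topology, a subset $V\subseteq \Nupp$ is open precisely when $\pi^{-1}(V)$ is open in $N'$, so $\pi$ is open if and only if the $\sim$-saturation $\pi^{-1}(\pi(U))$ of every open $U\subseteq N'$ is again open in $N'$. First I would note that, since $N'=\bigsqcup_{\alpha\in\mathcal{A}}\Na$ is a disjoint union, a set is open in $N'$ iff its intersection with each $\Na$ is open in $\Na$; writing $U_\alpha := U\cap \Na$ it therefore suffices to show that $\pi^{-1}(\pi(U))\cap \Nb$ is open in $\Nb$ for every $\beta$.

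Next I would compute this saturation slice directly from the definition of $\sim$. A point $q\in \Nb$ lies in $\pi^{-1}(\pi(U))$ iff $q\sim p$ for some $p\in U_\alpha$ and some $\alpha$; by total ordering of $\Jset$, for each $\alpha$ either $\Na\lesssim\Nb$ or $\Nb\lesssim\Na$, and accordingly the contribution of index $\alpha$ is $\psiab(U_\alpha)$ in the first case and $\psiba^{-1}(U_\alpha)$ in the second (the two descriptions agreeing whenever both relations hold, since then $\psiab$ and $\psiba$ are mutually inverse homeomorphisms, being continuous maps that restrict to inverse homeomorphisms on the dense interiors). Hence
\begin{equation*}
\pi^{-1}(\pi(U))\cap \Nb=\bigcup_{\alpha:\,\Na\lesssim\Nb}\psiab(U_\alpha)\;\cup\;\bigcup_{\alpha:\,\Nb\lesssim\Na}\psiba^{-1}(U_\alpha).
\end{equation*}
A union of open sets is open, and the preimage terms $\psiba^{-1}(U_\alpha)$ are open by continuity of $\psiba$, so the entire reduction comes down to showing that each image term $\psiab(U_\alpha)$ is open, i.e.\ that the compatible embeddings $\psiab$ are open maps.

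This last point is the main obstacle and is where I would spend the real effort, since a topological embedding need not in general be an open map. The idea is to exploit the compatibility condition. By definition $\psiab$ restricts on $\mathrm{int}(\Na)=\ia(M)$ to $\ib\circ\ia^{-1}$, which is a homeomorphism onto $\mathrm{int}(\Nb)=\ib(M)$; combined with injectivity of $\psiab$ this yields $\psiab^{-1}(\mathrm{int}(\Nb))=\mathrm{int}(\Na)$, and therefore $\psiab(\partial\Na)\subseteq\partial\Nb$ as well. Thus $\psiab$ is a continuous injection between $d$-dimensional topological manifolds with boundary that carries interior to interior and boundary to boundary. Openness then follows from invariance of domain in its manifold-with-boundary form: in interior charts this is the classical invariance of domain in $\R^d$ (indeed the interior restriction is already a homeomorphism onto the open set $\mathrm{int}(\Nb)$), while at a boundary point one passes to boundary charts $\phi_\alpha,\phi_\beta$ valued in the half-space $[0,\infty)\times\R^{d-1}$ and applies the half-space version of invariance of domain to the local representative $\phi_\beta\circ\psiab\circ\phi_\alpha^{-1}$, which preserves the bounding hyperplane. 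This shows that every $\psiab$ is open, so each image term $\psiab(U_\alpha)$ is open, completing the saturation computation and establishing that $\pi$ is an open map.
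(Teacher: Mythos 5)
Your proof is correct and follows essentially the same route as the paper's: using the total order to split the saturation $\pi^{-1}(\pi(U))$ slice by slice into sets of the form $\psiba^{-1}(U_\alpha)$ (open by continuity) or $\psiab(U_\alpha)$. The only place you go beyond the paper is in justifying that the compatible embeddings $\psiab$ are open maps -- via interior-to-interior/boundary-to-boundary preservation and invariance of domain for manifolds with boundary -- whereas the paper simply asserts that $\psiab(U_\alpha)$ is open; since an abstract topological embedding need not be open, this is a genuine (and correctly executed) filling-in of a detail the paper leaves implicit.
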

\begin{proof}  Fix $\alpha\in \mathcal{A}$ and any open $U_{\alpha}\subset\Na$. Then for any $\beta\in \mathcal{A}$ we have either  $\Nb \lesssim \Na$, $\Nb= \Na$ or $\Na \lesssim \Nb$. In all these cases we will show that $(\pi|_{\Nb})^{-1}(\pi(U_{\alpha}))$ is open.  This will imply that $\pi^{-1}(\pi(U_{\alpha}))=\bigsqcup_{\beta\in \mathcal{A}} (\pi|_{\Nb})^{-1}(\pi(U_{\alpha}))$ is open in $N'$, i.e., $\pi(U_{\alpha})$ is open in $(\Nupp, \tau_q)$ implying that $\pi$ is an open map (as both $\alpha$ and $U_{\alpha}$ were arbitrary).

Assume $\Nb\lesssim \Na$ and let $\psiba : \Nb\rightarrow\Na$ be the embedding. Then, by definition of $\pi$, $\pi|_{\Nb}(q)=\pi|_{\Na}(p)$ for $p\in \Na,q\in \Nb$ if and only if $p=\psiba(q)$, so $(\pi|_{\Nb})^{-1}(\pi(U_{\alpha}))=\psiba^{-1}(U_{\alpha})$ which is open in $\Nb$. In case $\Nb=\Na$, clearly $(\pi|_{\Nb})^{-1}(\pi(U_{\alpha}))=U_{\alpha}$ is open. In case  $\Na \lesssim\Nb$, $\pi|_{\Nb}(q)=\pi|_{\Na}(p)$ for $p\in \Na,q\in \Nb$ if and only if $q=\psiab(p)$, so $(\pi|_{\Nb})^{-1}(\pi(U_{\alpha}))=\psiab(U_{\alpha})$ which is open in $\Nb$.
 So in summary $\pi$ is an open map.
 \end{proof}

Continuity and openness of $\pi$ together with injectivity of $\pi\big|_{\Na}$ now immediately imply that $\iupp=\pi\circ \ia=\pi\big|_{\Na}\circ \ia $ (for any $\alpha\in \mathcal{A}$) is a topological embedding onto the open set $\iupp(M)=\pi(\ia(M))$. 
  Further, $$\Nupp=\pi(N')=\pi\big(\bigsqcup_{\alpha\in\mathcal{A}}\overline{\ia(M)}^{\Na}\big)=\overline{\pi\big(\bigsqcup_{\alpha\in\mathcal{A}} \ia(M)\big)}^{\tau_q}=\overline{\iupp(M)}^{\tau_q}$$ by continuity of $\pi$. So $((\Nupp, \tau_q),\iupp)$ is a candidate for a future boundary extension of $(M,g)$ as in Definition \ref{def:candidate_boundary_ext} and we may define a family of timelike thickenings for $\Nupp$ as in Definition \ref{def:thickenings}. Our next crucial step is to show that $((\Nupp, \tau_q),\iupp)$ is a candidate for a regular future g-boundary extension of $(M,g)$ as in Definition \ref{def:candidate_regular_boundary_ext}. We start with the following Lemma

\begin{lemma} \label{lem:thickenings_under_pi} Let $X\in T_tM, r>0$. We have $$\pi^{-1}(O^{\Nupp}_{X,r})=\bigsqcup_{\alpha\in \mathcal{A}} O^{\Na}_{X,r}, $$ where $O^{\Na}_{X,r}$ denotes the timelike thickening corresponding to $X,r$ in $\Na$. Further, for any $\tau_q$-open $V\subset \Nupp$ and any future directed timelike geodesic $\gamma:[0,1)\to M$  with $p=\lim^{(\Nupp,\tau_q)}_{t\to 1^-} \iupp(\gamma(t))\in V\setminus \iupp(M)$  there exist $\alpha\in\mathcal{A}$ and $n,m\in\mathbb{N}$ such that $\pi(O^{\Na}_{\dot{\gamma}(1-\frac{1}{n}),\frac{1}{m}})= O^{\Nupp}_{\dot{\gamma}(1-\frac{1}{n}),\frac{1}{m}}$ and $O^{\Nupp}_{\dot{\gamma}(1-\frac{1}{n}),\frac{1}{m}}\subset V$.
\end{lemma}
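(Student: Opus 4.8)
The plan is to establish the two assertions separately, relying throughout on the openness of $\pi$ (Lemma \ref{lem:pi_Nmax_open}) and on the compatibility of the gluing maps $\psiab$ with the embeddings. For the first identity, I would begin with the observation from Remark \ref{rem:OM_def_and_open} that a timelike thickening in any extension $N$ is determined by the limit behaviour of the curves $\iota\circ\gamma_Y$ for $Y\in B_r(X)\cap T_tM$, together with their interior portions. First I would show the inclusion $\bigsqcup_\alpha O^{\Na}_{X,r}\subset \pi^{-1}(O^{\Nupp}_{X,r})$: for a point in $O^{\Na}_{X,r}$, either it lies on an interior curve $(\ia\circ\gamma_Y)((0,b_Y))$ or it is a boundary limit $\lim_{t\to b_Y^-}(\ia\circ\gamma_Y)(t)$; applying $\pi$ and using $\iupp=\pi\circ\ia$ together with continuity of $\pi$ shows the image lands in the corresponding interior or boundary part of $O^{\Nupp}_{X,r}$. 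For the reverse inclusion, I would take $q\in O^{\Nupp}_{X,r}$, pick a preimage $\tilde q\in\Na$ for some $\alpha$ (possible since $\pi$ is surjective onto $\Nupp$ with each $\pi|_{\Na}$ a homeomorphism onto its image), and use that the defining curve realizing $q$ in $\Nupp$ comes from some $\ib\circ\gamma_Y$; here the key point is that $\pi\circ\ib=\iupp=\pi\circ\ia$ on $M$ forces the underlying curve $\gamma_Y$ in $M$ to be the same, so its image in whichever $\Na$ contains $\tilde q$ lies in $O^{\Na}_{X,r}$.

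The subtle point in the first part, which I expect to be the main obstacle, concerns the boundary thickening $O^{\pr}_{X,r}$: whether the limit $\lim_{t\to b_Y^-}(\iota\circ\gamma_Y)(t)$ exists depends on the topology of the specific extension, and a limit might exist in one $\Na$ but not in another, or in $\Nupp$ but not in the representative one initially chose. The resolution is that $\pi$ is an open continuous map which restricts to a homeomorphism on each $\pi(\Na)$, so limits are transported faithfully under $\pi$ and under the embeddings $\psiab$; concretely, if $\lim_{t\to b_Y^-}(\ia\circ\gamma_Y)(t)=:p_\alpha$ exists in $\Na$, then $\pi(p_\alpha)$ is the limit in $\Nupp$, and conversely a limit existing in $\Nupp$ must, by the local homeomorphism property and since $\pi^{-1}$ of a neighbourhood basis of the limit pulls back appropriately, be realized in \emph{some} $\Na$. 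I would make this precise by noting that the index $\alpha$ realizing a given $q\in O^{\Nupp}_{X,r}$ is exactly the one for which $\tilde q=(\pi|_{\Na})^{-1}(q)\in\Na$, and that $O^{\Na}_{X,r}$ is defined using the \emph{same} intrinsic data $X,r$ and the \emph{same} curves $\gamma_Y$ in $M$.

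For the second assertion, given a $\tau_q$-open $V$ and the boundary point $p=\lim_{t\to1^-}\iupp(\gamma(t))\in V\setminus\iupp(M)$, I would first choose $\alpha\in\mathcal{A}$ such that $p\in\pi(\Na)$ and set $\tilde p:=(\pi|_{\Na})^{-1}(p)\in\pr\Na$. Since $\ia=\pi|_{\Na}^{-1}\circ\iupp$ and $\pi|_{\Na}$ is a homeomorphism onto the open set $\pi(\Na)$, the curve $\ia\circ\gamma$ has limit $\tilde p$ as $t\to1^-$, so $\tilde p\in\pr\Na$ with $\gamma$ a timelike geodesic reaching it. Because $\Na$ is a regular future g-boundary extension, $\{O^{\Na}_{\dot\gamma(1-\frac1n),\frac1m}\}_{n,m}$ is a neighbourhood basis of $\tilde p$ in $\Na$. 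The set $\pi(\Na)\cap V$ is $\tau_q$-open and contains $p$, so $(\pi|_{\Na})^{-1}(\pi(\Na)\cap V)$ is an open neighbourhood of $\tilde p$ in $\Na$; hence there exist $n,m$ with $O^{\Na}_{\dot\gamma(1-\frac1n),\frac1m}\subset(\pi|_{\Na})^{-1}(V)$, equivalently $\pi(O^{\Na}_{\dot\gamma(1-\frac1n),\frac1m})\subset V$.

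Finally I would derive the equality $\pi(O^{\Na}_{\dot\gamma(1-\frac1n),\frac1m})=O^{\Nupp}_{\dot\gamma(1-\frac1n),\frac1m}$ from the first part of the lemma: applying $\pi$ to $\pi^{-1}(O^{\Nupp}_{X,r})=\bigsqcup_\beta O^{\Nb}_{X,r}$ (with $X=\dot\gamma(1-\frac1n)$, $r=\frac1m$) and using surjectivity of $\pi$ gives $O^{\Nupp}_{X,r}=\pi(\bigsqcup_\beta O^{\Nb}_{X,r})=\bigcup_\beta \pi(O^{\Nb}_{X,r})$; it then remains to check $\pi(O^{\Na}_{X,r})=\pi(O^{\Nb}_{X,r})$ for all $\beta$, which follows because the embeddings $\psiab$ are compatible with the $\iota$'s and map the timelike thickening in one extension onto that in the other (the interior parts agree as images of the same curves $\gamma_Y$ in $M$, and the boundary parts match because $\psiab$ is an embedding transporting limits faithfully). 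This yields $\pi(O^{\Na}_{X,r})=O^{\Nupp}_{X,r}$, completing the argument.
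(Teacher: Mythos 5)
Your first identity and its proof are essentially the paper's: the forward inclusion follows from $\iupp=\pi\circ\ia$ and continuity of $\pi$, and the reverse inclusion from the fact that $\pi|_{\Na}$ is a continuous open injection, hence a homeomorphism onto the open set $\pi(\Na)$, so that a curve $\iupp\circ\gamma_Y$ converging to $q\in\pi(\Na)$ pulls back to $\ia\circ\gamma_Y$ converging to $(\pi|_{\Na})^{-1}(q)$. That part is sound. Likewise, your argument that $\pi(O^{\Na}_{\dot\gamma(1-\frac1n),\frac1m})\subset V$ for suitable $n,m$ via the neighborhood-basis property at $\tilde p=(\pi|_{\Na})^{-1}(p)$ is correct.

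The gap is in your derivation of the equality $\pi(O^{\Na}_{X,r})=O^{\Nupp}_{X,r}$. You reduce it to the claim that $\pi(O^{\Na}_{X,r})=\pi(O^{\Nb}_{X,r})$ for all $\beta$, on the grounds that $\psiab$ ``transports limits faithfully.'' An embedding transports limits forward, not backward: if $\Na\lesssim\Nb$, a geodesic $\ib\circ\gamma_Y$ may acquire an endpoint in $\Nb$ while $\ia\circ\gamma_Y$ acquires none in $\Na$ --- this is precisely how $\pr\Nb$ can be strictly larger than $\psiab(\pr\Na)$ (in the extreme, $\Na$ could have empty boundary while $\Nb$ does not). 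Hence the boundary portions of $\pi(O^{\Na}_{X,r})$ and $\pi(O^{\Nb}_{X,r})$ need not agree, and $\bigcup_\beta\pi(O^{\Nb}_{X,r})$ can be strictly larger than $\pi(O^{\Na}_{X,r})$ for your fixed $\alpha$. Your criterion for choosing $n,m$ (containment of $O^{\Na}_{\dot\gamma(1-\frac1n),\frac1m}$ in $(\pi|_{\Na})^{-1}(V)$) is too weak to exclude this: take $V=\Nupp$ and the condition is vacuous, yet the equality fails for large thickenings whenever some $\gamma_Y$ with $Y\in B_{1/m}(\dot\gamma(1-\frac1n))$ escapes every compact subset of $\Na$ without converging there but does converge in $\Nupp$. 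The paper closes exactly this hole by choosing $n,m$ so that $O^{\Na}_{\dot\gamma(1-\frac1n),\frac1m}$ is contained in an open set $U_\alpha\ni\tilde p$ whose closure in $\Na$ is \emph{compact} and contained in $\pi^{-1}(V)\cap\Na$. Then any relevant $\gamma_Y$ is trapped in $\overline{U_\alpha}$, so $\ia\circ\gamma_Y$ has accumulation points in $\Na$; each accumulation point maps under $\pi$ to the given $x\in O^{\Nupp}_{\dot\gamma(1-\frac1n),\frac1m}$, and injectivity of $\pi|_{\Na}$ forces them all to coincide, so the limit exists in $\Na$ and lies in $O^{\Na}_{\dot\gamma(1-\frac1n),\frac1m}$. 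This compactness-plus-injectivity step is the actual content of the second half of the lemma and is missing from your argument.
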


 \noindent \textit{Proof.}  We first show \begin{equation}\label{eq:pi(Na)subsetpi(Nmax)}
     	\pi(O^{\Na}_{X,r})\subset O^{\Nupp}_{X,r}
     \end{equation} for all $\alpha\in\mathcal{A}$: Let $p\in O^{\Na}_{X,r}$. Either $p=\ia(\gamma_Y(t))\in \mathrm{int}(\Na) $ for some $Y\in B_r(X)\subset TM$ and $t\in (0,b_Y)$. 
    Then $\pi(p)=\iupp(\gamma_Y(t))$ by definition of $\iupp$ and hence $\pi(p)\in O^{\Nupp}_{X,r}$. Or $p=\lim^{\Na}_{t\to b_Y^-} \ia(\gamma_Y(t))$. Then, by continuity of $\pi$, $\pi(p)=\lim^{\Nupp}_{t\to b_Y^-} \pi(\ia(\gamma_Y(t)))=\lim^{\Nupp}_{t\to b_Y^-} \iupp(\gamma_Y(t)) \in O^{\Nupp}_{X,r}$. 
    This implies $$\pi^{-1}(O^{\Nupp}_{X,r})\supset \bigsqcup_{\alpha\in \mathcal{A}} O^{\Na}_{X,r}.$$
    
    To show the other inclusion let $p\in O^{\Nupp}_{X,r}$. If $p= \iupp(\gamma_Y(t)) $ for some $t\in(0,b_Y)$, then clearly $\pi^{-1}(p)=\bigsqcup_{\alpha\in \mathcal{A}} \{ \ia(\gamma_Y(t)) \} \subset \bigsqcup_{\alpha\in \mathcal{A}} O^{\Na}_{X,r}$. So assume $p=\lim^{(\Nupp,\tau_q)}_{t\to b_Y^-} \iupp(\gamma_Y(t)) $. Let $q\in \pi^{-1}(p)$. Then we must have $q\in \Na$ for \emph{some} $\alpha\in \mathcal{A}$. For any open neighborhood $U$ of $q$ in $\Na$ we have that $\pi(U)=\pi\big|_{\Na}(U)$ is a $\tau_q$-open (because $\pi$ is an open map, see Lemma \ref{lem:pi_Nmax_open}) neighborhood of $p$ in $\Nupp$, so $\iupp(\gamma_Y(t))$ must be contained in $\pi\big|_{\Na}(U)$ for all $t$ sufficiently close to $b_Y$. This means that $\ia(\gamma_Y(t))=(\pi\big|_{\Na})^{-1}(\iupp(\gamma_Y(t)))\in U$ for all $t$ sufficiently close to $b_Y$, 
    so $\lim^{\Na}_{t\to b_Y^-} \ia(\gamma_Y(t))$ exists and equals $q$. Thus $q\in O^{\Na}_{X,r}$, i.e., 
    $$ \pi^{-1}(O^{\Nupp}_{X,r})\subset \bigsqcup_{\alpha\in \mathcal{A}} O^{\Na}_{X,r}. $$

\begin{wrapfigure}{r}{0.5\textwidth}
\centering \includegraphics[width=0.48\textwidth]{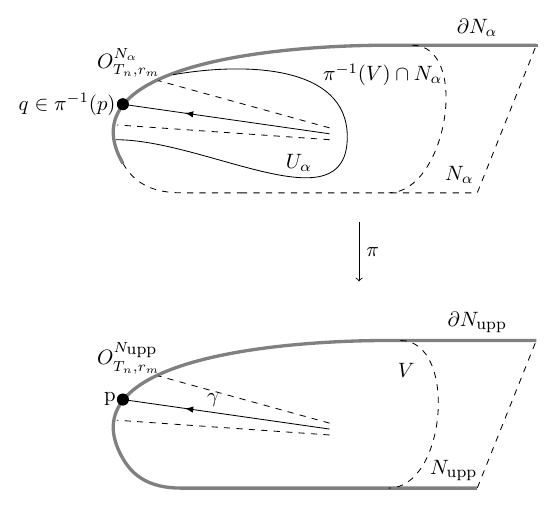}
  \caption{While the boundary of $\Nupp$ may be ''larger'' than the boundary of any of the $\Na$'s, we can lift any point $p\in \Nupp \setminus \iupp(M)$ to some $\Na$ and since $\Na$ is a manifold with boundary there exists a suitable relatively compact neighbourhood $U_{\alpha}$ such that any $O^{\Na}_{X,r}\subset U_{\alpha}$ will satisfy $\pi(O^{\Na}_{X,r})=O^{\Nupp}_{X,r}$.}
  \label{fig:1}
\end{wrapfigure}

  Let us now prove the second part. Let $V\subset \Nupp$ be $\tau_q$-open, fix $p\in V\setminus \iupp(M)$ and let $\gamma:[0,1)\to M$ be a future directed timelike geodesic with $p=\lim_{t\to 1^-} \iupp(\gamma(t))$. Choose any $\alpha \in \mathcal{A}$ for which $\pi^{-1}(p)\cap \Na \neq \emptyset$ and let $q\in \Na$ such that $\pi(q)=p$. Since $\pi^{-1}(V)$ is open in the disjoint union $N'$, $\pi^{-1}(V)\cap \Na$ 
  is open in $\Na$. Because $\Na$ is a topological manifold (with boundary) we can find an open $U_\alpha\subset \Na$ such that $q\in U_\alpha$ and the closure of $U_\alpha$ in $\Na$ is compact and contained in $\pi^{-1}(V)\cap \Na $. Set $T_n:=\dot{\gamma}(1-\frac{1}{n}) $, $r_m=\frac{1}{m}$ and let $n,m\in \mathbb{N}$ be such that $ O_{T_n,r_m}^{\Na}\subset U_\alpha$ (such an $n,m$ exists because $\Na$ is a regular future g-boundary extension), cf.~Figure \ref{fig:1}.

    We want to show that $\pi(O^{\Na}_{T_n,r_m})= O^{\Nupp}_{T_n,r_m}$: We already know $\pi(O^{\Na}_{T_n,r_m})\subset O^{\Nupp}_{T_n,r_m}$ from \eqref{eq:pi(Na)subsetpi(Nmax)}. So assume to the contrary that there exists $x\in O^{\Nupp}_{T_n,r_m}\setminus \pi(O^{\Na}_{T_n,r_m})$. Since $x\in O^{\Nupp}_{T_n,r_m}$ there exists $\gamma_Y$ with $Y\in B_{r_m}(T_n)$ and $0<t_0\leq b_Y$ such that $\lim^{(\Nupp,\tau_q)}_{t\to t_0^-} \iupp\circ\gamma_Y(t) =x$. 
    Since $(\ia\circ\gamma_Y)|_{(0,b_Y)}\subset O_{X,r}^{\Na}\subset U_\alpha$, relative compactness of $U_\alpha$ guarantees that there exists a sequence $t_k\in (0,b_Y)$ with $ t_k \to t_0$ for which $(\ia\circ\gamma_Y)(t_k)\to y$ in $\Na$ for some $y\in \overline{U_\alpha}^{\Na}\subset \pi^{-1}(V)\cap \Na$. By continuity of $\pi$ and definition of $\iupp$, we would have $\pi(y)=x$. Since we assumed $x\in  O^{\Nupp}_{T_n,r_m}\setminus \pi(O^{\Na}_{T_n,r_m})$, we would have $y\in \Na\setminus O^{\Na}_{T_n,r_m}$. However, by definition of $O^{\Na}_{T_n,r_m}$, whenever $\lim^{\Na}_{t\to t_0^-} (\ia\circ \gamma_Y)(t)$ exists in $\Na$ for any $0<t_0\leq  b_Y$, then this limit will belong to $O^{\Na}_{T_n,r_m}$. So $\lim^{\Na}_{t\to t_x^-} (\ia\circ \gamma_Y)(t)$ cannot exist. 
    Hence there must exist a different sequence $t_k'$ for which $(\ia\circ\gamma_Y)(t_k')\to y'\neq y$ in $\Na$ (the diverging case can again be excluded by relative compactness). By continuity of $\pi$ we must again have $x=\pi(y')$ but $\pi\big|_{\Na}$ is injective, so $y=y'$, giving a contradiction.

    Hence we indeed have $O^{\Nupp}_{T_n,r_m}=\pi(O^{\Na}_{T_n,r_m})$. But then clearly $O^{\Nupp}_{T_n,r_m}=\pi(O^{\Na}_{T_n,r_m}) \subset \pi(U_\alpha )\subset V$ and we are done.\qed\\

\begin{remark}	\label{remarkquotient}
Note that the proof only used injectivity of $\pi\big|_{\Na}: \Na \to \Nupp$, compatibility of the embedding $\iupp$ with $\pi$ and $\ia$ for all $\alpha \in \mathcal{A}$, i.e., that $\iupp=\pi\circ\ia$ for any $\alpha \in \mathcal{A}$, and that $\pi$ is an open map. Importantly we did neither use that the family $\{[\Na]\}_{\alpha\in\mathcal{A}}$ was totally ordered nor any further details on the definition of the equivalence relation. Hence we will be allowed to use this fact (and any results deriving directly from it) in the construction in next section, Section \ref{subsec:actualmax}, as well. 
\end{remark}

 \begin{lemma}\label{lem:Nmax_regular_candidate}
     $((\Nupp, \tau_q),\iupp)$ is a candidate for a regular future g-boundary extension of $(M,g)$.
 \end{lemma}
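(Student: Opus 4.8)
The plan is to verify the two defining properties of a candidate for a regular future g-boundary extension (Definition \ref{def:candidate_regular_boundary_ext}): first that all timelike thickenings $O^{\Nupp}_{X,r}$ are open, and second that for every boundary point $p\in \Nupp \setminus \iupp(M)$ there is a future directed timelike geodesic reaching $p$ whose associated thickenings form a neighborhood basis. Since we have already established (in the remarks above the lemma) that $((\Nupp,\tau_q),\iupp)$ is a candidate for a future boundary extension, and since Lemma \ref{lem:thickenings_under_pi} has done the heavy lifting, most of the work is already in place.

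For openness, I would use the first part of Lemma \ref{lem:thickenings_under_pi}, which gives $\pi^{-1}(O^{\Nupp}_{X,r})=\bigsqcup_{\alpha\in \mathcal{A}} O^{\Na}_{X,r}$. Since each $\Na$ is a regular future g-boundary extension, each $O^{\Na}_{X,r}$ is open in $\Na$, so the disjoint union is open in $N'$. By definition of the quotient topology $\tau_q$, a set $U\subset \Nupp$ is open precisely when $\pi^{-1}(U)$ is open in $N'$; hence $O^{\Nupp}_{X,r}$ is $\tau_q$-open. This is immediate and requires no further argument.

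For the neighborhood basis property, fix $p\in \Nupp\setminus \iupp(M)$. First I must produce a timelike geodesic reaching $p$: lifting via $\pi^{-1}(p)\cap \Na\neq \emptyset$ for some $\alpha$, the point $q\in \Na$ with $\pi(q)=p$ lies in $\Na\setminus\ia(M)=\pr\Na$ (using that $\pi\big|_{\Na}$ is injective and compatible with $\iupp,\ia$), so because $\Na$ is a regular future g-boundary extension there is a future directed timelike geodesic $\gamma:[0,1)\to M$ with $q=\lim_{t\to 1^-}\ia(\gamma(t))$; applying $\pi$ and using continuity together with $\iupp=\pi\circ\ia$ yields $p=\lim_{t\to 1^-}\iupp(\gamma(t))$. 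Given any such geodesic $\gamma$ and any $\tau_q$-open $V\ni p$, the second part of Lemma \ref{lem:thickenings_under_pi} directly furnishes $n,m\in\mathbb{N}$ (and an $\alpha$) with $O^{\Nupp}_{\dot\gamma(1-\frac1n),\frac1m}\subset V$; combined with openness of the thickenings and the fact that $p\in O^{\Nupp}_{\dot\gamma(1-\frac1n),\frac1m}$ (which follows since $\iupp(\gamma(t))$ lies in this thickening for $t$ near $1$, whence so does its limit $p$ by the boundary-thickening construction \eqref{eq:boundarythickening_new}), this shows the collection $\{O^{\Nupp}_{\dot\gamma(1-\frac1n),\frac1m}:n,m\in\mathbb{N}\}$ is a neighborhood basis of $p$.

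I expect no genuine obstacle here, as Lemma \ref{lem:thickenings_under_pi} was engineered precisely to supply the nontrivial inclusion $\pi(O^{\Na}_{T_n,r_m})=O^{\Nupp}_{T_n,r_m}\subset V$; the present lemma is essentially a bookkeeping assembly of that result with the quotient-topology characterization of openness. The only point demanding mild care is confirming that the lifted point $q$ genuinely lies on the boundary of $\Na$ rather than its interior (so that the geodesic-existence clause of Definition \ref{def:candidate_regular_boundary_ext} applies in $\Na$), which follows because $\iupp(M)=\pi(\ia(M))$ and $\pi\big|_{\Na}$ is injective, so $p\notin\iupp(M)$ forces $q\notin\ia(M)$.
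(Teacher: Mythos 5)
Your proposal is correct and follows essentially the same route as the paper: the geodesic reaching $p$ is obtained by lifting to some $\Na$ and pushing forward via continuity of $\pi$, and both the openness of the thickenings and the neighborhood-basis property are read off from Lemma \ref{lem:thickenings_under_pi}. The paper simply states that these last two points "follow immediately" from that lemma, whereas you spell out the quotient-topology bookkeeping explicitly; there is no substantive difference.
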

 \begin{proof}

    That there exists a future directed timelike geodesic $\gamma:[0,1)\to M$ with $p=\lim_{t\to 1^-} \iupp(\gamma(t))$  for any $p\in \Nupp \setminus \iupp(M)$ follows immediately from the construction: For $p\in \Nupp \setminus \iupp(M)$, there exists $\alpha\in \mathcal{A}$ and $\pa \in \Na\setminus \ia(M)$ such that $\pi(\pa)=p$. So, since $\Na$ is a regular future g-boundary extension there exists  a future directed timelike geodesic $\gamma:[0,1)\to M$ with $\pa=\lim_{t\to 1^-} \ia(\gamma(t))$. So $p=\lim_{t\to 1^-} \iupp(\gamma(t))$ follows from the definition of $\iupp$ and continuity of $\pi$.
           
  It remains to show that all timelike thickenings $O^{\Nupp}_{X,r}$ are open and that for any future directed timelike geodesic $\gamma:[0,1)\to M$ with $p=\lim_{t\to 1^-} \iupp(\gamma(t))\in \Nupp\setminus \iupp(M)$ the collection 
  $\{O^{\Nupp}_{\dot{\gamma}(1-\frac{1}{n}),\frac{1}{m}}: n,m\in \mathbb{N}\}$ is a neighborhood basis for $p$. This follows immediately from the previous Lemma \ref{lem:thickenings_under_pi}.
  \end{proof}

 \begin{remark}\label{rem:regular_ok_whenever_quotient_compatible_and_pi_open}
    Again, the proof only uses injectivity of the $\pi\big|_{\Na}$, compatibility of the embedding $\iupp$ with $\pi$ and $\ia$ for all $\alpha \in \mathcal{A}$ and that $\pi$ is an open map. Hence we will be allowed to use this fact in the construction in next section, Section \ref{subsec:actualmax}, as well and will in fact do so to obtain Lemma \ref{lem:Ntilde_regular_candidate}.
 \end{remark}

Let us now turn towards topology. As already pointed out Lemma \ref{lem:AA2} immediately gives second countability. We next show Hausdorffness.

\begin{lemma}\label{lem:Nmax_Hausdorff}
The topological space $(\Nupp,\tau_q)$ is Hausdorff.
\end{lemma}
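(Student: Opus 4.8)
The plan is to take two arbitrary distinct points $p\neq q\in \Nupp$ and produce disjoint $\tau_q$-open neighborhoods, leveraging three facts already at hand: each $\Na$ is a topological manifold with boundary and hence Hausdorff, the quotient map $\pi$ is open (Lemma \ref{lem:pi_Nmax_open}), and each restriction $\pi\big|_{\Na}$ is injective (immediate from the definition of $\sim$). The single additional ingredient I would exploit is that $\Jset$ is \emph{totally} ordered, which lets me handle all cases uniformly, without distinguishing between interior and boundary points.

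First I would lift $p$ and $q$ into a common $\Na$. Choose representatives $p_\alpha\in\Na$ and $q_\beta\in\Nb$ with $\pi(p_\alpha)=p$ and $\pi(q_\beta)=q$. By total ordering of $\Jset$ we may assume, after possibly swapping the roles of $p$ and $q$, that $\Na\lesssim\Nb$, so the embedding $\psiab:\Na\to\Nb$ exists and, by the definition of $\sim$ in \eqref{eq:equivalencerelation}, $p_\alpha\sim\psiab(p_\alpha)$. Hence $x:=\psiab(p_\alpha)$ and $y:=q_\beta$ both lie in $\Nb$ and satisfy $\pi(x)=p$ and $\pi(y)=q$. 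Since $p\neq q$, necessarily $x\neq y$.

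Next I would separate $x$ and $y$ inside $\Nb$. As $\Nb$ is a topological manifold with boundary it is Hausdorff, so there exist disjoint $\Nb$-open sets $U\ni x$ and $V\ni y$. Pushing these forward, $\pi(U)$ and $\pi(V)$ are $\tau_q$-open by openness of $\pi$ and contain $p$ and $q$ respectively.

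The only point requiring care — and the main obstacle one must address — is that a quotient map can in general merge disjoint sets, so a priori $\pi(U)\cap\pi(V)$ need not be empty. This is precisely why I placed \emph{both} neighborhoods inside the \emph{same} extension $\Nb$: if some $z\in\pi(U)\cap\pi(V)$, then $z=\pi(u)=\pi(v)$ for some $u\in U\subset\Nb$ and $v\in V\subset\Nb$, and injectivity of $\pi\big|_{\Nb}$ forces $u=v\in U\cap V=\emptyset$, a contradiction. Hence $\pi(U)$ and $\pi(V)$ are disjoint open neighborhoods separating $p$ and $q$, and $(\Nupp,\tau_q)$ is Hausdorff.
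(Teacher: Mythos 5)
Your proposal is correct and follows essentially the same route as the paper's proof: both lift the two points into a common $\Nb$ via the embedding $\psiab$ furnished by the total ordering, separate them there using Hausdorffness of $\Nb$, and push the disjoint neighborhoods forward using openness of $\pi$ (Lemma \ref{lem:pi_Nmax_open}) together with injectivity of $\pi|_{\Nb}$. The final disjointness argument you spell out is exactly the one the paper invokes.
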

\begin{proof}	
 Take any $q_{1},q_{2}\in\Nupp$ such that $q_1\neq q_2$. Then there exist two points $p_{\alpha}\in\Na$ and $p_{\beta}\in \Nb$ such that $q_{1}=\pi(p_{\alpha})$ and $q_2=\pi(p_{\beta})$. Assume w.l.o.g.\ that $\Na \lesssim \Nb$. Then since $q_1\neq q_2$, also $p_{\beta}\neq \psiab(p_{\alpha})$. As $\Nb$ is Hausdorff, there exist disjoint open neighborhoods $U_{1}$ and $U_{2}$ of $p_{\beta}$ and $\psiab(p_{\alpha})$. Define the subsets $V_{1}\coloneqq\pi(U_{1})=\pi|_{\Nb}(U_{1})$ and $V_{2}\coloneqq \pi(U_{2})=\pi|_{\Nb}(U_{2})$ which satisfy that $q_{1}\in V_{1}$ and $q_{2}\in V_{2}$. By Lemma \ref{lem:pi_Nmax_open} both $V_1$ and $V_2$ are open and invertibility of $\pi|_{\Nb}$ together with disjointedness of $U_{1}$ and $U_{2}$ implies that $V_{1}\cap V_{2}=\emptyset$. 
\end{proof}
 
We are now ready to equip $\Nupp$ with suitable charts turning it into a topological manifold with boundary and put everything together.

\begin{proposition}\label{prop:Nmax_is_reg_future_g_boundary_ext}
    Let $(M,g)$ be a $C^{2}$ spacetime and $\mathcal{J}$ a totally ordered set of of regular future g-boundary  extensions. Then $\Nupp$ is a regular future g-boundary extension of $(M,g)$.
\end{proposition}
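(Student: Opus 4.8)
The plan is to collect the results already established in this section and supply the one remaining ingredient, namely that $\Nupp$ is locally Euclidean with boundary. By Lemma \ref{lem:pi_Nmax_open} the quotient map $\pi$ is open, and the discussion following it shows that $\iupp=\pi\circ\ia$ is a topological embedding onto the open dense set $\iupp(M)$, so $((\Nupp,\tau_q),\iupp)$ is a candidate for a future boundary extension. Lemma \ref{lem:Nmax_Hausdorff} gives Hausdorffness and Lemma \ref{lem:Nmax_regular_candidate} shows $((\Nupp,\tau_q),\iupp)$ is a candidate for a \emph{regular} future g-boundary extension; in particular all timelike thickenings $O^{\Nupp}_{X,r}$ are open and provide the required neighborhood bases at boundary points. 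With regularity in hand, Lemma \ref{lem:AA2} immediately yields second countability. Thus the only thing left to verify is the existence of charts.

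First I would handle interior points: since $\iupp$ is a homeomorphism onto the open set $\iupp(M)$ and $M$ is a $C^2$ manifold (hence a topological manifold), every point of $\iupp(M)$ has a Euclidean chart obtained by pushing forward a chart of $M$ via $\iupp$. For a boundary point $p\in \Nupp\setminus \iupp(M)$ I would use the openness of $\pi$ to transport a chart from one of the $\Na$. Concretely, choose $\alpha\in\mathcal{A}$ and $\pa\in\Na\setminus\ia(M)$ with $\pi(\pa)=p$. Since $\Na$ is a topological manifold with boundary, there is an open neighborhood $W$ of $\pa$ in $\Na$ together with a homeomorphism $\varphi:W\to \tilde\varphi(W)$ onto a relatively open subset of $[0,\infty)\times\mathbb{R}^{d-1}$. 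Restricting $W$ if necessary, I would arrange that $\pi\big|_{W}$ is injective: this is possible because, as in the proof of Lemma \ref{lem:thickenings_under_pi}, one may take $W$ with compact closure contained in a set on which $\pi\big|_{\Na}$ is injective (indeed $\pi\big|_{\Na}$ is globally injective by the remark that $p\sim q\iff p=q$ for $p,q\in\Na$). Then $\pi(W)$ is $\tau_q$-open by Lemma \ref{lem:pi_Nmax_open}, and $\pi\big|_{W}:W\to\pi(W)$ is a continuous open bijection, hence a homeomorphism; composing its inverse with $\varphi$ produces a boundary chart $\varphi\circ(\pi\big|_{W})^{-1}:\pi(W)\to\tilde\varphi(W)$ for $\Nupp$ around $p$.

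The main obstacle, and the point deserving the most care, is ensuring that the neighborhood $\pi(W)$ on which we built a chart is genuinely open and that $\pi\big|_W$ is a homeomorphism onto it, rather than merely a continuous bijection; this is exactly where openness of $\pi$ (Lemma \ref{lem:pi_Nmax_open}) and injectivity of $\pi\big|_{\Na}$ are essential, and both follow directly from the order relation underlying the equivalence $\sim$. The transition maps between two such charts are automatically continuous: on overlaps the charts are built from charts of the underlying $\Na$ and the homeomorphisms $\psiab$, all of which are continuous, and compatibility of the interior charts with the boundary charts is guaranteed since $\iupp$ restricts to a homeomorphism on the interior. Having produced a Euclidean (half-space) chart around every point and verified Hausdorffness and second countability, $\Nupp$ is a topological manifold with boundary; combined with Lemma \ref{lem:Nmax_regular_candidate} it is a regular future g-boundary extension of $(M,g)$, completing the proof.
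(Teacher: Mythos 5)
Your proposal is correct and follows essentially the same route as the paper: after invoking Lemma \ref{lem:Nmax_regular_candidate}, Lemma \ref{lem:Nmax_Hausdorff} and Lemma \ref{lem:AA2} for regularity, Hausdorffness and second countability, the charts on $\Nupp$ are obtained by pushing charts of the $\Na$ forward through $\pi$, using openness of $\pi$ (Lemma \ref{lem:pi_Nmax_open}) and injectivity of $\pi\big|_{\Na}$ exactly as in the paper's proof. Your extra step of shrinking $W$ to make $\pi\big|_W$ injective is unnecessary (as you yourself note, $\pi\big|_{\Na}$ is globally injective since $p\sim q\iff p=q$ for $p,q\in\Na$), but this does not affect correctness.
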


\begin{proof}
 
Thanks to Lemmas \ref{lem:Nmax_regular_candidate} and \ref{lem:Nmax_Hausdorff}, it only remains to show that $(\Nupp,\tau_q)$ carries the structure of a topological manifold with boundary, i.e., that there exist suitable charts. 
The idea is to construct charts on $\Nupp$ using the charts on $\Na$ (for each $\alpha\in\mathcal{A}$) and composing them with the quotient map $\pi$. 
Take a point $p\in \Nupp$, take $\alpha\in \mathcal{A}$ and $p_{\alpha}\in \Na$ such that $p=\pi(p_{\alpha})$ and a coordinate chart $(U_{\alpha},x_{\alpha})$ around $p_{\alpha}$ in $\Na$. Note that if $p_{\alpha}\in\ia(M)=\mathrm{int}(\Na)$, $x_{\alpha}$ is a homeomorphism onto an open subset of $\mathbb{R}^d$, while if $p_{\alpha}\in \Na\setminus \ia(M)$, $x_{\alpha}$ is a homeomorphism onto an open set in the half space $[0,\infty)\times \mathbb{R}^{d-1}$. As $\pi$ is an open map, $\pi(U_{\alpha})$ is an open neighborhood of $p$ in $\Nupp$. Then, on $\Nupp$ we define the map $x : \pi(U_{\alpha})\rightarrow \mathbb{R}^{d}, p\mapsto x_{\alpha}(\pi\big|_{\Na}^{-1}(p))$, noting that $\pi\big|_{\Na}: U_{\alpha}\to \pi(U_{\alpha})$ is a bijection. In the following it will be proven that $(\pi(U_{\alpha}), x)$ is a coordinate chart for $\Nupp$.
    \par We show that the map $x$ is bicontinuous. By definition of $x$ it holds that $x^{-1}(W)=\pi(x_{\alpha}^{-1}(W))$ for all open $W\subset x(\pi(U_{\alpha}))\subset\mathbb{R}^{d}$, which is an open set as $x_{\alpha}^{-1}(W)$ is open (since $x_{\alpha}$ is continuous) and $\pi$ is an open map. Hence, $x$ is continuous. 
    In order to see that $x^{-1}$ is continuous, simply note that $x^{-1}=\pi \circ x_{\alpha}^{-1}$ is the composition of continuous maps. Since we only need a topological manifold, there is no further compatibility between charts we'd have to check.
\end{proof}

\begin{remark}\label{rem:equiv_class_ofNmax_unique} Let us at this point remark that while $(\Nupp,\iupp)$ might depend on the chosen representatives $(\Na,\ia)$ of $[(\Na,\ia)]$, its equivalence class $[(\Nupp,\iupp)]$, which is now well-defined as we just established that $\Nupp$ is a regular future g-boundary extension, does not: For every $\alpha$ let $(\Na,\ia)$ and $(\Na',\ia')$ be two regular future g-boundary extensions with $[(\Na,\ia)]=[(\Na',\ia')]$ and consider 
$$\psi: \bigsqcup_{\alpha\in \mathcal{A}} \Na \to    \Nupp' :=(\bigsqcup_{\alpha\in \mathcal{A}} \Na')/\sim  $$
defined by $\psi(\pa):=\pi'(\psi_{\alpha\alpha'}(\pa))$, where $\psi_{\alpha\alpha'}$ is the homeomorphism arising from the equivalence relation $(\Na,\ia)\cong (\Na',\ia')$ (with $\pi'$ the projection $\pa \mapsto [\pa]'\in \Nupp'$ for $\Nupp'$) for $\pa\in \Na\subset \bigsqcup_{\alpha} \Na $. Clearly this is well defined, surjective and satisfies $\psi(\pa)=\psi(\pb)$  for $\pa,\pb $ with $[\pa]=[\pb]$ (noting that for $(\Na,\ia)\lesssim (\Nb,\ib)$ also $(\Na',\ia')\lesssim (\Nb',\ib')$ and $\psi_{\beta \beta'}|_{\psiab(\Na)}=\psi_{\alpha'\beta'}\circ \psi_{\alpha \alpha'}\circ \psiab^{-1}$ since all $\psi_{ij}$ are uniquely determined from $\iota_j\circ\iota_i^{-1}$ by extending continuously). So by the universal property of the quotient space there exists a well-defined continuous and surjective map
\begin{align*}
    \psi_{\Nupp\Nupp'}: \Nupp &\to    \Nupp'  \\
    \pi(\pa) &\mapsto \psi(\pa)=\pi'(\psi_{\alpha\alpha'}(\pa)).
\end{align*}
Analogously, just switching the roles of $\Nupp$ and $\Nupp'$, we obtain a 
continuous and surjective map
\begin{align*}
  \psi_{\Nupp' \Nupp}: \Nupp' &\to   \Nupp \\
  \pi'(\pa) &\mapsto \tilde{\psi}(\pa)=\pi(\psi_{\alpha'\alpha}(\pa)).  
\end{align*}
By construction (using that $\psi_{\alpha \alpha'}^{-1}=\psi_{\alpha' \alpha}$) we have $\psi_{\Nupp' \Nupp}\circ \psi_{\Nupp \Nupp'}=\mathrm{id}_{\Nupp}$ and $\psi_{\Nupp \Nupp'}\circ \psi_{\Nupp' \Nupp}=\mathrm{id}_{\Nupp'}$, so $\psi_{\Nupp \Nupp'}$ and $\psi_{\Nupp' \Nupp}$ are homeomorphisms and, again by construction, $\psi_{\Nupp \Nupp'}\circ \iupp =\iupp'$. Hence, $(\Nupp, \iupp)\cong (\Nupp', \iupp')$.
\end{remark}

Since, by construction, $[\Nupp]$ is an upper bound for $\Jset=\{[\Na]\}_{\alpha\in \mathcal{A}}$, we have thus established

\begin{theorem}\label{theo:maxextension_setheoretic}Let $(M,g)$ be a $C^{2}$ spacetime and $\mathcal{J}$ a totally ordered set of of regular future g-boundary  extensions. Then there exists an upper bound for $\mathcal{J}$, i.e.\ there exists $[N]\in \Iset$ such that $[\Na]\leq [N]$ for any $[\Na]\in \Jset$.\end{theorem}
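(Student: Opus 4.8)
Since the preceding lemmas and Proposition \ref{prop:Nmax_is_reg_future_g_boundary_ext} have already done the heavy lifting, my plan is to treat this theorem as essentially a bookkeeping statement: I would take the quotient $\Nupp$ constructed above and verify the two things still needed, namely that $[\Nupp]$ lies in $\Iset$ and that it dominates every $[\Na]$. The first point is immediate, as Proposition \ref{prop:Nmax_is_reg_future_g_boundary_ext} asserts precisely that $(\Nupp,\iupp)$ is a regular future g-boundary extension of $(M,g)$, so $[(\Nupp,\iupp)]\in\Iset$.

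For the upper-bound property I would exhibit, for each $\alpha\in\mathcal{A}$, an explicit embedding realizing $(\Na,\ia)\lesssim(\Nupp,\iupp)$, and the obvious candidate is the restriction $\pi\big|_{\Na}$ of the quotient map. First I would check that this is a topological embedding: it is injective because $p\sim q$ forces $p=q$ for $p,q$ lying in the same summand $\Na$; it is continuous as a restriction of $\pi$; and it is open onto its image since $\pi$ itself is an open map (Lemma \ref{lem:pi_Nmax_open}), whence $\pi(\Na)$ is open in $\Nupp$ and $\pi\big|_{\Na}$ is a homeomorphism onto $\pi(\Na)$ with the subspace topology.

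Next I would verify the structural conditions in Definition \ref{def:order on Ns}. Compatibility is built into the construction: from $\iupp=\pi\circ\ia$ one reads off $\iupp^{-1}\circ\pi\big|_{\Na}\circ\ia=\mathrm{id}_M$. The same identity shows that $\mathrm{int}(\Na)=\ia(M)$ is sent into $\mathrm{int}(\Nupp)=\iupp(M)$; conversely, a boundary point $p\in\Na\setminus\ia(M)$ with $\pi(p)=\iupp(x)$ would satisfy $\pi(p)=\pi(\ia(x))$, hence $p\sim\ia(x)$ and so $p=\ia(x)$ within $\Na$, a contradiction. Thus $\pi\big|_{\Na}$ maps $\pr\Na$ into $\pr\Nupp$ and is a genuine embedding of topological manifolds with boundary, yielding $[\Na]\leq[\Nupp]$ for every $\alpha$ and establishing $[\Nupp]$ as the desired upper bound.

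I do not anticipate a real obstacle at this stage. The hard part—ensuring that the quotient topology $\tau_q$ does not destroy Hausdorffness (Lemma \ref{lem:Nmax_Hausdorff}), second countability (via Lemma \ref{lem:AA2} together with Lemma \ref{lem:Nmax_regular_candidate}), or the local manifold-with-boundary structure (Proposition \ref{prop:Nmax_is_reg_future_g_boundary_ext})—has already been overcome, and all of it rests on the openness of $\pi$. The only point in the present step demanding a little care is confirming that $\pi\big|_{\Na}$ is an embedding \emph{of manifolds with boundary}, respecting interior and boundary, rather than a mere continuous injection; and that is settled precisely by openness of $\pi$ together with the equivalence-relation argument above.
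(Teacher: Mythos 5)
Your proposal is correct and follows the paper's own route: the paper also obtains this theorem by citing Proposition \ref{prop:Nmax_is_reg_future_g_boundary_ext} for $[\Nupp]\in\Iset$ and then asserting the upper-bound property ``by construction'', which is exactly the claim you verify in detail via the embedding $\pi\big|_{\Na}$. Your explicit check that $\pi\big|_{\Na}$ is injective, open onto its image, compatible with $\ia,\iupp$, and boundary-respecting simply spells out what the paper leaves implicit, and all of it is sound.
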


Let us observe that this immediately gives the following Corollary.

\begin{corollary}\label{cor:existence_of_set_theoretic_max} Let $(M,g)$ be a $C^{2}$ spacetime and $\mathcal{I}$ a partially ordered set of of regular future g-boundary  extensions. Then there exists a maximal element for $\mathcal{I}$, i.e.\ there exists $[N]_{\mathrm{max}}\in \Iset$ which satisfies that if $ [N]_{\mathrm{max}}\leq [N] $ for any $[N]\in \Iset$ one must already have equality $[N]_{\mathrm{max}}=[N]$\end{corollary}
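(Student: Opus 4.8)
The plan is to deduce this directly from Theorem \ref{theo:maxextension_setheoretic} via Zorn's Lemma, so the real work is simply to check that all three hypotheses of Zorn's Lemma are in place for the partially ordered set $(\Iset,\leq)$. First I would recall that $\Iset$ is genuinely a \emph{set} and not a proper class: this is exactly what Remark \ref{rem:Iset_is_set} establishes, by exhibiting an injection $[(N,\iota)]\mapsto [(\Phi(N),\mathrm{graph}(\Phi\circ\iota))]_h$ of $\Iset$ into the set $\mathcal{E}$ of homeomorphism classes of embedded manifolds-with-boundary in some $\R^m$. Second, I would recall that $\leq$ is indeed a partial order on $\Iset$, which was verified in the discussion immediately following Definition \ref{def:order on Ns} (reflexivity and transitivity are inherited from composition of the compatible embeddings, and well-definedness on equivalence classes was checked there). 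Third, and this is where all the genuine content sits, every totally ordered subset $\Jset=\{[\Na]\}_{\alpha\in\mathcal{A}}\subseteq \Iset$ admits an upper bound in $\Iset$ --- this is precisely the statement of Theorem \ref{theo:maxextension_setheoretic}, realized by the glued quotient extension $[\Nupp]$.

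With these three ingredients assembled, I would invoke Zorn's Lemma to conclude the existence of a maximal element $[N]_{\mathrm{max}}\in \Iset$, i.e.\ an element with the property that $[N]_{\mathrm{max}}\leq [N]$ already forces $[N]=[N]_{\mathrm{max}}$, which is exactly the claim. In keeping with the hypothesis that $(M,g)$ be regular future g-boundary extendible (so that $\Iset\neq\emptyset$; cf.\ the standing assumption in Theorem \ref{thm:main2_intro}), the poset to which Zorn's Lemma is applied is nonempty, so the conclusion is non-vacuous.

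I do not expect a substantive obstacle here, since the hard part --- constructing the upper bound $\Nupp$ for an arbitrary (possibly uncountable) chain and verifying it is again a regular future g-boundary extension --- is already discharged in Theorem \ref{theo:maxextension_setheoretic}. The one point I would be careful not to skip is the foundational one flagged in Remark \ref{rem:Iset_is_set}: Zorn's Lemma cannot be applied to a proper class, so the injection into $\mathcal{E}$ is a genuine prerequisite rather than a formality. Beyond that, the corollary is a one-line application of Zorn's Lemma once Theorem \ref{theo:maxextension_setheoretic} and Remark \ref{rem:Iset_is_set} are cited.
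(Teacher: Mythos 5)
Your proposal is correct and takes essentially the same route as the paper, which likewise deduces the corollary in one line from Theorem \ref{theo:maxextension_setheoretic} and Zorn's Lemma. The extra care you take in citing Remark \ref{rem:Iset_is_set} (so that Zorn's Lemma is applied to a genuine set) and in noting that $\Iset\neq\emptyset$ is needed for the conclusion to be non-vacuous is sound and, if anything, slightly more careful than the paper's own proof.
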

\begin{proof}
	This follows directly from the existence of upper bounds for every totally ordered subset $\Jset$ and Zorn's Lemma.
\end{proof}

Of course, such set theoretic maximal elements are expected to be non-unique. For instance, we believe the two inequivalent extensions of the Misner and Taub-NUT spacetimes (described in \cite{Hawking} and \cite{Chrusciel}, cf.\ in particular Prop.\ 5.16, respectively) to both be maximal elements. However a proof of this in our case is less immediate than the corresponding proof of \cite[Prop.\ 5.16]{Chrusciel} due to the non-constructive nature of Zorn's lemma.

\section{Existence of a unique maximal regular future g-boundary extension}\label{subsec:actualmax}

The example of Misner spacetime given in \cite{Misner} suggests that uniqueness necessitates an additional condition to be imposed on $(M,g)$. Together with the work by Chru\'sciel on uniqueness of conformal boundaries, \cite{Chrusciel}, it seems natural to consider the following additional condition

\begin{definition}[Intertwined timelike geodesics]
\label{defintertwined} Let $\gamma_{1}: [0,b_{Y_1})\rightarrow M$, $Y_1:=\dot{\gamma}_1(0)$, and $\gamma_{2} : [0,b_{Y_2})\rightarrow M$, $Y_2:=\dot{\gamma}_2(0)$, be two future directed future inextendible timelike geodesics in a $C^2$ spacetime $(M,g)$. 
Then, we say that $\ga$ and $\gb$ \emph{are not intertwined} provided one of the following conditions holds:

\begin{enumerate}[label=(\roman*)]
    \item For any radii $r>0$, $\rho> 0$  there exist $s_{1}\in (0,b_{Y_1})$, $s_{2}\in\left(0,b_{Y_2}\right)$ such that $\ga(\left[s_{1},b_{Y_1})\right)\subset O^M_{Y_2, \rho}$
    and $\gb(\left[s_{2},b_{Y_2})\right)\subset O^M_{Y_1,r}$. 
    \item There exists $s_{1}\in (0,b_{Y_1})$, $s_{2}\in\left(0,b_{Y_2}\right)$ and radii $r,\r >0$ such that $O^M_{\dot{\gamma}_{1}(s_1),r} \cap O^M_{\dot{\gamma}_{2}(s_2),\r}=\emptyset$. 
\end{enumerate}
If neither of these conditions hold, then we say that $\ga$ and $\gb$ \emph{are intertwined}. 
\end{definition}

\par Heuristically, two curves $\ga$ and $\gb$ are not intertwined if they \emph{merge}, i.e. they approach each other and remain arbitrarily close (case (i) of the previous definition), or \emph{part}, i.e. there exists a fixed distance at which these curves will, as long as defined, never be (case (ii) of the previous definition). In other words, it is not possible that these geodesics come arbitrarily close to each other without remaining close afterwards. Intertwined geodesics, as pointed out by Chru\'sciel \cite{Chrusciel} and Sbierski \cite{Sbierski2022}, appear for example in the Taub-NUT or Misner spacetime and lead to the existence of distinct
extensions of the original spacetime. 
In particular, a ''common'' extension of two arbitrary (i.e., non-ordered) regular future g-boundary extensions $\Na$ and $\Nb$ might fail to be Hausdorff if there exist intertwined timelike geodesics in $M$.

Before proceeding let us remark that condition (i) in Definition \ref{defintertwined} could be rewritten using $O^M_{\dot{\gamma}_2(b_{Y_2}-\frac{1}{n_2}),\rho}$ and $O^M_{\dot{\gamma}_1(b_{Y_1}-\frac{1}{n_1}),r}$ instead of $O^M_{Y_2,\rho}$ and  $O^M_{Y_1,r}$ for any $n_2, n_1$.
\begin{lemma}\label{lem:propertiesthickenings}
  Let $(M,g)$ be a strongly causal spacetime with $C^2$-metric $g$ and $\gamma:[0,b)\to M$ an inextendible future directed timelike geodesic in $M$. Then, if $\gamma':[0,b')\to M$ is any inextendible future directed timelike geodesic in $M$ such that the pair $\gamma,\gamma'$ satisfies point (i) in Definition \ref{defintertwined}, then for any $n\in \mathbb{N}, r>0$ there exists $s'\in (0,b')$ such that $\gamma'([s',b'))\subset O^M_{\dot{\gamma}(b-\frac{1}{n}),r}$. 
\end{lemma}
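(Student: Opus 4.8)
The plan is to reduce the claim to point (i) of Definition \ref{defintertwined} by \emph{shifting the base point} of the thickening from the initial velocity $Y:=\dot\gamma(0)$ to the later velocity $X_1:=\dot\gamma(b-\tfrac1n)$, controlling the passage from $O^M_{Y,r'}$ to $O^M_{X_1,r}$ via the geodesic flow, and using strong causality to guarantee that the tail of $\gamma'$ only ever meets the \emph{late} portions of the competitor geodesics. The starting observation is that $\gamma_{X_1}(\tau)=\gamma(b-\tfrac1n+\tau)$, so geodesics whose initial velocity is close to $X_1$ are exactly the forward pieces (from parameter $b-\tfrac1n$ on) of geodesics whose initial velocity is close to $Y$.

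First I would fix the radii. Using lower semicontinuity of $W\mapsto b_W$ (so $b_W>b-\tfrac1n$ near $Y$) together with continuity of the geodesic flow $W\mapsto \dot\gamma_W(b-\tfrac1n)$ at $Y$ (whose value there is $X_1$), I would choose $r'\in(0,r)$ so small that for every $W\in \overline{B_{r'}(Y)}$ one has $b_W>b-\tfrac1n$ and $\dot\gamma_W(b-\tfrac1n)\in B_r(X_1)$. The point of this choice is the inclusion $O^M_{X_1,r}\supseteq \bigcup_{W\in B_{r'}(Y)\cap T_tM}\gamma_W((b-\tfrac1n,b_W))$: for such $W$ the vector $Z:=\dot\gamma_W(b-\tfrac1n)$ lies in $B_r(X_1)\cap T_tM$ (timelike, since $\gamma_W$ is a timelike geodesic), and $\gamma_Z((0,b_Z))=\gamma_W((b-\tfrac1n,b_W))$ by uniqueness of geodesics. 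Thus it suffices to show that the tail of $\gamma'$ eventually lies on these forward pieces.

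The main work, and the step that genuinely uses the hypotheses, is to rule out the tail of $\gamma'$ being represented only by early parameters. Set $K:=\{\gamma_W(\sigma):W\in\overline{B_{r'}(Y)},\ \sigma\in[0,b-\tfrac1n]\}$, which is compact as the continuous image of the compact set $\overline{B_{r'}(Y)}\times[0,b-\tfrac1n]$ (the closed ball is compact since $h^{TM}$ is complete, and the set lies in the domain of the flow by the previous step). I would then invoke strong causality: covering $K$ by finitely many relatively compact, causally convex open sets $V_1,\dots,V_k$, each preimage $(\gamma')^{-1}(V_i)$ is a single interval by causal convexity, and none of these intervals can accumulate at $b'$ — otherwise $\gamma'$ would be future-imprisoned in the compact set $\overline{V_i}$, contradicting the standard non-imprisonment theorem for strongly causal spacetimes. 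Since $(\gamma')^{-1}(K)\subset\bigcup_i(\gamma')^{-1}(V_i)$, there is then $s^*<b'$ with $\gamma'((s^*,b'))\cap K=\emptyset$.

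Finally I would combine this with point (i). Applying (i) with radius $r'$ yields $s_2<b'$ with $\gamma'([s_2,b'))\subset O^M_{Y,r'}$. Choosing $s'\in(s^*,b')$ with $s'\ge s_2$, for every $t\in[s',b')$ we may write $\gamma'(t)=\gamma_{W_t}(\sigma_t)$ with $W_t\in B_{r'}(Y)\cap T_tM$ and $\sigma_t\in(0,b_{W_t})$; since $\gamma'(t)\notin K$ while $W_t\in\overline{B_{r'}(Y)}$, necessarily $\sigma_t>b-\tfrac1n$. Then $Z_t:=\dot\gamma_{W_t}(b-\tfrac1n)\in B_r(X_1)\cap T_tM$ and $\gamma'(t)=\gamma_{Z_t}(\sigma_t-(b-\tfrac1n))$ with $\sigma_t-(b-\tfrac1n)\in(0,b_{Z_t})$, so $\gamma'(t)\in O^M_{X_1,r}=O^M_{\dot\gamma(b-\frac1n),r}$; taking this $s'$ proves the claim. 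I expect the only delicate point to be the strong-causality argument of the third paragraph (ensuring $\gamma'$ leaves $K$ \emph{for good}, not merely once), with the bookkeeping of the two radii $r'<r$ a close second.
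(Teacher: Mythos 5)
Your proof is correct and follows essentially the same route as the paper's: shrink the radius about $\dot{\gamma}(0)$ via continuous dependence of geodesics on initial data so that the time-$(b-\frac{1}{n})$ flow lands in $B_r(\dot{\gamma}(b-\frac{1}{n}))$, observe that the ``early'' segments form a compact set, and use strong causality together with point (i) to force the tail of $\gamma'$ into $O^M_{\dot{\gamma}(b-\frac{1}{n}),r}$. The only (cosmetic) difference is that the paper concludes by citing a partial-imprisonment result for strongly causal spacetimes, whereas you re-derive the needed non-imprisonment statement inline via a finite cover by causally convex neighborhoods.
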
 
\begin{proof} 

Fix $r>0,n\in \mathbb{N}$ and set $T_n:=\dot{\gamma}(b-\frac{1}{n})$. Choose $\bar{\r}(r,n)>0$ such that $\{\dot{\gamma}_Y(b-\frac{1}{n}): Y\in B_{\bar{\r}(r,n)}(\dot{\gamma}(0))\}\subset B_r(T_n)$, noting that such a $\bar{\r}(r,n)$ exists by continuous dependence of tangents to geodesics on the initial data. Then 
$$ O^M_{\dot{\gamma}(0),\bar{\r}(r,n)} \setminus O^M_{T_n,r} 
\subset \{\gamma_Y([0,b-\frac{1}{n}]): Y\in \overline{B_{\bar{\r}(r,n)}(\dot{\gamma}(0))\cap T_tM}\}.$$

Note that the latter set is compact. Now if $ \gamma'([s',b'))\not\subset O^M_{T_n,r}$ for any $s'\in(0,b')$, but by point (i) in Definition \ref{defintertwined} there exists $\bar{s}'\in(0,b')$ such that $ \gamma'([\bar{s}',b'))\subset O^M_{\dot{\gamma}(0),\bar{\r}(r,n)}  $, then there is a sequence $s_k\to b$ for which 
$$\gamma'(s_k)\in O^M_{\dot{\gamma}(0),\bar{\r}(r,n)} \setminus  O^M_{T_n,r} $$
So by the above $\gamma'(s_k)$ is contained in a compact set for all $k$. This shows that $\gamma'$ is an inextendible timelike curve partially imprisoned in a compact set, contradicting  strong causality of $(M,g)$ (see e.g. \cite{Minguzzi2008} Prop. 2.5)
\end{proof}

In the remainder of this section we show that, indeed, $(M,g)$ not containing any intertwined future directed timelike geodesics is a sufficient condition for the existence of a maximal  regular future g-boundary extension (provided that $(M,g)$ is regular future g-boundary extendible at all).

\begin{definition}\label{def:Maximal_regular_extension}
    A regular future g-boundary extension $(\Nmax,\ima)$ of $(M,g)$ is said to be a maximal regular future g-boundary extension if any other regular future g-boundary extension $(N,\iota)$ satisfies $[N]\leq [\Nmax]$. 
\end{definition} 

\begin{remark}\label{rem:maxextension}
\begin{enumerate}
    \item   By this definition any maximal regular future g-boundary extension automatically has to be unique in the following sense: If $(\Nmax,\ima)$ and $(\hat{N}_{\mathrm{max}},\hat{\iota}_{\mathrm{max}})$ are two maximal regular future g-boundary extensions, then $[\Nmax]= [\hat{N}_{\mathrm{max}}]$, i.e., there exists a homeomorphism between them which, when pulled back by the embeddings $\ima$ resp.\ $\hat{\iota}_{\mathrm{max}}$, gives the identity on $M$. 
    \item   Clearly the equivalence class of any maximal regular future g-boundary extension has to be a maximal element for the partially ordered set $$\Iset=\{[N]: (N,\iota)\; \mathrm{is\,a\,regular\,future\,g-boundary\,extension}\}.$$ However, a  set theoretic maximal element of $\Iset $ need not satisfy that its representatives are maximal regular future g-boundary extensions in the sense of the above Definition \ref{def:Maximal_regular_extension}. 
    \item If any two set theoretic maximal elements $[N]_{\mathrm{max}}$ and $[N]'_{\mathrm{max}}$ for $\Iset$  are equal, then any representative for their equivalence class is a maximal regular future g-boundary extension.
\end{enumerate}
\end{remark}

\begin{definition}
    Let $(M,g)$ be a $C^{2}$ spacetime. It is called \emph{regular future g-boundary extendible} if the set the set of regular future g-boundary extensions of $(M,g)$ is non-empty.
\end{definition}

For instance, a sufficient condition for a $C^{2}$ globally hyperbolic $(M,g)$ to be regular future g-boundary extendible is that there exists a $C^2$ spacetime extension (in the usual sense, cf. Definition \ref{def:classicalextension}) with empty past boundary (cf. Section \ref{sec:compatibility}).

As mentioned, our goal of this section is to show that there exists a maximal  regular future g-boundary extension if $(M,g)$ does not contain any intertwined future directed timelike geodesics. The strategy of the proof proceeds as follows: We first show that if $(M,g)$ does not contain any intertwined timelike geodesics we can, essentially, do the same construction as in the previous section for \emph{any} two regular future g-boundary extensions $\Na$ and $\Nb$. That is, if we define
$N:=\Na \sqcup \Nb / \sim $ for an appropriate equivalence relation, then $N$ naturally becomes a regular future g-boundary extension. Our strategy essentially follows the one in Section \ref{subsec:settheoreticmax}, and we are even are able to make direct use of some of the results from that section, such as Lemmas \ref{lem:thickenings_under_pi} and \ref{lem:Nmax_regular_candidate} (cf.\ Remarks \ref{remarkquotient} and \ref{rem:regular_ok_whenever_quotient_compatible_and_pi_open}). However showing openness of the quotient map $\pi$ and Hausdorffness of the quotient topology (cf. Lemma \ref{lem:NtildeHausdorff}) becomes much more involved (and for both our proofs rely on not having intertwined timelike geodesics in $M$).

Once we have established this, we may choose $\Na$ and $\Nb$ to be representatives of set theoretic maximal elements $[N]_{\mathrm{max}}$ and $[N]'_{\mathrm{max}}$ to conclude that any two set theoretic maximal elements are equal (cf.\ Theorem \ref{theo:uniquemaximalextension}), which establishes that $\Nmax$ is indeed a maximal regular future g-boundary extension in the sense of Definition \ref{def:Maximal_regular_extension}.  

\begin{definition}\label{def:simarbitrary} Let $(\Na,\ia)$ and $(\Nb,\ib)$ be two regular future g-boundary extensions of a $C^2$ spacetime $(M,g)$ and let $p, q \in\Na\sqcup \Nb$. We say $ p\sim q $ if either
\begin{enumerate}
    \item $p\in \iota_a(M), q\in \iota_b(M)$ for some $a,b\in \{\alpha,\beta\}$ and $\iota_a^{-1}(p)=\iota_b^{-1}(q)$ or
    \item  $p\in N_a\setminus \iota_a(M), q\in  N_b\setminus \iota_b(M)$ for some $a,b\in \{\alpha,\beta\}$ and there exists a future directed timelike geodesic $\gamma:[0,1)\to M$ such that $\lim_{t\to 1^-}^{\Na}(\iota_a\circ\gamma_Y)(t)=p$ and $\lim_{t\to 1^-}^{\Nb}(\iota_b\circ\gamma_Y)(t)=q$ (note that this by definition requires both limits to exist). 
\end{enumerate}
\end{definition}

\begin{remark}
    \begin{enumerate}
    \item If $p$ and $q$ both lie in $\Na$ or both lie in $\Nb$, then $p\sim q$ iff $p=q$. 
        \item That this is indeed an equivalence relation (transitivity is not immediately obvious as we only demand the existence of a suitable $Y$ and this $Y$ may a priori depend on both $p$ and $q$) will follow from Lemma \ref{lem:one_geodesic_lim_eq_to_all_geodesic_lim_eq}.
        \item If one has $\Na \lesssim\Nb$ and w.l.o.g.\ $p\in \Na,q\in \Nb$, then $p\sim q$ according to Definition \ref{def:simarbitrary} if and only if $q=\psiab(p)$, i.e., if $p\sim q$ according to the definition in \eqref{eq:equivalencerelation}: If $q=\psiab(p)$, then for any future directed timelike geodesic $\gamma:[0,1)\to M$ with $\lim_{t\to 1^-}^{\Na}(\iota_a\circ\gamma_Y)(t)=p$ we have $\lim_{t\to 1^-}^{\Nb}(\iota_b\circ\gamma_Y)(t)= \lim_{t\to 1^-}^{\Nb}(\psiab\circ \iota_a\circ\gamma_Y)(t)=q$, so $p\sim q$ according to Definition \ref{def:simarbitrary}. On the other hand, if $p\sim q$ according to Definition \ref{def:simarbitrary}, then $\psiab$ being the continuous extension of $\ib\circ\ia^{-1}$ to all of $\Na$ implies $q=\psiab(p)$.
    \end{enumerate}
\end{remark}

\begin{lemma}\label{lem:one_geodesic_lim_eq_to_all_geodesic_lim_eq}Let $(\Na,\ia)$ and $(\Nb,\ib)$ be two regular future g-boundary extensions of a $C^2$ spacetime $(M,g)$. Let $p\in\Na\setminus \ia(M)$ and $q\in\Nb\setminus \ib(M)$. Then $ p\sim q $ if and only if for \emph{all} $Y\in T_tM$ with $\lim_{t\to b_Y^-}^{\Na}(\ia\circ\gamma_Y)(t)=p$ also $\lim_{t\to b_Y^-}^{\Nb}(\ib\circ\gamma_Y)(t)=q$ (and for \emph{all} $Y\in T_tM$ with $\lim_{t\to b_Y^-}^{\Nb}(\ib\circ\gamma_Y)(t)=q$ also $\lim_{t\to b_Y^-}^{\Na}(\ia\circ\gamma_Y)(t)=p$).
\end{lemma}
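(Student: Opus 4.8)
The plan is to show that the two convergence conditions appearing in the Lemma --- $\gamma_Y\to p$ in $\Na$ and $\gamma_Y\to q$ in $\Nb$ --- are each equivalent to one and the same condition \emph{intrinsic to} $(M,g)$, formulated via the timelike thickenings $O^M_{X,r}$ generated by a common witnessing geodesic. Since $p\sim q$ (boundary case), Definition \ref{def:simarbitrary}(2) supplies a future directed timelike geodesic which, after an affine reparametrization of its domain to $[0,1)$, we denote $\sigma:[0,1)\to M$, satisfying $\lim^{\Na}_{t\to 1^-}(\ia\circ\sigma)(t)=p$ and $\lim^{\Nb}_{t\to 1^-}(\ib\circ\sigma)(t)=q$. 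Because $\Na$ and $\Nb$ are regular, condition (2) of Definition \ref{def:regular-g-boundary-ext} applied to $\sigma$ tells us that $\{O^{\Na}_{\dot\sigma(1-\frac1n),\frac1m}\}_{n,m\in\mathbb{N}}$ is a neighborhood basis of $p$ in $\Na$ and $\{O^{\Nb}_{\dot\sigma(1-\frac1n),\frac1m}\}_{n,m\in\mathbb{N}}$ is a neighborhood basis of $q$ in $\Nb$. The crucial observation is that these two families are indexed by the \emph{same} tangent vectors $\dot\sigma(1-\frac1n)\in T_tM$, so the underlying subsets $O^M_{\dot\sigma(1-\frac1n),\frac1m}\subset M$ are literally the same object whether we test the topology of $\Na$ or of $\Nb$.

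Concretely, I would call a geodesic $\gamma_Y$ \emph{$\sigma$-absorbed} if for every $n,m\in\mathbb{N}$ there is $s\in(0,b_Y)$ with $\gamma_Y([s,b_Y))\subset O^M_{\dot\sigma(1-\frac1n),\frac1m}$, and then establish the two equivalences
\[
\gamma_Y\to p \text{ in }\Na \iff \gamma_Y\text{ is }\sigma\text{-absorbed}\iff \gamma_Y\to q\text{ in }\Nb.
\]
For the first equivalence: if $\ia\circ\gamma_Y\to p$, then since each $O^{\Na}_{\dot\sigma(1-\frac1n),\frac1m}$ is an open neighborhood of $p$, the curve $\ia\circ\gamma_Y$ eventually enters it; as $\ia(\gamma_Y(t))\in\ia(M)$ and $O^{\Na}_{X,r}\cap\ia(M)=\ia(O^M_{X,r})$ by Remark \ref{rem:OM_def_and_open}, we get $\gamma_Y(t)\in O^M_{\dot\sigma(1-\frac1n),\frac1m}$ for $t$ near $b_Y$, i.e.\ $\gamma_Y$ is $\sigma$-absorbed. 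Conversely, if $\gamma_Y$ is $\sigma$-absorbed, then $\ia(\gamma_Y(t))\in\ia(O^M_{\dot\sigma(1-\frac1n),\frac1m})\subset O^{\Na}_{\dot\sigma(1-\frac1n),\frac1m}$ eventually; given any neighborhood $U$ of $p$, the neighborhood-basis property yields $n,m$ with $O^{\Na}_{\dot\sigma(1-\frac1n),\frac1m}\subset U$, so $\ia\circ\gamma_Y$ is eventually in $U$ and hence converges to $p$. The second equivalence is proved identically after replacing $(\ia,\Na,p)$ by $(\ib,\Nb,q)$ and using the neighborhood basis of $q$. Chaining the two equivalences yields precisely both implications in the ``only if'' direction of the Lemma (that $\gamma_Y\to p$ in $\Na$ forces $\gamma_Y\to q$ in $\Nb$, and conversely).

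For the ``if'' direction I would argue that the stated universal property already forces $p\sim q$: since $p\in\Na\setminus\ia(M)=\pr\Na$, condition (1) of Definition \ref{def:regular-g-boundary-ext} provides at least one future directed timelike geodesic $\gamma_Y$ with $\lim^{\Na}_{t\to b_Y^-}(\ia\circ\gamma_Y)(t)=p$; the hypothesis then gives $\lim^{\Nb}_{t\to b_Y^-}(\ib\circ\gamma_Y)(t)=q$, so this $\gamma_Y$ witnesses $p\sim q$ via Definition \ref{def:simarbitrary}(2).

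I expect the main subtlety --- rather than a genuine obstacle --- to be the bookkeeping that passes back and forth between the interior thickenings $O^{N}_{X,r}\cap\iota(M)$ living in an extension and the intrinsic thickenings $O^M_{X,r}$ living in $M$ (cleanly handled by Remark \ref{rem:OM_def_and_open}), together with the harmless affine reparametrization needed so that Definition \ref{def:regular-g-boundary-ext}(2) applies to the common witness $\sigma$ on the normalized domain $[0,1)$. It is worth emphasizing that this argument uses neither strong causality nor the absence of intertwined geodesics: the entire point is that convergence to a boundary point in \emph{any} regular future g-boundary extension is encoded by the behavior of geodesics relative to the thickenings $O^M_{X,r}$ in $M$, so two such extensions that agree on a single geodesic limit must agree on all of them. (Those extra hypotheses will instead be needed later, for openness of the quotient map and for Hausdorffness.)
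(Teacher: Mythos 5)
Your proposal is correct and follows essentially the same route as the paper: fix the common witnessing geodesic $\sigma=\gamma_{Y_0}$ from Definition \ref{def:simarbitrary}, use that $\{O^{\Na}_{\dot\sigma(1-\frac1n),\frac1m}\}$ and $\{O^{\Nb}_{\dot\sigma(1-\frac1n),\frac1m}\}$ are neighborhood bases of $p$ and $q$ indexed by the \emph{same} data in $T_tM$, and transfer convergence through the intrinsic sets $O^M_{\dot\sigma(1-\frac1n),\frac1m}$ via Remark \ref{rem:OM_def_and_open}. Your explicit ``$\sigma$-absorbed'' reformulation and the short argument for the converse direction are just a cleaner packaging of the paper's argument, not a different proof.
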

\begin{proof}
Fix $p$ and $q$ and $Y_0 \in T_tM$ 
such that $\lim_{t\to b_{Y_0}^-}^{\Na}(\ia\circ\gamma_{Y_0})(t)=p$ and $\lim_{t\to b_{Y_0}^-}^{\Nb}(\ib\circ\gamma_{Y_0})(t)=q$. Let $Y\in T_tM$ be some other vector with $\lim_{t\to b_{Y}^-}^{\Na}(\ia\circ\gamma_{Y})(t)=p$. We need to show that $\lim_{t\to b_{Y}^-}^{\Nb}(\ib\circ\gamma_{Y})(t)$ exists and equals $q$: Since $\Nb$ is a regular future g-boundary extension, the collection $\{O_{T_n,r_m}^{\Nb}\}_{n,m\in\mathbb{N}}$, where $T_n:=\dot{\gamma}_{Y_0}(1-\frac{1}{n})$ and $r_m=\frac{1}{m}$,  is a neighborhood basis of $q$. Since $\Na$ is a regular future g-boundary extension as well, $\{O_{T_n,r_m}^{\Na}\}_{n,m\in \mathbb{N}}$ is a neighborhood basis for $p$. 
    Since $\{O_{T_n,r_m}^{\Na}\}_{n,m\in \mathbb{N}}$ is a neighborhood basis for $p$ and $\ia\circ\gamma_Y\to p$ by assumption, for any $n\in \mathbb{N}$ we can find $0<t_{n,m}<b_Y$ such that $\ia\circ\gamma_Y(t)\in O_{T_n,r_m}^{\Na}$ for all $t\in (t_{n,m},b_Y)$. By the definitions of $O_{T_n,r_m}^{\Na}$ and $O_{T_n,r_m}^{M}$ and Remark \ref{rem:OM_def_and_open}, this implies $\gamma_Y(t)\in O_{T_n,r_m}^{M}$ for all $t\in (t_{n,m},b_Y)$. Hence, again appealing to the definitions and Remark \ref{rem:OM_def_and_open}, we obtain $\ib\circ \gamma_Y(t)\in O_{T_n,r_m}^{\Nb}$ for all $t\in (t_{n,m},b_Y)$. Since this works for any $n,m$ and $\{O_{T_n,r_m}^{\Nb}\}_{n,m\in\mathbb{N}}$ is a neighborhood basis for $q$ we get $\lim_{t\to b_{Y}^-}^{\Nb}(\ib\circ\gamma_{Y})(t)=q$.
   \end{proof}

     Note that it was essential for the above proof that we could choose the same $T_n=\dot{\gamma}_{Y_0}(1-\frac{1}{n}), r_m$ for the neighborhood bases in $\Na$ and in $\Nb$ by the second condition in Definition \ref{def:regular-g-boundary-ext} because we already had one geodesic $\gamma_{Y_0}$ with the right limiting behavior in $\Na$ and $\Nb$. We will encounter this again when showing that $\pi$ is an open map.

So, $\sim$ from Definition \ref{def:simarbitrary} is indeed an equivalence relation and we may define the quotient space \begin{equation}\label{eq:defTildeN}
    \Tilde{N}\coloneqq (\Na\sqcup \Nb)/\sim.
\end{equation} As in Section \ref{subsec:settheoreticmax} we equip $\Tilde{N}$ with the quotient topology $\tau_q$. We proceed by showing that also in this case the quotient map $\Tilde{\pi}: \Na\sqcup \Nb\to \tilde{N} $ is open provided $(M,g)$ does not contain any intertwined timelike geodesics.

\begin{lemma}\label{lem:pitilde_open}
Let $(\Na,\ia)$ and $(\Nb,\ib)$ be two regular future g-boundary extensions of a strongly causal $C^2$ spacetime $(M,g)$. If no two timelike geodesics $\gamma_1,\gamma_2:[0,1)\to M$ with $\iota_{\alpha}\circ \gamma_1 $ converging to a $p_1\in  N_{\alpha}\setminus \iota_{\alpha}(\Na)$ and $\iota_{\beta}\circ \gamma_2 $ converging to a $p_2\in N_{\beta}\setminus \iota_{\beta}(\Nb)$ are intertwined,
then the projection map $\Tilde{\pi}:\Na\sqcup \Nb\rightarrow \Tilde{N}$ is open. 
\end{lemma}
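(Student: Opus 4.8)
The plan is to prove that $\tilde{\pi}^{-1}(\tilde{\pi}(U))$ is open for every open $U\subset\Na\sqcup\Nb$. Since $\tilde{\pi}(U)=\tilde{\pi}(U\cap\Na)\cup\tilde{\pi}(U\cap\Nb)$ and the two pieces are symmetric, I would reduce to an open set $U_\alpha\subset\Na$. Because $\sim$ identifies no two distinct points lying in the same factor, $\tilde{\pi}^{-1}(\tilde{\pi}(U_\alpha))=U_\alpha\sqcup V_\beta$ with $V_\beta:=\{q\in\Nb:\ q\sim p\text{ for some }p\in U_\alpha\}$, so the statement reduces to showing that $V_\beta$ is open in $\Nb$. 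The interior part is immediate: if $q_0\in V_\beta\cap\ib(M)$ then $q_0=\ib(x_0)$ with $\ia(x_0)\in U_\alpha$, and $\ib(\ia^{-1}(U_\alpha\cap\ia(M)))$ is an open neighbourhood of $q_0$ contained in $V_\beta$. The real work is therefore at a boundary point $q_0\in V_\beta\cap(\Nb\setminus\ib(M))$; here there is $p_0\in U_\alpha\cap(\Na\setminus\ia(M))$ with $q_0\sim p_0$, realised by a geodesic $\gamma_{Y_0}$ with $\ia\circ\gamma_{Y_0}\to p_0$ in $\Na$ and $\ib\circ\gamma_{Y_0}\to q_0$ in $\Nb$.

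For the neighbourhood I would set $T_n:=\dot{\gamma}_{Y_0}(1-\tfrac1n)$, $r_m:=\tfrac1m$, fix a relatively compact open $\hat{U}\ni p_0$ in $\Na$, and use the neighbourhood basis property of Definition \ref{def:regular-g-boundary-ext} to pick $n,m$ with $O^{\Na}_{T_n,r_m}\subset\hat{U}\cap U_\alpha$, so that $\overline{O^{\Na}_{T_n,r_m}}$ is compact. The corresponding $O^{\Nb}_{T_n,r_m}$ is open by regularity of $\Nb$ and contains $q_0$ (since $T_n\in B_{r_m}(T_n)$ and $\ib\circ\gamma_{T_n}\to q_0$), and I would aim to show $O^{\Nb}_{T_n,r_m}\subset V_\beta$. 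Interior points cause no trouble: a point $\ib(\gamma_Y(t))\in O^{\Nb,\mathrm{int}}_{T_n,r_m}$ is $\sim$ to $\ia(\gamma_Y(t))\in O^{\Na,\mathrm{int}}_{T_n,r_m}\subset U_\alpha$. For a boundary point $q'=\lim_{t\to b_Y^-}\ib(\gamma_Y(t))\in O^{\Nb,\partial}_{T_n,r_m}$, with $Y\in B_{r_m}(T_n)\cap T_tM$ (and $q'\in\Nb\setminus\ib(M)$, as a limit in the interior would contradict inextendibility of $\gamma_Y$), it suffices to prove that $\gamma_Y$ converges in $\Na$ as well: its limit then automatically lies in $O^{\Na,\partial}_{T_n,r_m}\subset U_\alpha$, and $q'\sim\lim_{t\to b_Y^-}\ia(\gamma_Y(t))\in V_\beta$.

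The hard part, and the only place where the absence of intertwined geodesics is genuinely used, is exactly this: showing that $\ia\circ\gamma_Y$ converges rather than oscillates. Since $Y\in B_{r_m}(T_n)$ the whole geodesic satisfies $\gamma_Y((0,b_Y))\subset O^M_{T_n,r_m}$, so by Remark \ref{rem:OM_def_and_open} $\ia\circ\gamma_Y$ stays in the relatively compact set $\overline{O^{\Na}_{T_n,r_m}}$ and hence has at least one accumulation point $p'$ as $t\to b_Y^-$; strong causality (no partial imprisonment of the inextendible $\gamma_Y$) forces $p'\in\Na\setminus\ia(M)$. I would then show that every such accumulation point is in fact a limit, whence it is unique and $\ia\circ\gamma_Y\to p'$. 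To do this, pick $\sigma$ with $\ia\circ\sigma\to p'$ (condition (1) of Definition \ref{def:regular-g-boundary-ext}); the pair $(\sigma,\gamma_Y)$ has $\ia\circ\sigma$ limiting onto a boundary point of $\Na$ and $\ib\circ\gamma_Y$ onto $q'\in\Nb\setminus\ib(M)$, so by hypothesis it is not intertwined. If it were to \emph{part}, there would be parameters $a,c$ and radii $r,\rho$ with $O^M_{\dot{\sigma}(a),r}\cap O^M_{\dot{\gamma}_Y(c),\rho}=\emptyset$; but along the accumulating sequence $t_k\to b_Y$ one has $\gamma_Y(t_k)\in O^M_{\dot{\sigma}(a),r}$ (using $\ia\circ\gamma_Y(t_k)\to p'$ together with the fact that thickenings about later tangents of $\sigma$ sit inside $O^M_{\dot{\sigma}(a),r}$, by continuous dependence of the geodesic flow) and also $\gamma_Y(t_k)\in O^M_{\dot{\gamma}_Y(c),\rho}$ for $t_k>c$, contradicting disjointness. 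Hence the pair must \emph{merge}, and Lemma \ref{lem:propertiesthickenings} together with the neighbourhood basis $\{O^{\Na}_{\dot{\sigma}(1-\frac1j),\frac1l}\}$ of $p'$ gives $\ia\circ\gamma_Y\to p'$. This establishes convergence of $\gamma_Y$ in $\Na$, hence $O^{\Nb}_{T_n,r_m}\subset V_\beta$, and therefore that $\tilde{\pi}$ is open. The main obstacle, as indicated, is ruling out oscillation of $\ia\circ\gamma_Y$, which is precisely where the merge/part dichotomy (no intertwining) and strong causality must be combined.
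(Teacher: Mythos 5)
Your proposal is correct and follows essentially the same route as the paper's proof: the same reduction to showing $(\Tilde{\pi}|_{\Nb})^{-1}(\Tilde{\pi}(U_\alpha))$ is open, the same choice of candidate neighbourhood $O^{\Nb}_{T_n,r_m}$ with $\overline{O^{\Na}_{T_n,r_m}}$ compact in $U_\alpha$, and the same use of relative compactness plus the merge/part dichotomy (via Lemma \ref{lem:propertiesthickenings} in the merge case) to show that $\ia\circ\gamma_Y$ actually converges. Your explicit strong-causality argument for why the accumulation point lies in $\Na\setminus\ia(M)$ is a point the paper leaves implicit, but otherwise the two arguments coincide.
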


  \noindent \textit{Proof.}  By definition of the quotient topology in $(\Na\sqcup \Nb)/\sim$, the projection map $\Tilde{\pi}$ is open if and only if for $a,b\in \{\alpha,\beta\}$ we have that for any open $U\subset N_a$ the image $\Tilde{\pi}(U)$ is open, i.e., the preimage $\Tilde{\pi}^{-1}(\Tilde{\pi}(U))=U\cup (\Tilde{\pi}|_{N_b})^{-1}(\Tilde{\pi}(U))$ is open in $\Na \sqcup\Nb$.  Hence, restricting ourselves to the exemplary case of  $a=\alpha$, $b=\beta$ for simplicity, it is sufficient to show $(\Tilde{\pi}|_{\Nb})^{-1}(\Tilde{\pi}(U))$ is open for any open $U\subset \Na$. 
    So let $U\subset \Na$ be open. We show that for any $q\in(\Tilde{\pi}|_{\Nb})^{-1}(\Tilde{\pi}(U))$ there exists an open neighborhood $V\subset \Nb$ of $q$ with $\Tilde{\pi}(V)\subset \Tilde{\pi}(U)$, i.e. satisfying $V\subset (\Tilde{\pi}|_{\Nb})^{-1}(\Tilde{\pi}(U))$.

    If $q\in \ib(M)$, then for any open neighborhood $V'\subset M$ of $\ib^{-1}(q)$ in $M$ we then have that $ V:=\ib(\ia^{-1}(U)\cap V') $ is open in $\Nb$ (note that $\ib$ is an open map by assumption), contains $q$ and clearly satisfies $\Tilde{\pi}(V) \subset \Tilde{\pi}(U)$ by the definition of the equivalence relation.
    
   The more interesting case is $q\notin \ib(M)$. This implies $\Tilde{\pi}(q)=\Tilde{\pi}(p)$ for a unique $p\in U\setminus  \ia(M)$ and that there exists $Y\in T_tM$ with  $\lim_{t\to 1^-}^{\Na}(\ia\circ\gamma_Y)(t)=p$ and  $\lim_{t\to 1^-}^{\Nb}(\ib\circ\gamma_Y)(t)=q$. Let us again denote  $T_n:=\dot{\gamma}_{Y}(1-\frac{1}{n})$ and $r_m:=\frac{1}{m}$. We will show that there exist $n_0,m_0\in \mathbb{N}$ for which we may take $V=O^{\Nb}_{T_{n_0},r_{m_0}}$, i.e., that $\Tilde{\pi}(O^{\Nb}_{T_{n_0},r_{m_0}})\subset \Tilde{\pi}(U)$.

    Since $\{O_{T_n,r_m}^{\Na}\}_{n,m\in \mathbb{N}}$ is a neighborhood basis at $p$ (remembering that $\Na$ is a regular future g-boundary extension and $p=\lim_{t\to 1^-}\ia\circ\gamma_Y(t)$) and $U$ is an open neighborhood of $p$, we must have $O_{T_{n_0},r_{m_0}}^{\Na}\subset U$ for some $n_0,m_0$. Since $\Na$ is a topological manifold, we can further w.l.o.g.\ assume that $\overline{O_{T_{n_0},r_{m_0}}^{\Na}}$ is compact and also contained in $U$. Fix these $n_0,m_0\in \mathbb{N}$ and assume $\Tilde{\pi}(O^{\Nb}_{T_{n_0},r_{m_0}})\not\subset \Tilde{\pi}(U)$. Then there exists $q_0\in O^{\Nb}_{T_{n_0},r_{m_0}}$ with $\Tilde{\pi}(q_0)\notin \Tilde{\pi}(U)$.

   We now distinguish two cases: Either $q_0$ is contained in $ \ib(M)$ or $q_0\notin \ib(M)$. In the first case 
   $\ib^{-1}(q_0)\in O_{T_{n_0},r_{m_0}}^{M}$ implying $\ia(\ib^{-1}(q_0))\in O_{T_{n_0},r_{m_0}}^{\Na}\subset  U$. This contradicts $\Tilde{\pi}(q_0)\notin \Tilde{\pi}(U)$. 
   
     \begin{wrapfigure}{r}{0.4 \textwidth}
  	\centering \includegraphics[width=0.38\textwidth]{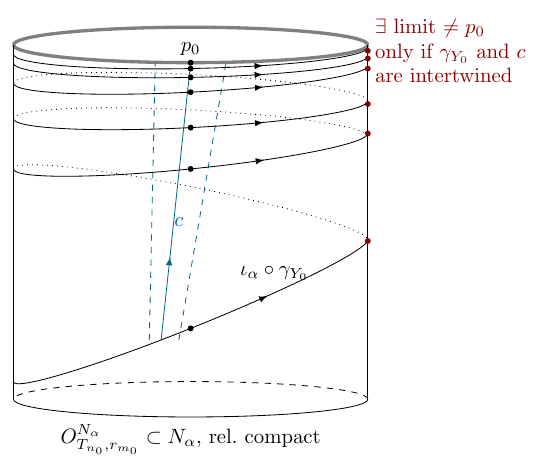}
  	\caption{Illustration of the last part of the proof based loosely on the situation in the Milne spacetime, which does not admit a unique maximal extension: If there were two different limits $p_0:=\lim \ia \circ \gamma_{Y_0}(t_k)\neq \lim \ia\circ \gamma_{Y_0}(t'_k)$, then any future directed timelike geodesic $c$ in $M$ with $\ia\circ c$ terminating in $p_0$ would be intertwined with the original $\gamma_{Y_{0}}$.}
  	\label{fig:2}
  \end{wrapfigure} So $q_0\notin \ib(M)$. Since $q_0\in \Nb\setminus  \ib(M)$, there exists $Y_0\in T_tM$ with $b_{Y_0}= 1$ and $$\lim_{t\to 1^-} \ib\circ\gamma_{Y_0}(t)=q_0.$$ 
  Because $\lim_{t\to 1^-} \ib\circ\gamma_{Y_0}(t)=q_0$ and $q_0\in O_{T_{n_0},r_{m_0}}^{\Nb}$ is open, there exists $t_0$ such that $\ib\circ\gamma_{Y_0}([t_0,1))\subset O_{T_{n_0},r_{m_0}}^{\Nb}$, hence $\gamma_{Y_0}([t_0,1))\subset O_{T_{n_0},r_{m_0}}^{M}$ and thus 
  $\ia\circ\gamma_{Y_0}([t_0,1))\subset O_{T_{n_0},r_{m_0}}^{\Na}$. 
  Now note that \emph{if} $$\lim_{t\to 1^-} \ia\circ\gamma_{Y_{0}}(t)=:p_{0}$$ exists in $\Na$ then we will have  $\Tilde{\pi}(p_{0})=\Tilde{\pi}(q_{0})$ by definition and continuity of $\Tilde{\pi}$ (remembering that $q_{0}=\lim_{t\to 1^{-}}(\ib\circ\gamma_{Y_{0}})(t)$). Further, since we chose $n_0,m_0$ such that $\overline{O_{T_{n_0},r_{m_0}}^{\Na}}\subset U$, the limit $p_{0}$ must be in $U$. So together this would contradict $\Tilde{\pi}(q_{0})\notin \Tilde{\pi}(U)$.

   It thus remains to show that this limit exists. By relative compactness of $O^{\Na}_{T_{n_0},r_{m_0}}$ there always exists a sequence $t_k\to 1$ such that $\ia \circ \gamma_{Y_{0}}(t_k)$ converges. Let's denote this limit by $p_{0}$. We will exploit the fact that $(M,g)$ does not contain any intertwined timelike geodesics to argue that actually $\lim_{t\to 1^-} (\ia\circ\gamma_{Y_{0}})(t)=p_{0}$.
    Since    $p_{0}\in \Na\setminus\ia(M)$, there must exist some timelike geodesic $c : [0,1)\to M$  such that $p_{0}=\lim_{t\to 1^-} (\ia\circ c)(t)$. Since $c$ and   $\gamma_{Y_{0}}$ cannot be intertwined by assumption they either satisfy point (i) in Definition \ref{defintertwined}, in which case Lemma \ref{lem:propertiesthickenings} applies and we can conclude that for any $n,m\in \mathbb{N}$ there exists $s'\equiv s'(n,m)$ such that $(\ia\circ \gamma_{Y_{0}})([s',1))\in O^{\Na}_{\dot{c}(1-\frac{1}{n}),\frac{1}{m}}$, implying that $\lim_{t\to 1^-} (\ia\circ\gamma_{Y_{0}})(t)=\lim_{t\to 1^-} \ia(c(t))=p_{0}$ by the neighborhood-basis property of $\{O^{\Na}_{\dot{c}(1-\frac{1}{n}),\frac{1}{m}}\}_{n,m \in \mathbb{N}}$ (and Hausdorffness of $\Na$). Or they satisfy point (ii) in Definition \ref{defintertwined}, that is there exist $s_1,s_2\in (0,1)$ and $r,\r>0$ such that $O_{\dot{\gamma}_{Y_{0}}(s_1),r}^{M} \cap O_{\dot{c}(s_2),\r}^{M}=\emptyset $. But this is impossible because for any $s_1,s_2\in (0,1)$ and $r,\r>0$ we have $ \gamma_{Y_{0}}(t_k)\in O^M_{\dot{\gamma}_{Y_{0}}(s_1),r} \cap O_{\dot{c}(s_2),\r}^{M}$ for all large enough $k$: On the one hand, for any $s_1\in(0,1)$ there clearly exists $K$ such that $t_k\geq s_1$ for all $k\geq K$ and then $\gamma_{Y_{0}}(t_k)\in O^M_{\dot{\gamma}_{Y_{0}}(s_1),r}$ for all $r>0$. On the other hand, for any $s_2\in (0,1),\r>0$ the set $O^{\Na}_{\dot{c}(s_2),\r}$ is an open set (as $\Na$ is a regular future g-boundary extension), contains $p_{0}=\lim_{t\to 1}(\ia\circ c)(t)$ and 
   $\ia \circ \gamma_{Y_{0}}(t_k)\to p_{0}$, so there also exists $K$ such that $\ia \circ\gamma_{Y_{0}}(t_k)\in O^{\Na}_{\dot{c}(s_2),\r}$ for all $k\geq K$. \qed \\

As in Section \ref{subsec:settheoreticmax} we define a map $\Tilde{\iota}: M \rightarrow \Tilde{N}$ via
\begin{equation}\label{eq:iota_tilde}
   \Tilde{\iota}(p):= \Tilde{\pi}(\ia(p))=\Tilde{\pi}(\ib(p)).
\end{equation}
This map is well-defined and, since $\Tilde{\pi}$ is an open map, a (topological) embedding onto the open set $\Tilde{\iota}(M)\subset \Tilde{N}$. Further $\overline{\Tilde{\iota}(M)}^{\tau_q}=\Tilde{N}$ since $\overline{\ia(M)}^{\Na}\sqcup \overline{\ib(M)}^{\Nb}=\Na\sqcup \Nb$ and $\Tilde{\pi} $ is continuous, so $((\Tilde{N},\tau_q),\Tilde{\iota})$ is a candidate for future boundary extension of $(M,g)$. We may now appeal to Lemma \ref{lem:Nmax_regular_candidate} and Remark \ref{rem:regular_ok_whenever_quotient_compatible_and_pi_open} to conclude that in fact

\begin{lemma}\label{lem:Ntilde_regular_candidate}
Under the assumptions of Lemma \ref{lem:pitilde_open}     $((\Tilde{N}, \tau_q),\Tilde{\iota})$ is a candidate for a regular future g-boundary extension of $(M,g)$.
 \end{lemma}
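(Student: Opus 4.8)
The plan is to reduce the statement to the abstract argument already carried out in Lemma \ref{lem:Nmax_regular_candidate}, rather than redoing it. By Definition \ref{def:candidate_regular_boundary_ext}, proving that $((\Tilde{N},\tau_q),\Tilde{\iota})$ is a candidate for a regular future g-boundary extension amounts to four things: that it is a candidate for a future boundary extension, that every timelike thickening $O^{\Tilde{N}}_{X,r}$ is $\tau_q$-open, that every $p\in\Tilde{N}\setminus\Tilde{\iota}(M)$ is a limit $\lim_{t\to1^-}\Tilde{\iota}(\gamma(t))$ for some future directed timelike geodesic $\gamma$, and that for every such $\gamma$ the collection $\{O^{\Tilde{N}}_{\dot{\gamma}(1-\frac1n),\frac1m}\}_{n,m\in\mathbb{N}}$ is a neighborhood basis of $p$. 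The first of these has just been established in the paragraph preceding the statement (where $\Tilde{\iota}$ was shown to be an embedding onto an open set with $\overline{\Tilde{\iota}(M)}^{\tau_q}=\Tilde{N}$), so I would only need to address the remaining three.

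For those I would invoke the abstract form of Lemma \ref{lem:Nmax_regular_candidate} isolated in Remark \ref{rem:regular_ok_whenever_quotient_compatible_and_pi_open}, which, together with Remark \ref{remarkquotient} for the underlying Lemma \ref{lem:thickenings_under_pi}, guarantees precisely these three properties for any ``disjoint-union-and-identify'' quotient as soon as three purely structural facts hold: (1) each restriction $\Tilde{\pi}|_{N_a}$, $a\in\{\alpha,\beta\}$, is injective; (2) the embedding is compatible with the quotient, i.e.\ $\Tilde{\iota}=\Tilde{\pi}\circ\iota_a$ for $a\in\{\alpha,\beta\}$; and (3) $\Tilde{\pi}$ is an open map. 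The whole point of those two Remarks is that neither Lemma \ref{lem:thickenings_under_pi} nor Lemma \ref{lem:Nmax_regular_candidate} used totality of the ordering or any detail of the equivalence relation beyond (1)--(3), so their proofs apply verbatim to the two-factor quotient $\Tilde{N}=(\Na\sqcup\Nb)/\sim$. It then remains to check (1)--(3) here: property (2) is immediate from the definition \eqref{eq:iota_tilde} of $\Tilde{\iota}$; property (1) follows from the first item of the Remark after Definition \ref{def:simarbitrary}, namely that two points lying in the same factor $N_a$ satisfy $p\sim q$ if and only if $p=q$; and property (3) is exactly the content of Lemma \ref{lem:pitilde_open}, which holds under the standing no-intertwined-geodesics hypothesis.

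The only genuinely nontrivial ingredient is property (3), openness of $\Tilde{\pi}$, and that is where the absence of intertwined timelike geodesics is actually consumed; but this has already been secured in Lemma \ref{lem:pitilde_open}. Hence the present lemma is, by design, essentially an application, and the main thing I would be careful to confirm is that the hypotheses abstracted in Remarks \ref{remarkquotient} and \ref{rem:regular_ok_whenever_quotient_compatible_and_pi_open} transfer unchanged. In particular, in the neighborhood-basis step one lifts a boundary point $p$ to some $q\in N_a\setminus\iota_a(M)$ and wishes to apply regularity of $N_a$ to the \emph{given} geodesic $\gamma$; this is legitimate because openness and injectivity of $\Tilde{\pi}|_{N_a}$ make $\Tilde{\pi}|_{N_a}$ a homeomorphism onto an open set containing $p$, so $\iota_a\circ\gamma(t)=(\Tilde{\pi}|_{N_a})^{-1}(\Tilde{\iota}\circ\gamma(t))\to q$ automatically, and Lemma \ref{lem:one_geodesic_lim_eq_to_all_geodesic_lim_eq} ensures this limiting behavior is consistent across all geodesics terminating at $p$. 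With (1)--(3) verified, the cited machinery delivers openness of all $O^{\Tilde{N}}_{X,r}$, a terminating geodesic for each boundary point, and the neighborhood-basis property, completing the proof.
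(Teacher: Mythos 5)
Your proposal is correct and follows essentially the same route as the paper: the paper likewise establishes that $((\Tilde{N},\tau_q),\Tilde{\iota})$ is a candidate for a future boundary extension and then appeals directly to Lemma \ref{lem:Nmax_regular_candidate} via Remark \ref{rem:regular_ok_whenever_quotient_compatible_and_pi_open} (and Remark \ref{remarkquotient}), with openness of $\Tilde{\pi}$ from Lemma \ref{lem:pitilde_open} as the only nontrivial input. Your explicit verification of the three structural hypotheses (injectivity of $\Tilde{\pi}|_{N_a}$, compatibility of $\Tilde{\iota}$ with the quotient, and openness of $\Tilde{\pi}$) is exactly what the cited remarks require.
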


Next it is shown that, provided there are no intertwined timelike geodesics in $M$, $(\Tilde{N}, \tau_q)$ is Hausdorff. 

\begin{lemma}\label{lem:NtildeHausdorff}
    Let $(\Na,\ia)$ and $(\Nb,\ib)$ be two regular future g-boundary extensions of a strongly causal $C^2$ spacetime $(M,g)$, $\Tilde{N}\coloneqq (\Na\sqcup \Nb)/\sim$, $\Tilde{\iota}$ as in expression \eqref{eq:iota_tilde} and $\tau_{q}$ the quotient topology on $\Tilde{N}$. If no two timelike geodesics $\gamma_1,\gamma_2:[0,1)\to M$ with $\iota_{\alpha}\circ \gamma_1 $ converging to a $p_1\in  N_{\alpha}\setminus \iota_{\alpha}(\Na)$ and $\iota_{\beta}\circ \gamma_2 $ converging to a $p_2\in N_{\beta}\setminus \iota_{\beta}(\Nb)$ are intertwined,
    then $\Tilde{N}$ is a topological Hausdorff space.
\end{lemma}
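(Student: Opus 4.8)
The plan is to separate two arbitrary distinct points $q_1\neq q_2\in\Tilde{N}$ by disjoint open sets, distinguishing cases according to which of $\Na,\Nb$ contains representatives of the $q_i$. Recall that $\Tilde{\pi}$ is open (Lemma \ref{lem:pitilde_open}) and that within a single extension the relation $\sim$ from Definition \ref{def:simarbitrary} is trivial, so each $q$ has at most one preimage in $\Na$ and at most one in $\Nb$, while every interior point $\Tilde{\iota}(x)$ has the two representatives $\ia(x)\in\Na$ and $\ib(x)\in\Nb$. First I would dispose of the easy configuration: if $q_1$ and $q_2$ both admit representatives $p_1,p_2$ in the \emph{same} extension, say $\Na$, then $p_1\neq p_2$, and using Hausdorffness of $\Na$ I pick disjoint open $U_1\ni p_1$, $U_2\ni p_2$. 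Their images $\Tilde{\pi}(U_1),\Tilde{\pi}(U_2)$ are open by openness of $\Tilde{\pi}$, and they are disjoint: a common image point would force $a_1\sim a_2$ for some $a_1\in U_1$, $a_2\in U_2$, both lying in $\Na$, whence $a_1=a_2\in U_1\cap U_2=\emptyset$. Since interior points always have representatives in both extensions, this covers every configuration except the genuinely hard one, namely that $q_1$ is a boundary point whose only representative $p_1$ lies in $\Na\setminus\ia(M)$ while $q_2$ is a boundary point whose only representative $p_2$ lies in $\Nb\setminus\ib(M)$ (or vice versa).

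In this remaining case I would choose, using that $\Na,\Nb$ are regular future g-boundary extensions, future directed timelike geodesics $\gamma_1,\gamma_2:[0,1)\to M$ with $\ia\circ\gamma_1\to p_1$ in $\Na$ and $\ib\circ\gamma_2\to p_2$ in $\Nb$. These are exactly geodesics of the kind appearing in the hypothesis, hence \emph{not} intertwined, so by Definition \ref{defintertwined} they either merge (case (i)) or part (case (ii)). I would first rule out merging: if $\gamma_1,\gamma_2$ satisfy (i), then Lemma \ref{lem:propertiesthickenings} (where strong causality enters) shows that for all $n,m$ the tail of $\gamma_2$ eventually lies in $O^M_{\dot\gamma_1(1-\frac1n),\frac1m}$, hence $\ia\circ\gamma_2$ eventually lies in $O^{\Na}_{\dot\gamma_1(1-\frac1n),\frac1m}$. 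As these thickenings form a neighborhood basis of $p_1$, this means $\ia\circ\gamma_2$ converges in $\Na$ to $p_1$; but then $\gamma_2$ converges in $\Na$ to $p_1$ and in $\Nb$ to $p_2$, so $p_1\sim p_2$ by Definition \ref{def:simarbitrary}, forcing $q_1=q_2$ and contradicting $q_1\neq q_2$. Thus only the parting case (ii) can occur.

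In case (ii) there are $s_1,s_2\in(0,1)$ and radii $r,\rho>0$ with $O^M_{\dot\gamma_1(s_1),r}\cap O^M_{\dot\gamma_2(s_2),\rho}=\emptyset$, and I would separate $q_1,q_2$ by the thickenings $O^{\Tilde{N}}_{\dot\gamma_1(s_1),r}$ and $O^{\Tilde{N}}_{\dot\gamma_2(s_2),\rho}$, which are open by Lemma \ref{lem:Ntilde_regular_candidate}. Taking the central generating vectors $\dot\gamma_1(s_1)$ resp.\ $\dot\gamma_2(s_2)$ shows $q_1\in O^{\Tilde{N}}_{\dot\gamma_1(s_1),r}$ and $q_2\in O^{\Tilde{N}}_{\dot\gamma_2(s_2),\rho}$, since the forward continuation of $\gamma_i$ from $s_i$ has the same limit. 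For disjointness I would invoke the decomposition $\Tilde{\pi}^{-1}(O^{\Tilde{N}}_{X,r})=O^{\Na}_{X,r}\sqcup O^{\Nb}_{X,r}$ (Lemma \ref{lem:thickenings_under_pi}, applicable here by Remark \ref{remarkquotient}), which reduces the claim to showing $O^{N_a}_{\dot\gamma_1(s_1),r}\cap O^{N_a}_{\dot\gamma_2(s_2),\rho}=\emptyset$ separately in each $N_a$, $a\in\{\alpha,\beta\}$. The interior parts are disjoint because $\ia$ (resp.\ $\ib$) is injective and the underlying $O^M$'s are disjoint; the only danger is a shared boundary limit point.

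The main obstacle—and where the absence of intertwined geodesics is used a second time—is excluding such a shared boundary point. A shared $w\in N_a\setminus\iota_a(M)$ would arise as a common limit of two geodesics $\gamma_Y,\gamma_Z$ with generating vectors $Y\in B_r(\dot\gamma_1(s_1))$ and $Z\in B_\rho(\dot\gamma_2(s_2))$. On the one hand, two geodesics converging to the same boundary point of the regular extension $N_a$ necessarily merge: their common neighborhood basis of thickenings at $w$ forces each tail into every thickening of the other, which is precisely condition (i). On the other hand, since for small $\varepsilon,\varepsilon'$ one has $O^M_{Y,\varepsilon}\subset O^M_{\dot\gamma_1(s_1),r}$ and $O^M_{Z,\varepsilon'}\subset O^M_{\dot\gamma_2(s_2),\rho}$ (by containment of generating balls, up to passing to slightly later tangents to meet the index constraints of Definition \ref{defintertwined}), the pair $\gamma_Y,\gamma_Z$ satisfies condition (ii), i.e.\ parts. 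As (i) and (ii) are mutually exclusive, this is a contradiction, so no shared boundary point exists and the two thickenings are disjoint. This separates $q_1$ and $q_2$ and completes the proof; the interplay of this merge/part dichotomy with the thickening decomposition is the step I expect to require the most care.
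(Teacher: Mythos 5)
Your proposal is correct and runs parallel to the paper's proof in its overall architecture: the same case split (both points represented in one $N_a$ versus one boundary representative in each), the same use of openness of $\Tilde{\pi}$ and Hausdorffness of $\Na,\Nb$ for the easy case, and the same merge/part dichotomy with Lemma \ref{lem:propertiesthickenings} ruling out the merging alternative via $p_1\sim p_2$. The one place you diverge is the disjointness of the separating thickenings in the parting case. The paper handles this in two lines: since $\Tilde{\iota}(O^M_{\dot\gamma_1(s_1),r})=O^{\Tilde{N}}_{\dot\gamma_1(s_1),r}\cap\Tilde{\iota}(M)$ and likewise for $\gamma_2$, the intersection $O^{\Tilde{N}}_{\dot\gamma_1(s_1),r}\cap O^{\Tilde{N}}_{\dot\gamma_2(s_2),\rho}$ is an \emph{open} set (Lemma \ref{lem:Ntilde_regular_candidate}) missing the \emph{dense} set $\Tilde{\iota}(M)$, hence empty. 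You instead pass through the preimage decomposition of Lemma \ref{lem:thickenings_under_pi} (legitimately, via Remark \ref{remarkquotient}) and exclude a shared boundary point $w$ by showing that two generating geodesics $\gamma_Y,\gamma_Z$ with the same endpoint in $N_a$ would have to satisfy both alternatives of Definition \ref{defintertwined}. This works — in fact one can shortcut your ``mutual exclusivity'' claim, which Definition \ref{defintertwined} does not assert outright: the neighborhood-basis property at $w$ forces the tail of $\gamma_Z$ into $O^M_{Y,\varepsilon}\subset O^M_{\dot\gamma_1(s_1),r}$, while it trivially lies in $O^M_{\dot\gamma_2(s_2),\rho}$, and these two sets are disjoint. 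So your route is valid but buys nothing over the density argument, and it does oblige you to spell out the containments $O^M_{\dot\gamma(s),\varepsilon}\subset O^M_{\dot\gamma(0),r}$ (continuous dependence backwards in the parameter) that the paper only needs in the remark preceding Lemma \ref{lem:propertiesthickenings}. Everything else, including the observation that the forward reparametrization of $\gamma_i$ from $s_i$ still converges to $q_i$ so that $q_i$ lies in its thickening, matches the paper.
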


\begin{proof}
    Consider two distinct points $p,q\in \Tilde{N}$. We will separate two cases: Either $p,q\in \Tilde{\pi}(N_a)$ for some $a\in \{\alpha,\beta\}$ or $p\in \Tilde{\pi}(N_a)\setminus \Tilde{\pi}(N_b)$ and $q\in \Tilde{\pi}(N_b)\setminus\Tilde{\pi}(N_a)$ for some $a,b\in \{\alpha,\beta\}$ with $a\neq b$. So, let $p,q\in \Tilde{\pi}(N_a)$ and let $p_a,q_a\in N_a$ be the unique points such that $p=\Tilde{\pi}(p_a)$ and $q=\Tilde{\pi}(q_a)$ (noting that $\Tilde{\pi}\big|_{N_a}$ is injective). Hausdorffness of $N_a$ implies that there exist disjoint neighborhoods $U,V\subset N_a$ of $p_a$ and $q_a$ respectively and hence, by openness of $\Tilde{\pi}$ and injectivity of $\Tilde{\pi}|_{N_a}$, $\Tilde{\pi}(U)$ and $\Tilde{\pi}(V)$ are disjoint open neighborhoods of $p$ and $q$ respectively. 

    It remains to show that there exist disjoint neighborhoods of $p,q$ when $p\in \Tilde{\pi}(\Na)\setminus\Tilde{\pi}(\Nb)$ and $q\in \Tilde{\pi}(\Nb)\setminus\Tilde{\pi}(\Na)$ (or vice versa). Let $\pa\in\Na$ and $\qb\in\Nb$ be the unique points such that $p=\Tilde{\pi}(\pa)$ and $q=\Tilde{\pi}(\qb)$. This implies that $\qb\in \Nb\setminus \ib(M)$ and $\pa \in \Na\setminus \ia(M)$ (otherwise, $q\in\Tilde{\pi}(\Na)$ or $p\in \Tilde{\pi}(\Nb)$). Hence, there exist timelike geodesics $\gamma_{1}: [0,1)\rightarrow M$ and $\gamma_{2} : [0,1)\rightarrow M$ with $\lim_{s\to 1}(\ia\circ\gamma_{1})(s)= \pa$ and $\lim_{s\to 1}(\ib\circ\gamma_{2})(s)= \qb$, which, by assumption, are not intertwined. In other words, $\gamma_{1}$ and $\gamma_{2}$ satisfy either condition (i) or (ii) of Definition \ref{defintertwined} which we will now discuss separately.

    In the first place, suppose that for any radii $r>0$, $\rho> 0$ there exist $s_{1}\in (0,1)$, $s_{2}\in\left(0,1\right)$ such that $\ga(\left[s_{1},1)\right)\subset O^M_{\dot{\gamma}_2(0),\rho}$
    and $\gb(\left[s_{2},1)\right)\subset O^M_{\dot{\gamma}_1(0),r}$. Moreover, let $\{O^{\Na}_{X_n,r_m}\}_{n,m\in\mathbb{N}}$ and $\{O^{\Nb}_{Y_n,\rho_m}\}_{n,m\in\mathbb{N}}$ be the associated neighborhood basis of $\pa$ and $\qb$ with $X_{n}=\dot\gamma_{1}(1-\frac{1}{n})$, $r_{m}=1/m$, $Y_{n}=\dot\gamma_{2}(1-\frac{1}{n})$ and $\rho_{m}=1/m$. Then, by Lemma \ref{lem:propertiesthickenings}, it holds that there exist $s'_{1}\in(0,1), s'_{2}\in(0,1)$ such that $\ga(\left[s'_{1},1)\right)\subset O^{M}_{Y_n,\rho_m}$ and $\gb(\left[s'_{2},1)\right)\subset O^{M}_{X_n,r_m}$.
    Thus, for all $n,m\in \mathbb{N}$, there exist $s'_{1}(n,m)\in (0,1), s'_{2}(n,m)\in(0,1)$ such that $(\ib\circ\ga)(s)\subset O^{\Nb}_{Y_n,\rho_m}\cap\ib(M)$  for $s\in[s'_{1},1)$
    and $(\ia\circ\gb)(s)\subset O^{\Na}_{X_n,r_m}\cap\ia(M)$ for $s\in\left[s'_{2},1\right)$. 
    Since $O^{\Na}_{X_n,r_m}$ and $O^{\Nb}_{Y_n,\rho_m}$ are neighborhood bases of $\pa$ and $\qb$ respectively (and by Hausdorffness of $\Na$ and $\Nb$), this implies that $\lim_{s\to 1^{-}}(\ia\circ\gamma_{2})(s)= \pa$ and $\lim_{s\to 1^{-}}(\ib\circ\gamma_{1})(s)= \qb$. So in particular we have $\lim_{s\to 1^{-}}(\ia\circ\gamma_{2})(s)= \pa$ but we originally chose $\gamma_2$ such that $\lim_{s\to 1^{-}}(\ib\circ\gamma_{2})(s)= \qb$, so by definition of the equivalence relation, we have that $\Tilde{\pi}(\pa)=\Tilde{\pi}(\qb)$, which contradicts our initial assumption that $p\neq q$.
    
    Finally, consider the case that there exist $s_{1}\in (0,1)$, $s_{2}\in\left(0,1\right)$ and radii $r,\r >0$ such that $O^M_{\dot{\gamma}_{1}(s_1),r} \cap O^M_{\dot{\gamma}_{2}(s_2),\r}=\emptyset$. As $\Tilde{\iota}$ is injective, $\emptyset=\Tilde{\iota}(O^{M}_{\dot \gamma_{1}(s_{1}),r}\cap O^{M}_{\dot \gamma_{2}(s_{2}),\r})=\Tilde{\iota}(O^{M}_{\dot \gamma_{1}(s_{1}),r})\cap \Tilde{\iota}(O^{M}_{\dot \gamma_{2}(s_{2}),\r})$. Furthermore, since $((\Tilde{N},\tau_q),\Tilde{\iota})$ is a candidate for a future boundary extension, Remark \ref{rem:OM_def_and_open} implies that $\Tilde{\iota}(O^{M}_{\dot \gamma_{1}(s_{1}),r})=O^{\Tilde{N}}_{\dot \gamma_{1}(s_{1}),r}\cap\Tilde{\iota}(M)$ and $\Tilde{\iota}(O^{M}_{\dot \gamma_{2}(s_{2}),\rho})=O^{\Tilde{N}}_{\dot \gamma_{2}(s_{2}),\r}\cap\Tilde{\iota}(M)$. This gives us that:
    
    \begin{equation}\label{eq:Hausdorff1}
    (O^{\Tilde{N}}_{\dot \gamma_{1}(s_{1}),r}\cap\Tilde{\iota}(M)) \cap (O^{\Tilde{N}}_{\dot \gamma_{2}(s_{2}),\rho}\cap \Tilde{\iota}(M))=\emptyset
    \end{equation}
    
    It remains to show that \eqref{eq:Hausdorff1} actually implies that also $O^{\Tilde{N}}_{\dot \gamma_{1}(s_{1}),r}\cap O^{\Tilde{N}}_{\dot \gamma_{2}(s_{2}),\r} \cap \overline{\Tilde{\iota}(M)}=O^{\Tilde{N}}_{\dot \gamma_{1}(s_{1}),r}\cap O^{\Tilde{N}}_{\dot \gamma_{2}(s_{2}),\r}=\emptyset $. This follows easily by contradiction. Assume that there exists a point $r\in \overline{\Tilde{\iota}(M)}$ with $r\in O^{\Tilde{N}}_{\dot \gamma_{1}(s_{1}),r} \cap O^{\Tilde{N}}_{\dot \gamma_{2}(s_{2}),\rho}$. Then, openness of $O^{\Tilde{N}}_{\dot \gamma_{1}(s_{1}),r} \cap O^{\Tilde{N}}_{\dot \gamma_{2}(s_{2}),\rho}$ in $\Tilde{N}$ (which we already established with Lemma \ref{lem:Ntilde_regular_candidate}) together with the definition of the closure and a standard topological argument implies that $O^{\Tilde{N}}_{\dot \gamma_{1}(s_{1}),r} \cap O^{\Tilde{N}}_{\dot \gamma_{2}(s_{2}),\rho}\cap \Tilde{\iota}(M) \neq \emptyset $, a contradiction to expression \eqref{eq:Hausdorff1}. So, $p\in O^{\Tilde{N}}_{\dot \gamma_{1},r}$, $q\in O^{\Tilde{N}}_{\dot \gamma_{2},\rho}$ and $O^{\Tilde{N}}_{\dot \gamma_{1}(s_{1}),r} \cap O^{\Tilde{N}}_{\dot \gamma_{2}(s_{2}),\rho}=\emptyset$.
\end{proof}

Lastly, coordinate charts can be defined on $\Tilde{N}$. This process is again analogous to Section \ref{subsec:settheoreticmax}.

\begin{lemma}\label{lem:chartsNtilde}
 Let $(\Na,\ia)$ and $(\Nb,\ib)$ be two regular future g-boundary extensions of a strongly causal $C^2$ spacetime $(M,g)$, $\Tilde{N}\coloneqq (\Na\sqcup \Nb)/\sim$ and $p\in \Tilde{N}$. Then there exists an open neighborhood $U$ of $p\in \Tilde{N}$ and a homeomorphism $x:U\to x(U)\subset [0,\infty)\times \mathbb{R}^{d-1}$ onto an open subset in the half space. In particular, $\Tilde{N} $ is a topological manifold with boundary. 
\end{lemma}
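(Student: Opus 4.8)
The plan is to mirror the construction of charts in the proof of Proposition~\ref{prop:Nmax_is_reg_future_g_boundary_ext}, now with the quotient map $\Tilde{\pi}:\Na\sqcup\Nb\to\Tilde{N}$ playing the role of $\pi$. The two structural facts that made that argument work carry over without change: $\Tilde{\pi}$ is an open map (Lemma~\ref{lem:pitilde_open}), and its restriction $\Tilde{\pi}|_{N_a}$ to either summand $a\in\{\alpha,\beta\}$ is injective, since points lying in the same $N_a$ are identified by $\sim$ only when they coincide (cf.\ the remark following Definition~\ref{def:simarbitrary}). Together with the fact that each $N_a$ is itself a topological manifold with boundary, these allow one to transport any chart on $N_a$ down to an open chart on $\Tilde{N}$.

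Concretely, given $p\in\Tilde{N}$, I would first pick $a\in\{\alpha,\beta\}$ and a point $p_a\in N_a$ with $\Tilde{\pi}(p_a)=p$, which exists by surjectivity of $\Tilde{\pi}$. Since $N_a$ is a topological manifold with boundary there is a chart $(U_a,x_a)$ around $p_a$, where $x_a$ is a homeomorphism onto an open subset of $[0,\infty)\times\mathbb{R}^{d-1}$: if $p_a\in\iota_a(M)=\mathrm{int}(N_a)$ one takes $x_a$ with image in the open half space $(0,\infty)\times\mathbb{R}^{d-1}\cong\mathbb{R}^d$, while if $p_a\in N_a\setminus\iota_a(M)$ one takes a genuine boundary chart. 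Because $\Tilde{\pi}$ is open and $N_a$ is open in the disjoint union, $U:=\Tilde{\pi}(U_a)$ is an open neighborhood of $p$, and $\Tilde{\pi}|_{N_a}:U_a\to U$ is a continuous open bijection, hence a homeomorphism. I may therefore define
\begin{equation*}
    x:U\to x(U)\subset[0,\infty)\times\mathbb{R}^{d-1},\qquad x:=x_a\circ(\Tilde{\pi}|_{N_a})^{-1}.
\end{equation*}

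It then only remains to verify that $x$ is a homeomorphism onto its image. For continuity, note that for any open $W\subset x(U)$ one has $x^{-1}(W)=\Tilde{\pi}(x_a^{-1}(W))$, which is open because $x_a$ is continuous and $\Tilde{\pi}$ is an open map; for continuity of the inverse, $x^{-1}=\Tilde{\pi}|_{N_a}\circ x_a^{-1}$ is a composition of continuous maps, and $x(U)=x_a(U_a)$ is open in the half space by construction. As in Proposition~\ref{prop:Nmax_is_reg_future_g_boundary_ext}, no chart-compatibility conditions need checking since we only claim a topological manifold structure.

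The only points requiring care here are the well-definedness of $x$ — which rests precisely on the injectivity of $\Tilde{\pi}|_{N_a}$ — and the bookkeeping distinguishing interior points (whose charts land in the open half space) from boundary points $p\in\Tilde{N}\setminus\Tilde{\iota}(M)$ (which receive genuine boundary charts); there is otherwise no substantive obstacle, as the two genuinely hard facts, openness of $\Tilde{\pi}$ and Hausdorffness, were already established in Lemmas~\ref{lem:pitilde_open} and~\ref{lem:NtildeHausdorff}. Finally, the concluding assertion that $\Tilde{N}$ is a topological manifold with boundary follows by combining these half-space charts with Hausdorffness (Lemma~\ref{lem:NtildeHausdorff}) and second countability, the latter being a consequence of $((\Tilde{N},\tau_q),\Tilde{\iota})$ being a candidate for a regular future g-boundary extension (Lemma~\ref{lem:Ntilde_regular_candidate}) together with Lemma~\ref{lem:AA2}.
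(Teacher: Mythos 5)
Your proposal is correct and follows essentially the same route as the paper's own proof: lift $p$ to a chart $(U_a,x_a)$ in one of the summands, use openness of $\Tilde{\pi}$ (Lemma~\ref{lem:pitilde_open}) together with injectivity of $\Tilde{\pi}|_{N_a}$ to see that $\Tilde{\pi}|_{N_a}:U_a\to\Tilde{\pi}(U_a)$ is a homeomorphism, and push the chart down via $x=x_a\circ(\Tilde{\pi}|_{N_a})^{-1}$. Your additional remarks on well-definedness, the interior/boundary bookkeeping, and the assembly with Hausdorffness and second countability are consistent with how the paper organizes these facts across Lemmas~\ref{lem:Ntilde_regular_candidate} and~\ref{lem:NtildeHausdorff}.
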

\begin{proof}
     Let $p\in \Tilde{N}$, w.l.o.g.\ $p\in \Tilde{\pi}(\Na)$, and choose $p_{\alpha}\in \Na$ such that $p=\Tilde{\pi}(p_{\alpha})$. Let $(U_{\alpha},x_{\alpha})$ be a coordinate chart around $p_{\alpha}$ in $\Na$. 
     As $\Tilde{\pi}$ is an open map, $\Tilde{\pi}(U_{\alpha})$ is an open neighborhood of $p$ in $\Tilde{N}$. Then, on $\Tilde{N}$ we define the map $x : \Tilde{\pi}(U_{\alpha})\rightarrow [0,\infty)\times \mathbb{R}^{d-1}, p\mapsto x_{\alpha}((\Tilde{\pi}\big|_{\Na})^{-1}(p))$. As the composition of injective, continuous and open maps this map is a homeomorphism onto the open set $x(\Tilde{\pi}(U_{\alpha}))\subset [0,\infty)\times \mathbb{R}^{d-1}$. 
\end{proof}

Now we can collect all the results we have shown in order to prove our second main 
Theorem:

\begin{theorem}\label{theo:uniquemaximalextension}
    Let $(M,g)$ be a strongly causal $C^2$ spacetime. If $(M,g)$ is regular future g-boundary extendible and does not contain any intertwined future directed timelike geodesics, then there exists a maximal  regular future g-boundary extension in the sense of Definition \ref{def:Maximal_regular_extension}. 
\end{theorem}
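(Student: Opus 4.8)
The plan is to combine the set-theoretic maximal element produced by Zorn's Lemma with the two-extension gluing construction developed in Lemmas \ref{lem:pitilde_open}--\ref{lem:chartsNtilde}. Since $(M,g)$ is regular future g-boundary extendible, $\Iset\neq\emptyset$, so Corollary \ref{cor:existence_of_set_theoretic_max} furnishes a set-theoretic maximal element $[\Nmax]\in\Iset$ with a chosen representative $(\Nmax,\ima)$. My goal is to upgrade this to a maximal element in the strong sense of Definition \ref{def:Maximal_regular_extension}, i.e.\ to show that \emph{every} regular future g-boundary extension $(N,\iota)$ satisfies $[N]\leq[\Nmax]$; by Remark \ref{rem:maxextension}(1) this simultaneously yields the uniqueness advertised in the section title.

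First I would fix an arbitrary regular future g-boundary extension $(N,\iota)$ and form the quotient $\Tilde{N}:=(\Nmax\sqcup N)/\sim$ using the equivalence relation of Definition \ref{def:simarbitrary} together with the induced map $\Tilde{\iota}$ from \eqref{eq:iota_tilde}. Because $(M,g)$ contains no intertwined future directed timelike geodesics, the (weaker) hypotheses of Lemmas \ref{lem:pitilde_open} and \ref{lem:NtildeHausdorff} are automatically satisfied, so the quotient map $\Tilde{\pi}$ is open and $\Tilde{N}$ is Hausdorff. Combined with Lemma \ref{lem:Ntilde_regular_candidate} (which makes $((\Tilde{N},\tau_q),\Tilde{\iota})$ a candidate for a regular future g-boundary extension and hence, via Lemma \ref{lem:AA2}, second countable) and Lemma \ref{lem:chartsNtilde} (the manifold-with-boundary charts), this shows that $\Tilde{N}$ is itself a regular future g-boundary extension, so $[\Tilde{N}]\in\Iset$.

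The key comparison step is to observe that both factors embed into the glue. Since $\Tilde{\pi}$ is open, continuous, and injective on each of $\Nmax$ and $N$ (two points of a single factor are $\sim$-identified only if they coincide, by the remark following Definition \ref{def:simarbitrary}), the restrictions $\Tilde{\pi}|_{\Nmax}$ and $\Tilde{\pi}|_{N}$ are topological embeddings. Moreover, by the very definition $\Tilde{\iota}=\Tilde{\pi}\circ\ima=\Tilde{\pi}\circ\iota$, so these embeddings are compatible with the structure maps in the sense of Definition \ref{def:order on Ns}; hence $[\Nmax]\leq[\Tilde{N}]$ and $[N]\leq[\Tilde{N}]$. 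Maximality of $[\Nmax]$ then forces $[\Nmax]=[\Tilde{N}]$, and transitivity of $\leq$ yields $[N]\leq[\Tilde{N}]=[\Nmax]$. As $(N,\iota)$ was arbitrary, $(\Nmax,\ima)$ is a maximal regular future g-boundary extension.

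The substantive content has already been discharged in the preceding lemmas, so in the proof of the theorem itself the only points demanding care are (a) checking that the canonical projections restrict to genuinely \emph{compatible} embeddings so that the ordering $\leq$ applies, and (b) confirming that the set-theoretic maximality hypothesis may legitimately be invoked on $[\Tilde{N}]$ — which is permissible precisely because $\Tilde{N}$ was first shown to lie in $\Iset$. The hard part of the whole argument is thus upstream, namely the openness of $\Tilde{\pi}$ (Lemma \ref{lem:pitilde_open}) and the Hausdorffness of $\Tilde{N}$ (Lemma \ref{lem:NtildeHausdorff}), both of which fail without the exclusion of intertwined geodesics; this is exactly the obstruction anticipated in the discussion of the Misner and Taub--NUT examples.
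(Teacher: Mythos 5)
Your proposal is correct and follows essentially the same route as the paper: both hinge on gluing two regular future g-boundary extensions via the quotient of Definition \ref{def:simarbitrary} and on Lemmas \ref{lem:pitilde_open}--\ref{lem:chartsNtilde} to certify that the glued object is again a regular future g-boundary extension, with set-theoretic maximality then forcing the desired collapse. The only (harmless, arguably slightly cleaner) difference is in the endgame: the paper glues two set-theoretic maximal elements, shows they coincide, and invokes Remark \ref{rem:maxextension}(3), whereas you glue $\Nmax$ directly with an arbitrary extension $(N,\iota)$ and read off $[N]\leq[\Nmax]$ without that extra step.
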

\begin{proof}
Let  $[N]_{\mathrm{max}}$ and $[N]'_{\mathrm{max}}$ be two set theoretic maximal elements. Choose representatives $(\Na,\ia)$ and $(\Nb,\ib)$. Let $\Tilde{N}= (\Na\sqcup \Nb)/\sim$. Collecting the previous results of this section, if $(M,g)$ does not contain any intertwined future directed timelike geodesics, $\Tilde{N}$ is a topological manifold with boundary, any point in $\pr\Tilde{N}$ is the limit point of a geodesic in $M$ and its 
collection of timelike thickenings defines a topology on $\Tilde{N}$ which agrees with the quotient topology. Finally, condition 2. of Definition \ref{def:regular-g-boundary-ext} is automatically satisfied as $\Tilde{\pi}$ is an open map. Hence, $\Tilde{N}$ is a regular 
future g-boundary extension.  We may thus consider $[\Tilde{N}]$. Since $\Tilde{\pi}:\Na\sqcup \Nb \to \tilde{N}$ is an open map (by Lemma \ref{lem:pitilde_open}, again noting that $(M,g)$ does not contain any intertwined future directed timelike geodesics), the composition $\Na\hookrightarrow \Na\sqcup \Nb  \overset{\Tilde{\pi}}{\to} \Tilde{N}$ is an embedding and we have $[N]_{\mathrm{max}}=[\Na] \lesssim [\Tilde{N}]$, so set-theoretic maximality guarantees 
$[N]_{\mathrm{max}} = [\Tilde{N}]$. Since we can do the same argument for the composition $\Nb\hookrightarrow \Na\sqcup \Nb  \overset{\Tilde{\pi}}{\to} \Tilde{N}$, we have $[N]_{\mathrm{max}} = [\Tilde{N}]=[N]_{\mathrm{max}}' $. Accordingly, there must be a unique set theoretic maximal element $[N]_{\mathrm{max}}$ and hence by Remark \ref{rem:maxextension} any representative for $[N]_{\mathrm{max}}$ is a maximal regular future g-boundary extension.
\end{proof}

\begin{remark}\label{rem:dezorn}
    Sections \ref{subsec:settheoreticmax} and \ref{subsec:actualmax} are actually more independent of each other than they might seem at first reading. This is important as Corollary \ref{cor:existence_of_set_theoretic_max} relies heavily on Zorn's Lemma while Theorem \ref{theo:uniquemaximalextension} doesn't. Moreover, if we assumed that $(M,g)$ contains no intertwined timelike geodesics from the beginning, our proof would not need to resort to Zorn's Lemma (and, in particular, we would be able to not only prove the existence of a unique maximal extension, but also to construct it). In the following, let $\Na$ and $\Nb$ be two arbitrary regular future g-boundary extensions of a strongly causal spacetime $(M,g)$:
    \begin{itemize}
        \item In the proof of Theorem \ref{theo:uniquemaximalextension} we actually show that if $(M,g)$ contains no intertwined timelike geodesics, gluing together $\Na$ and $\Nb$ (and identifying points appropriately as in Definition \ref{def:simarbitrary}) yields a 'larger' regular future g-boundary extension $\Tilde{N}$.
        \item However, the previous conclusion can also be generalized to an arbitrarily large number of regular future g-boundary extension, assuming again that $(M,g)$ has no intertwined timelike geodesics. Let $\mathcal{I}$ be the set of regular future g-boundary extensions of $(M,g)$. Then, $\Tilde{N}\coloneqq(\bigsqcup_{\alpha\in \mathcal{A}} \Na)/\sim$ is a regular future g-boundary extension (by the proof of Theorem \ref{theo:uniquemaximalextension} and using that $\Tilde{N}$ is second countable even for uncountable unions as it is a candidate for a regular future g-boundary extension). As any $\Na\in\mathcal{I}$ can be embedded in $\Tilde{N}$, it is clear that this is the unique maximal regular future g-boundary extension.
        \item The 'dezornified' version of Theorem \ref{theo:uniquemaximalextension} is similar to Sbierski's dezornification \cite{SbierskiMGHD} of the proof of the existence of a unique maximal globally hyperbolic development of a given initial data set by Choquet-Bruhat and Geroch. In the first place, the statement that $\Tilde{N}= (\Na\sqcup \Nb)/\sim$ is a regular future g-boundary extension is similar to Theorem 2.7 in \cite{SbierskiMGHD}\footnote{This theorem states that for any two globally hyperbolic developments of the same initial data, there exists a 'larger' globally hyperbolic development in which they both isometrically embed and which is constructed by gluing them together along the maximal common globally hyperbolic development.}. Secondly, the strategy of using the previous result to glue all regular future g-boundary extensions together in order to construct the maximal regular future g-boundary extension is very similar to Theorem 2.8 in \cite{SbierskiMGHD}, which states the existence of a unique maximal globally hyperbolic development\footnote{Note that in the proofs of Theorem 2.7 and 2.8 in \cite{SbierskiMGHD} a very important step is to identify points lying in the 'common globally hyperbolic development' of two arbitrary globally hyperbolic developments. This identification is analogous to our Definition \ref{def:simarbitrary}, where the limit points (in the different $\Na$'s) of the same inextendible timelike geodesic (in M) are identified.}.
    \end{itemize}
\end{remark}

Let us remark that we can also formulate an ''if and only if'' version of Theorem \ref{theo:uniquemaximalextension} as follows: A regular future g-boundary extendible strongly causal $C^2$ spacetime $(M,g)$ has a maximal future g-boundary extension if and only if $(M,g)$ does not admit any intertwined timelike geodesics $\gamma_1$ and $\gamma_2$ such that $\iota_1\circ \gamma_1$ acquires an endpoint in some regular future g-boundary extension $(N_1,\iota_1)$  and $ \iota_2\circ\gamma_2$ acquires an endpoint in some other regular future g-boundary extension $(N_2,\iota_2)$. Clearly this is sufficient for the proof of Theorem \ref{theo:uniquemaximalextension} to go through. For the ''only if'' part we note first that 
if $\Nmax$ is a maximal regular future g-boundary extension, then there cannot exist any intertwined timelike geodesics $\gamma_1$ and $\gamma_2$ in $M$ such that $\ima\circ \gamma_1$ and $\ima\circ \gamma_2$ have future endpoints in $\Nmax$: Assume that $\gamma_1$ and $\gamma_2$ are two geodesics in $M$ such that $\ima\circ \gamma_1$ and $\ima\circ \gamma_2$ have future endpoints $p_1$ respectively $p_2$ in $\Nmax$, then either
\begin{itemize}
    \item $p_1=p_2$, in which case $\gamma_1$ and $\gamma_2 $ will satisfy  condition (i) in Definition \ref{defintertwined} because $O^{\Nmax}_{\dot{\gamma}_1(0),r}$ is an open neighborhood of $p_2$ and vice versa. 
    \item $p_1\neq p_2$, in which case Hausdorffness of $\Nmax$ together with $\{O^N_{\dot{\gamma}_1(1-\frac{1}{n}),\frac{1}{m}}:n,m\in\mathbb{N}\} $ and $\{O^N_{\dot{\gamma}_2(1-\frac{1}{n}),\frac{1}{m}}:n,m\in\mathbb{N}\} $ being neighborhood bases at $p_1$ resp.\ $p_2$ allows us to find disjoint  $O^N_{\dot{\gamma}_1(1-\frac{1}{n_1}),\frac{1}{m_1}}$ and   $O^N_{\dot{\gamma}_2(1-\frac{1}{n_2}),\frac{1}{m_2}}$. 
    This immediately implies that $\gamma_1$ and $\gamma_2 $ will satisfy  condition (ii) in Definition \ref{defintertwined}.
\end{itemize}
So in both cases $\gamma_1$ and $\gamma_2$ are not intertwined. Since any regular future g-boundary extension embeds into $\Nmax$ the existence of a maximal regular future g-boundary extension further implies that $M$ cannot have any intertwined timelike geodesics $\gamma_1$ and $\gamma_2$ in $M$ such that $\iota_1\circ \gamma_1$ acquires an endpoint in some regular future g-boundary extension $(N_1,\iota_1)$  and $ \iota_2\circ\gamma_2$ acquires an endpoint in some other regular future g-boundary extension $(N_2,\iota_2)$. All of this is in line with the corresponding converse statement for  conformal boundary extension  in \cite[Thm.\ 4.5]{Chrusciel}.  However, it is at this point unclear to us if this would already imply somehow that 
$(M,g)$ cannot contain any intertwined future directed timelike geodesics at all.

\section{Discussion}\label{sec:discussion}

We already discussed some of the limitations of our approach (such as only getting regular future g-boundary extensions and not necessarily spacetime extensions and requiring quite a bit of 'hidden' regularity), 
so we would like to end with several possibilities and open questions for extending our work and potential applications thereof. Most of these have been mentioned throughout the paper already but we will collect them here and give a little more detail.

First, let us note that even though our results are only about the existence of a maximal future g-boundary extension this may have some consequences for $C^2$ spacetime extensions as well because of the compatibility results in Section \ref{sec:compatibility}. For instance, if for a given $C^2$ spacetime which is globally hyperbolic, past timelike geodesically complete and contains no intertwined timelike geodesics one could show that the maximal future g-boundary extension has non-compact and connected boundary, then maximality and invariance of domain imply that no $C^2$ spacetime extension can have compact $\pr^+\iota(M)$. \\[-0.5em]

Since $C^2$ spacetime extensions are anyways rather nice and well understood an important follow up question would be how far one can lower the regularity of a spacetime extension $(\mext,\gext)$ in Section \ref{sec:compatibility} while retaining its conclusions. That $\gext \in C^{1,1}$ is sufficient should be very straightforward to check. If $\gext \in C^1$, openness of all $O^N_{X,r}$ is still expected to be unproblematic and, with some more work, also the neighborhood property of $\{O^N_{\dot{\gamma}(1-\frac{1}{n}),\frac{1}{m}}:n,m\in \mathbb{N}\} $ should go through. However Sbierski's result guaranteeing the existence of timelike geodesics reaching any point in $\pr^+\iota(M)$ (cf. \cite[Lem.\ 3.1]{Sbierski2022} resp.\ Lemma \ref{lem:geodesic_to_boundary}) does at least on first read appear to really rely on facts about geodesics which fail if $\gext$ is merely $C^1$. 
 
Similarly, one may ask if it is really necessary to assume global hyperbolicity of $M$ or that $\pr^-\iota(M)=\emptyset$ for compatibility. Regarding global hyperbolicity we note 
 that if we keep $\pr^-\iota(M)=\emptyset$ even without global hyperbolicity we still get a future boundary extension: Since \cite[Thm.\ 2.6]{GallowayLing} establishes that then $\pr^+\iota(M)$ is always an achronal topological hypersurface and a standard argument then produces suitable charts near this hypersurface \cite[Prop.\ 14.25]{Oneill}. The arguments in Lemma \ref{lem:compatibility:regularityOK} should also go through without global hyperbolicity. However global hyperbolicity cannot be dropped from Lemma \ref{lem:geodesic_to_boundary} (i.e.\ \cite[Lem.\ 3.1]{Sbierski2022}), cf.\ \cite[Rem.\ 3.4, 3.]{Sbierski2022}, so we will not obtain a \emph{regular} future g-boundary extension without it.

Regarding the $\pr^-\iota(M)=\emptyset$ assumption one has to wonder whether $\pr^+\iota(M)\cap \pr^-\iota(M)= \emptyset$ (which excludes the obvious counterexample of an open cylinder where the ends are identified in the extension and $\iota(M) \cup \pr^+\iota(M)=\mext$, so is not a manifold with boundary with the subspace topology)  would be enough for the conclusions in Section \ref{sec:compatibility}. Almost all arguments work out nicely except for Lemma \ref{lem:no_further_intersections} and Lemma \ref{lem:compatibility:regularityOK}: Since we no longer have global achronality of $\pr^+\iota(M)$ the equality $O^N_{X,r}=O^{\mathrm{ext},\varepsilon }_{X,r}\cap N$ from the proof of Lemma \ref{lem:compatibility:regularityOK} might fail globally. Indeed one can imagine examples of ''accumulating'' boundaries where $O^{\mathrm{ext},\varepsilon}_{X,r}\cap N\neq O^N_{X,r}$ for all $\varepsilon>0$. 
Another approach could be to  abandon the subspace topology induced from $\mext$ and try to just equip $\iota(M)\cup \pr^+\iota(M)$ with the induced topology from the $O_{X,r}^N$ (or even from all $\iota(U)$'s and $\{O^N_{\dot{\gamma}(1-\frac{1}{n}),\frac{1}{m}}:n,m\in \mathbb{N}\}$'s). While this solves some issues (like no longer needing to prove significant parts of Lemma \ref{lem:compatibility:regularityOK}), it invariably introduces others (Hausdorffness, manifold structure, etc.). This is again reminiscent of similar issues in picking a suitable topology in the various boundary constructions such as of course the g-boundary of Geroch \cite{Gerochgboundary} itself, but also the bundle (or b-) boundary construction of Schmidt \cite{Schmidt} or the causal (or c-) boundary introduced by Geroch, Kronheimer and Penrose \cite{GKP}  (although there also idealized endpoints of complete timelike geodesics are attached),  cf.\ e.g.\ \cite{FloresHerreraSanchez}.\\[-0.5em]

Turning towards comparing our results with \cite{Chrusciel} we note that, while our proof of Theorem \ref{theo:uniquemaximalextension} largely follows the same overall strategy as \cite[Thm.\ 4.5]{Chrusciel} of constructing a larger extension from at least two given ones by taking unions and identifying appropriately,  \cite{Chrusciel} does this in the null bundle of $M$ whereas we work directly in the topological manifolds with boundary. 
In contrast to our arguments, \cite{Chrusciel} can work with geodesics up to and including the boundary and in particular has local uniqueness as in \cite[Prop.~3.5]{Chrusciel}. Having analogous tools in our setting would simplify parts of the proofs, e.g., in Lemma \ref{lem:pitilde_open} one could argue with geodesic uniqueness instead of using the  ''no intertwined geodesics'' assumption. On the other hand we do not have to show that our charts at the boundary are compatible nor have to construct a conformal metric that extends to the boundary.  
There is an interesting reformulation of this result in \cite[Thm.~5.3]{Chrusciel}, namely the existence of a unique future conformal boundary extension with
strongly causal boundary which is maximal in the class of future conformal boundary extensions with
strongly causal boundaries. It would be interesting to see if, for some sensible definition of strong causality for our boundaries, an analogous result remains available. Certainly Lemma \ref{lem:pitilde_open} seems amenable to a strong causality/non-imprisoning argument.\\[-0.5em]

Of course the bigger open questions are more conceptual. In Section \ref{sec:compatibility} we have discussed associating a (regular) future (g-)boundary extension to a given spacetime extension. Conversely one could ask if, given a regular future g-boundary extension, there is any hope of characterizing (intrinsically) when one can extend this further to a spacetime extension. Similarly, it is open if one could develop any  (in-)extendibility criteria ensuring the (non-)existence of future g-boundary extensions that do not come from spacetime extendibility and compatibility.
 In this sense the present article  can be considered a first starting point proposing a concept of regular future g-boundary extensions which, excluding pathological behavior of timelike geodesics in the original spacetime, naturally admits unique maximal elements, and many open questions remain to be explored.

\end{document}